\documentclass[a4paper,UKenglish,cleveref,autoref,thm-restate]{lipics-v2021}
\usepackage{tikz}

\usetikzlibrary{calc}

\pdfoutput=1 
\hideLIPIcs{}

\usepackage[T1]{fontenc}
\usepackage{amsmath,amsthm,mathtools}
\usepackage{thmtools}
\usepackage{tablefootnote}
\usepackage{cite}
\usepackage{hyperref}
\usepackage{nameref}
\usepackage{crossreftools}
\usepackage{microtype}
\usepackage{anyfontsize}
\usepackage{comment}
\usepackage{csquotes}

\DeclareUnicodeCharacter{0301}{\'a}
\DeclareUnicodeCharacter{2212}{-}
\DeclareUnicodeCharacter{03B5}{\ensuremath{\varepsilon}}

\usepackage{textcomp}
\usepackage{newtxmath}
\usepackage[scr=rsfso]{mathalfa}
\usepackage{bm}

\usepackage[caps]{complexity}

\usepackage{todonotes}

\renewcommand\emptyset\varnothing{}
\renewcommand\epsilon\varepsilon{}

\DeclarePairedDelimiter\floor{\lfloor}{\rfloor}
\DeclarePairedDelimiter\ceil{\lceil}{\rceil}

\let\E\relax
\DeclareMathOperator*{\E}{\mathbb{E}}
\DeclareMathOperator{\Forb}{Forb}
\DeclareMathOperator{\ar}{ar}
\DeclareMathOperator{\Hom}{Hom}

\renewcommand{\bar}[1]{\overline{#1}}

\DeclareMathOperator{\MMSNP}{MMSNP}
\DeclareMathOperator{\PMMSNP}{PMMSNP}
\DeclareMathOperator{\CSP}{CSP}
\DeclareMathOperator{\PCSP}{PCSP}

\newcommand{\family}[1]{\mathcal{#1}}
\newcommand{\templ}[2]{\family{F}^{(#1)}_{#2}}
\newcommand{\urel}[3]{R^{(#1)}_{#2:#3}}

\newcommand{\ms}[2]{\mathcal{S}#2}
\newcommand{\tuple}[1]{\bar{#1}}

\newcommand{\yes}{\textsf{Yes}}
\newcommand{\no}{\textsf{No}}

\newcommand{\slice}[1]{\mathcal{S}#1}
\newcommand{\ith}[2]{\tuple{#1}^{(#2)}}

\newcommand{\kinl}[2]{\ensuremath{#1\textrm{-}\mathsf{in}\textrm{-}#2}}

\usepackage{xparse}
\ExplSyntaxOn{}

\newcommand{\LO}{\mathsf{LO}}
\newcommand{\opNAE}{\mathsf{NAE}}
\NewDocumentCommand{\NAE}{o m}
 {
  \ensuremath{
    \IfNoValueTF{#1}
      {\opNAE\sb{#2}}
      {\opNAE\sb{#2}\sp{(#1)}}
  }
 }

\newcommand{\opInf}{\mathbf{Inf}}
\NewDocumentCommand{\lowInf}{o >{\SplitArgument{2}{,}}m}
 {
   \__lowInf:nnnn { #1 } #2
 }
\cs_new:Nn \__lowInf:nnnn
 {
   \ensuremath{
     \IfNoValueTF{#1}
        {{\opInf\sb{#3}\sp{\le #4}}{\left[#2\right]}}
        {{\opInf\sb{#3}\sp{\le #4}}{\left[#2; #1\right]}}
   }
 }
\NewDocumentCommand{\noisyInf}{o >{\SplitArgument{2}{,}}m}
 {
   \__noisyInf:nnnn { #1 } #2
 }
\cs_new:Nn \__noisyInf:nnnn
 {
   \ensuremath{
     \IfNoValueTF{#1}
        {{\opInf\sb{#3}\sp{(#4)}}{\left[#2\right]}}
        {{\opInf\sb{#3}\sp{(#4)}}{\left[#2; #1\right]}}
   }
 }

 \ExplSyntaxOff{}

\title{Towards infinite PCSP:\@ a dichotomy for monochromatic cliques}

\author{Demian {Banakh}}%
{Faculty of Mathematics and Computer Science, Jagiellonian University, Krak{\'o}w, Poland \and
Doctoral School of Exact and Natural Sciences, Jagiellonian University, Krak{\'o}w, Poland \and
\url{https://dembanakh.github.io/}}%
{demian.banakh@doctoral.uj.edu.pl}%
{https://orcid.org/0009-0008-7159-3735}
{This research was funded in whole or in part by the National Science Centre,
Poland under the Weave-UNISONO call in the Weave programme 2021/03/Y/ST6/00171
and the Polish National Agency for Academic Exchange under Zawacka programme.
For the purpose of Open Access, the author has applied a CC-BY public copyright
licence to any Author Accepted Manuscript (AAM) version arising from this
submission.}

\author{Alexey {Barsukov}}%
{Department of Algebra, Faculty of Mathematics and Physics, Charles University, Prague, Czechia \and
\url{https://abarsukov.github.io/}}%
{alexey.barsukov@matfyz.cuni.cz}%
{https://orcid.org/0000-0001-7627-4823}
{Funded by the European Union (ERC, POCOCOP, 101071674). Views and opinions
expressed are however those of the author(s) only and do not necessarily reflect 
those of the European Union or the European Research Council Executive Agency.
Neither the European Union nor the granting authority can be held responsible
for them.}

\author{Tamio-Vesa {Nakajima}}%
{Faculty of Mathematics and Informatics, Philipps University, Marburg, Hesse, Germany \and
\url{https://tamionv.ro/}}
{nakajima@uni-marburg.de}
{https://orcid.org/0000-0003-3684-9412}{}

\authorrunning{D. Banakh, A. Barsukov, and T.-V. Nakajima}

\Copyright{Demian Banakh and Alexey Barsukov and Tamio-Vesa Nakajima}
\ccsdesc[500]{Theory of computation~Constraint and logic programming}
\ccsdesc[500]{Theory of computation~Problems, reductions and completeness}
\ccsdesc[500]{Theory of computation~Complexity theory and logic}
\ccsdesc[500]{Mathematics of computing~Combinatorics}
\ccsdesc[500]{Mathematics of computing~Graph theory}

\keywords{Promise Constraint Satisfaction Problem, MMSNP, Approximation Algorithms, Rainbow Colouring}

\relatedversion{}

\acknowledgements{We would like to thank Andrei Krokhin and Amey Bhangale for
useful discussions and the anonymous reviewers for their valuable suggestions.
Furthermore, we sincerely thank the organisers of the
\href{https://cspworldcongress.org/2025/}{2025 CSP World Congress}, where this
project was initiated.}

\nolinenumbers{}

\EventEditors{Claudia Faggian and Joost-Pieter Katoen}
\EventNoEds{2}
\EventLongTitle{41st Annual Symposium on Logic in Computer Science (LICS 2026)}
\EventShortTitle{LICS 2026}
\EventAcronym{LICS}
\EventYear{2026}
\EventDate{July 20--23, 2026}
\EventLocation{Lisbon, Portugal}
\EventLogo{}
\SeriesVolume{380}
\ArticleNo{55}

\begin{document}

\maketitle

\begin{abstract}
The logic MMSNP is a well-studied fragment of Existential Second-Order logic
that, from a computational perspective, captures finite-domain Constraint
Satisfaction Problems (CSPs) modulo polynomial-time reductions. At the same
time, MMSNP contains many problems that are expressible as $\omega$-categorical
CSPs but not as finite-domain ones.

We initiate the study of \emph{Promise} MMSNP (PMMSNP), a promise analogue of
MMSNP.\@ We show that every PMMSNP problem is poly-time equivalent to a
(finite-domain) \emph{Promise} CSP (PCSP), thereby extending the classical
MMSNP--CSP correspondence to the promise setting. We then investigate the 
complexity of PMMSNPs arising from forbidding monochromatic cliques, a class
encompassing promise graph colouring problems. For this class, we obtain a full
complexity classification conditional on the \emph{Rich 2-to-1 Conjecture}, a
recently proposed perfect-completeness surrogate of the Unique Games Conjecture.

As a key intermediate step which may be of independent interest, we prove that
it is \NP-hard, under the Rich 2-to-1 Conjecture, to properly colour a uniform
hypergraph even if it is promised to admit a colouring satisfying a certain
technical condition called \emph{reconfigurability}. This proof is an extension
of the recent work of Braverman, Khot, Lifshitz and Minzer (Adv.~Math.~2025). To
illustrate the broad applicability of this theorem, we show that it implies most
of the linearly-ordered colouring conjecture of Barto, Battistelli, and Berg
(STACS 2021).
\end{abstract}

\section{Introduction}\label{sec:intro}

Monotone Monadic SNP (MMSNP) is a large fragment of Existential Second-Order
Logic (ESO) introduced by Feder and Vardi~\cite{federvardi1998}, with the goal
of having a \P{} vs.~\NP{}-complete dichotomy (whereas the whole logic ESO
contains \NP{}-intermediate problems assuming $\P\neq\NP$~\cite{fagin,ladner}).
Fixing a language $\sigma$ consisting of relational symbols $R_1, \ldots, R_k$,
an MMSNP formula is an existential second-order formula of the form
\[
    \Phi = \exists M_1 \dots M_c  \forall x_1 \ldots x_n\  \phi(x_1, \ldots, x_n),
\]
which additionally satisfies the following syntactic restrictions:
\begin{description}
    \item[Monotone:] every $\sigma$-atom $R_i(\dots)$ within $\phi$ has an odd
        number of nested negations before it.
    \item[Monadic:] every relation among $M_1, \ldots, M_c$ is unary i.e.~it is
        a set.
    \item[No inequality:] the quantifier-free part $\phi$ does not contain $=$
        or $\neq$.
\end{description}
\begin{example}\label{ex:two_colouring}
The sentence
\[
    \Phi = \exists A, B ~ \forall x, y~  
    \begin{pmatrix}
        (A(x) \lor B(x)) \\ \land \\ (\lnot A(x) \lor \lnot B(x)) \\ \land \\
        (A(x) \land A(y)) \Rightarrow \lnot E(x, y) \\ \land \\ (B(x) \land B(y)) \Rightarrow \lnot E(x, y)
    \end{pmatrix}
\]
is in MMSNP.\@ In this case, $\sigma=\{E\}$ has one binary relation symbol, so
relational $\sigma$-structures are directed graphs. For a digraph $\mathbb G$,
note that $\mathbb G \vDash \Phi$ (read $\mathbb G$ models $\Phi$, or $\Phi$ is
satisfied in $\mathbb G$) if and only if we can partition the vertex set $G$ of
$\mathbb G$ into two sets $A, B$ so that there is no edge between vertices from
the same set i.e.~when $\mathbb G$ is 2-colourable.
\end{example}

As illustrated by \cref{ex:two_colouring}, there is a natural relation between
MMSNP satisfiability and \emph{Constraint Satisfaction Problems} (CSPs). For a
relational $\sigma$-structure $\mathbb A = (A; R_1^\mathbb A, \ldots,
R_k^\mathbb A)$, $\CSP(\mathbb A)$ is a computational problem, where we are
given a finite $\sigma$-structure $\mathbb I = (I; R_1^\mathbb I, \ldots,
R_k^\mathbb I)$ and must decide the existence of a mapping $h\colon I\to A$ that
preserves every relation $R_i\in\sigma$. Such a mapping is called a
\emph{homomorphism} and is denoted by $h\colon \mathbb I\to \mathbb A$. Whenever
$|A| < \infty$, we are able to construct, similarly to \cref{ex:two_colouring},
a sentence $\Phi$ in MMSNP such that $\mathbb I\to\mathbb A$ if and only if
$\mathbb I\vDash \Phi$, for every input $\mathbb I$.

Feder and Vardi showed that, although MMSNP is strictly more expressive than
finite-domain CSP, for every MMSNP sentence $\Phi$ there exists a finite
relational structure $\mathbb{A}$ such that $\MMSNP(\Phi)$ is equivalent to
$\CSP(\mathbb{A})$ under randomised poly-time reductions; they also conjectured
a \P{} vs.~\NP{}-complete dichotomy for both classes. These reductions were
later derandomised by Kun~\cite{kun2013}.

The conjecture of Feder and Vardi about CSPs inspired an active research
programme. It culminated two decades later, when Bulatov and Zhuk independently
proved that CSPs admit a $\P$ vs.~$\NP$-complete dichotomy, using algebraic
tools~\cite{bulatov,zhuk_conference,zhuk}. This automatically implied a similar
dichotomy for $\MMSNP$s.\ Shortly thereafter, Bodirsky, Madelaine, and
Mottet~\cite{bodirsky_madelaine_mottet_conf,bodirsky_madelaine_mottet} showed
that the complexity of MMSNP problems also depends on the algebraic properties
of corresponding ($\omega$-categorical) CSPs, thus confirming the conjecture of
Bodirsky and Pinsker~\cite{bodirsky_pinsker_conjecture} on MMSNP problems.\@

\paragraph*{The logic of forbidden patterns}

Despite the fact that MMSNP is defined in terms of syntactic restrictions
applied to ESO sentences, it has an equivalent ``combinatorial'' definition
called \emph{forbidden pattern problems}~\cite{madelaine_forbidden_patterns}.
Each such problem is described by a finite family $\mathcal F$ of
(element-)coloured connected relational structures $(\mathbb F,f)$ called
\emph{forbidden patterns}, where the mapping $f\colon F\to[c]$ stands for a
colouring of the elements of $\mathbb F$ with  $c$-many colours, for some
$c\in\mathbb N$. For a given relational structure $\mathbb I$, the question is
to decide whether the elements of $\mathbb I$ are colourable in such a way that
there is no (colour respecting) homomorphism from a  forbidden pattern of
$\mathcal F$. A coloured structure that avoids a homomorphism from a forbidden
pattern to itself is called \emph{$\mathcal F$-free}. For example, the problem
of 2-colouring a graph so as to avoid all monochromatic triangles belongs to
MMSNP;\@ a drawing of the family $\mathcal{F}$ for this problem can be found in
\cref{fig:no_mono_tri}. Every forbidden pattern problem is clearly in MMSNP and
every problem in MMSNP is a finite union of forbidden pattern
problems~\cite[Corollary 13]{madelaine_forbidden_patterns}. Taking into account
this equivalence and that the problems studied in this paper are much easier to
formulate in the framework of forbidden patterns, we will stick to this notation
further on. So, we also denote by $\MMSNP(\mathcal{F})$ the forbidden pattern
problem associated with $\mathcal{F}$.

\paragraph*{MMSNP and finite CSP}

As mentioned above, Feder and Vardi~\cite{federvardi1998} demonstrated that each
MMSNP is equivalent to some finite-domain CSP.\@ In fact, their construction is
quite straightforward. They define the CSP template as follows: the domain is
the set $[c]$ of colours used in $\mathcal F$; every forbidden pattern $(\mathbb
F, f) \in \mathcal F$ yields an $|F|$-ary relation in the template that consists
of all colourings $g\colon F\to [c]$ that do not belong to $\mathcal F$. The
reduction from $\MMSNP(\mathcal F)$ to this CSP then follows in a natural way:
given $\mathbb I$, we produce $\mathbb I'$ --- an instance of our CSP --- by
listing, for each forbidden pattern $(\mathbb F, f) \in \mathcal F$, all
homomorphisms from $\mathbb F$ to $\mathbb I$.

\begin{example}[No monochromatic triangle]\label{ex:no_mono_tri}
Suppose that there are just two colours: $b,r$; and that $\templ{2}{3}$ consists
of two copies of $\mathbb K_3$ --- one is coloured fully in $b$, the other fully
in $r$ (cf.~\cref{fig:no_mono_tri}). The $\sigma$-structure associated with
$\templ{2}{3}$ has domain $\{b,r\}$ and one symmetric ternary relation
${\{b,r\}}^3\setminus \{(b,b,b),(r,r,r)\}$; thus the corresponding CSP template
is just the ternary \emph{Not All Equal} relation ($\NAE{3}$). The reduction
from our MMSNP to the corresponding CSP simply applies this constraint to every
triangle (and every loop) in the input graph.
\end{example}

\begin{figure}
    \centering
    \begin{tikzpicture}[
      scale=1,
      every node/.style={circle, draw, minimum size=5.5mm, inner sep=0pt}
    ]

      \def\s{1.8}
      \pgfmathsetmacro{\h}{\s*sqrt(3)/2}

      \node[fill=blue!30] (A1) at (0,0) {};
      \node[fill=blue!30] (A2) at (\s,0) {};
      \node[fill=blue!30] (A3) at ({\s/2},\h) {};

      \draw (A1) -- (A2) -- (A3) -- (A1);

      \node[fill=red!30] (B1) at (4,0) {};
      \node[fill=red!30] (B2) at ({4+\s},0) {};
      \node[fill=red!30] (B3) at ({4+\s/2},\h) {};

      \draw (B1) -- (B2) -- (B3) -- (B1);

    \end{tikzpicture}
    \caption{Template $\mathcal F_3^{(2)}$ for 2-colouring a graph avoiding monochromatic triangles}\label{fig:no_mono_tri}
\end{figure}

The reduction in the opposite direction relies on the \emph{Sparse
Incomparability Lemma} which states that, for every finite $\sigma$-structure
$\mathbb B$ and $\ell >1$, the complexity of $\CSP(\mathbb B)$ remains the same
even after restricting the input to structures without cycles of length $<\ell$.

\paragraph*{MMSNP and infinite CSP}

Some sentences in MMSNP describe problems that are not finite-domain CSPs: for
example, the \emph{triangle-free} property can be expressed in MMSNP but there
are triangle-free graphs with arbitrarily high chromatic number. (Hence no
finite CSP has as \yes{} instances precisely the set of finite triangle-free
graphs.) The whole class of CSPs of $\omega$-categorical structures is too
large: it contains even coNP-complete and coNP-intermediate
problems~\cite{bodirsky_grohe}.\footnote{However, the question whether there are
$\omega$-categorical structures whose CSP is NP-intermediate is still
open~\cite{BookBodirsky}.} On the other hand, many natural families of
infinite-domain CSPs exhibit a \P{} vs.~\NP-complete dichotomy: reducts of
$(\mathbb Q,<)$~\cite{temporal}, reducts of homogeneous
graphs~\cite{reducts_graphs}, reducts of the random poset~\cite{reducts_posets},
reducts of the random tournament~\cite{reducts_tournament}. All these classes
fall into the scope of the family of \emph{reducts of finitely bounded
homogeneous structures}.

For CSPs of these structures, Bodirsky and Pinsker conjectured a
\P~vs.~\NP{}-complete dichotomy with a similar universal algebraic
characterisation as for finite-domain CSPs,
see~\cite{bodirsky_pinsker_conjecture}. The name for this class of structures
comes from the way they are created. It starts from a \emph{finite} family
$\mathcal F$ of structures, called the \emph{bounds} and the class
$\Forb^\mathrm{emb}(\mathcal F)$ of finite structures that have no substructure
from $\mathcal F$. The class $\Forb^\mathrm{emb}(\mathcal F)$ is supposed to
have the \emph{amalgamation property}: for every $\mathbb B_1,\mathbb B_2$ in
$\Forb^\mathrm{emb}(\mathcal F)$ that have a common substructure $\mathbb A$
witnessed by embeddings $e_i\colon \mathbb A \hookrightarrow \mathbb B_i$ (for
$i\in\{1,2\}$), there is a structure $\mathbb C$ in $\Forb^\mathrm{emb}(\mathcal
F)$ and embeddings $f_i\colon \mathbb B_i\hookrightarrow \mathbb C$ (for
$i\in\{1,2\}$) such that $f_1\circ e_1 = f_2\circ e_2$. Then, using
Fra\"{\i}ss\'e's Theorem~\cite[Section 6.1]{hodges}, we inductively construct a
countably infinite structure $\mathbb H$ (called the \emph{Fra\"{\i}ss\'e-limit}
of $\Forb^\mathrm{emb}(\mathcal F)$), whose finite substructures are exactly
those from $\Forb^\mathrm{emb}(\mathcal F)$ and which is \emph{homogeneous}:
every isomorphism between finite substructures of $\mathbb H$ extends to an
automorphism of $\mathbb H$. The final structure, called a (first-order)
\emph{reduct} of $\mathbb H$, is obtained from $\mathbb H$ by adding new
first-order definable relations and then forgetting some relations. Note that
structures whose CSPs capture MMSNP problems are constructed similarly: for
instance, the class $\Forb^\mathrm{emb}(\mathcal F_3^{(2)})$ of coloured graphs
without monochromatic triangles (\cref{fig:no_mono_tri}) already has the
amalgamation property. So, it is not surprising that every MMSNP problem is
equivalent to a finite union of CSPs of finitely bounded homogeneous
structures~\cite{bodirsky_madelaine_mottet_conf,bodirsky_madelaine_mottet}, and,
as it was mentioned, a dichotomy for them is the special case of the conjecture
of Bodirsky and Pinsker.

In spite of this natural connection of MMSNP to infinite-domain CSPs, our
techniques in this paper will be concerned solely with finite-domain relational
structures.

\paragraph*{Recolouring and containment}

In their paper~\cite{federvardi1998}, Feder and Vardi showed that containment
for problems in MMSNP is decidable. (By containment, we mean deciding whether
all \yes{} instances of one problem are also \yes{} instances of the other
problem.) For two families $\mathcal F_1$ and $\mathcal F_2$ of forbidden
patterns, the containment (of the corresponding MMSNP problems) is characterised
by the existence of a mapping $r\colon[c_1]\to[c_2]$ between the sets of colours
that lifts up to a mapping from $\Forb^\mathrm{hom}(\mathcal F_1)$ to
$\Forb^\mathrm{hom}(\mathcal F_2)$~\cite{madelaine_containment}, where
$\Forb^\mathrm{hom}(\mathcal F)$ is the set of finite vertex-coloured structures
that do not admit a homomorphism from any member of $\mathcal F$. Such a mapping
is called a \emph{recolouring}. We write $\mathcal F_1 \to \mathcal F_2$ if a
recolouring from $\mathcal F_1$ to $\mathcal F_2$ exists.

\paragraph*{Promise MMSNP}

In this paper, we initiate the study of a promise problem variant of MMSNP,
inspired by the development of \emph{Promise CSP}. Namely, a \emph{Promise
MMSNP} (PMMSNP) over a pair of MMSNP templates $\mathcal F, \mathcal G$ such
that $\mathcal F \to \mathcal G$ asks to distinguish whether the input structure
admits an $\mathcal F$-free colouring, or not even a $\mathcal G$-free one. We
let $\PMMSNP(\mathcal F, \mathcal G)$ denote this problem. Note that the fact
that $\mathcal F \to \mathcal G$ implies that the \yes{} and \no{} instances of
this problem are disjoint. Taking into account that every MMSNP is a union of
$\omega$-categorical CSPs, our research can be considered as the first step
towards understanding the complexity of \emph{Promise CSPs} of countably
infinite structures.

We prove that every Promise MMSNP is poly-time equivalent to a Promise CSP
defined by a suitable pair of finite structures. Recall that for a promise
structure $\mathbb A$ and a target structure $\mathbb B$ such that $\mathbb A
\to \mathbb B$, $\PCSP(\mathbb A, \mathbb B)$ is a problem in which, given a
structure $\mathbb I$, the goal is to distinguish whether $\mathbb I \to \mathbb
A$ or $\mathbb I \not\to \mathbb B$. Here as well, the fact that $\mathbb A \to
\mathbb B$ ensures that the \yes{} and \no{} instances are disjoint.

\begin{restatable}{theorem}{mainequivalence}\label{th:mainequivalence}
    If $\mathcal F \to \mathcal G$, then there exist finite structures $\mathbb
    S \to \mathbb T$ such that $\PMMSNP(\mathcal F, \mathcal G)$ and
    $\PCSP(\mathbb S, \mathbb T)$ are poly-time equivalent.
\end{restatable}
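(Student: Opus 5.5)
We follow the Feder--Vardi construction, adapted so that the promise template and the target template share a common relational signature. For a forbidden-pattern family $\mathcal F$ over colours $[c_1]$, we first normalise $\mathcal G$ (over $[c_2]$) and $\mathcal F$ so that they use a common collection of uncoloured pattern structures. Let $\mathbb F_1,\dots,\mathbb F_m$ be the list of all underlying (uncoloured) structures of patterns occurring in $\mathcal F\cup\mathcal G$. For each $i$ we introduce a relation symbol $R_i$ of arity $|F_i|$, after fixing an enumeration of $F_i$. The structure $\mathbb S$ has domain $[c_1]$, and we set
\[
R_i^{\mathbb S}=\{g\colon F_i\to[c_1] : (\mathbb F_i,g)\ \text{admits no homomorphism from a pattern of }\mathcal F\}.
\]
Similarly, $\mathbb T$ has domain $[c_2]$, with the same definition relative to $\mathcal G$. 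The recolouring $r\colon[c_1]\to[c_2]$ witnessing $\mathcal F\to\mathcal G$ preserves every $R_i$: if $(\mathbb F_i,g)$ is $\mathcal F$-free, then $(\mathbb F_i,r\circ g)$ is $\mathcal G$-free by the definition of recolouring. Hence $\mathbb S\to\mathbb T$.

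\emph{Reduction from PMMSNP to PCSP.} Given $\mathbb I$, we build $\mathbb I'$ on the same domain by adding the tuple $(h(x))_{x\in F_i}$ to $R_i^{\mathbb I'}$ for every homomorphism $h\colon\mathbb F_i\to\mathbb I$. This takes polynomial time since the patterns have fixed size. If $\mathbb I$ has an $\mathcal F$-free colouring $\chi$, then every $\chi\circ h$ is $\mathcal F$-free on $\mathbb F_i$, so $\chi\colon\mathbb I'\to\mathbb S$. Conversely, suppose $\chi\colon\mathbb I'\to\mathbb T$ and that some $(\mathbb G,g)\in\mathcal G$ maps to $(\mathbb I,\chi)$ via $h$. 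Then $\mathbb G=\mathbb F_i$ for some $i$, and $h$ was listed, so $\chi\circ h\in R_i^{\mathbb T}$. But $(\mathbb F_i,\chi\circ h)$ receives the identity from $(\mathbb G,g)$, a contradiction. So $\mathbb I$ has a $\mathcal G$-free colouring.

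\emph{Reduction from PCSP to PMMSNP.} This is the main obstacle. An instance $\mathbb J$ of $\PCSP(\mathbb S,\mathbb T)$ uses abstract $R_i$-tuples, which must be encoded as copies of $\mathbb F_i$ glued together. Gluing may create unintended homomorphic images of patterns that straddle several copies, or that arise because tuples with repeated entries force identifications. To handle this, we first apply the Sparse Incomparability Lemma in its promise form: $\PCSP(\mathbb S,\mathbb T)$ remains equally hard on instances of girth larger than a constant $\ell$ exceeding the sizes of all patterns. This holds because the lemma produces, for each $\mathbb J$, a high-girth $\mathbb J^*$ with $\mathbb J^*\to\mathbb J$ and with $\mathbb J\to\mathbb T$ iff $\mathbb J^*\to\mathbb T$; the first direction preserves the \yes{} side.

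We then replace each $R_i$-tuple by a fresh copy of $\mathbb F_i$, identifying elements according to the tuple. To avoid degenerate tuples, we may restrict the patterns to those that are cores with injective enumerations; the non-injective tuples are handled by including the quotient structures among the $\mathbb F_i$ from the start. Since every pattern is connected and the instance has girth at least $\ell$, any homomorphism from a pattern in $\mathcal F\cup\mathcal G$ into the resulting $\mathbb I$ factors through a tree-like union of copies. This is the step requiring most care: we must argue that, by connectivity and high girth, the image lies essentially in a single copy of some $\mathbb F_i$. The full Feder--Vardi/Kun analysis restricts to \emph{biconnected} patterns via a preliminary normalisation of $\mathcal F$ and $\mathcal G$, and we would import that normalisation, checking that it preserves the recolouring $\mathcal F\to\mathcal G$.

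Once this holds, a colouring of $\mathbb I$ is $\mathcal F$-free (resp.\ $\mathcal G$-free) iff its restriction to every copy lies in $R_i^{\mathbb S}$ (resp.\ $R_i^{\mathbb T}$). This gives both implications of the reduction.
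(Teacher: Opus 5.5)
Your templates, the forward reduction by listing homomorphisms, and your use of the Sparse Incomparability Lemma are correct. They coincide with the paper's construction in the special case of a single pair of families; adding the $\mathcal F$-reducts to the signature is harmless.

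\textbf{The gap.} The step you defer is exactly where the paper does extra work, and ``importing the normalisation and checking that it preserves the recolouring'' does not handle it, for three reasons.
\begin{itemize}
\item The normal form the paper uses (Lemma~\ref{lem:normal_form}) is in general a finite \emph{disjunction} $\mathcal F_1\vee\dots\vee\mathcal F_m$, $\mathcal G_1\vee\dots\vee\mathcal G_n$ of normal-form families. This is unavoidable once patterns may be disconnected, which the definition in the preliminaries permits, whereas your locality step assumes connected patterns. So there is no single colour set on which to build $\mathbb S$ and $\mathbb T$.
\item Normalisation changes the colour sets, so $r$ does not in general induce a recolouring between the normalised families. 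What you actually get, from the characterisation of MMSNP containment by recolourings for normal forms, is only that each $\mathcal F_i\to\mathcal G_j$ for some $j$ (Observation~\ref{obs:containment}).
\item Biconnectivity plus girth does not give locality. The oriented $4$-cycle $x_1\to x_2\leftarrow x_3\to x_4\leftarrow x_1$ coloured $1,2,3,2$ is biconnected, but it maps onto $a\to b\leftarrow c$, which straddles two copies of a one-edge reduct and lies in neither. You also need closure under colour-respecting quotients, so that the image of any pattern homomorphism is itself a biconnected pattern and therefore sits inside a single copy. (Your remark about quotients concerns non-injective tuples, which high girth already excludes.)
\end{itemize}

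\textbf{The missing idea} is how the paper encodes the disjunction in one PCSP. It takes $\mathbb S=\mathbb S_1\sqcup\dots\sqcup\mathbb S_m$ and $\mathbb T=\mathbb T_1\sqcup\dots\sqcup\mathbb T_n$. Here $\mathbb T_j$ is the Feder--Vardi template of $\mathcal G_j$ on $[d_j]$ over the common signature of all $\mathcal G$-reducts, full on the reducts not coming from $\mathcal G_j$, and $\mathbb S_i$ is built from $\mathcal F_i$ on $[c_i]$. It also adds an auxiliary binary symbol $\sim$, interpreted as the full relation on each component and made full on every instance. Hence any homomorphism into $\mathbb S$ or $\mathbb T$ must land in a single component, which is exactly the semantics of ``some $\mathcal F_i$-free (resp.\ $\mathcal G_j$-free) colouring exists''. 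The homomorphism $\mathbb S\to\mathbb T$ is assembled from the componentwise recolourings $\mathbb S_i\to\mathbb T_{j(i)}$.

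When you carry this out, keep your choice of constraining \emph{every} relation of $\mathbb S_i$ by $\mathcal F_i$-freeness. The paper's $\mathbb S_i$ leaves the relations of reducts from $\mathcal G_{j'}$, $j'\neq j(i)$, full. Then a copy of such a reduct in $\tau(\mathbb I)$ can carry an $\mathcal F_i$-pattern, so the \yes{} direction of Claim~\ref{claim:tau} really needs your version. With these modifications your argument goes through, relying on the same facts taken from the literature (normal form, containment characterisation) that the paper also uses without proof.
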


This is a direct analogue of the MMSNP--CSP correspondence of Feder and Vardi in
the promise realm. The PCSP complexity landscape remains largely unresolved,
despite significant progress, and even simple subclasses, like Boolean PCSPs or
Approximate Graph Colouring, are not known to admit a full complexity
classification. Therefore the dichotomy for all PMMSNPs is most likely out of
reach for now. For that reason, we restrict our attention to certain particular
types of families $\mathcal F$ and $\mathcal G$.

\paragraph*{Monochromatic cliques}

Consider the naturally arising graph colouring problem of the following form:
given a graph $\mathbb G$, decide whether there is a $c$-colouring of $\mathbb
G$ that avoids monochromatic $k$-cliques. This problem is readily seen to be an
MMSNP, where the forbidden patterns are monochromatic $k$-cliques, coloured with
one of $c$ colours; we denote such a family with $\templ{c}{k}$. This problem is
\NP-hard when $c, k \geq 2$ and $\max\{c,k\} > 2$, along the lines of
\Cref{ex:no_mono_tri}; and it is in P otherwise.

Such problems also exist in the graph theory under the name of \emph{$\mathbb
K_k$-free colouring}~\cite{chudnovsky}. The \emph{$\mathbb K_k$-free chromatic
number} $\chi_k(\mathbb G)$ of a graph $\mathbb G$ is the minimum number $c$
such that $\mathbb G$ has a $c$-colouring avoiding a monochromatic $k$-clique.
The computational complexity of $\mathbb K_k$-free colouring has already been
studied; for example, in~\cite{karpinski}, the authors prove that
$\MMSNP(\templ{c}{3})$ is NP-complete, for every $c\geq 2$.

We note that the problem $\templ{c}{k}$ cannot be expressed as a finite-domain
CSP if $k > 2$. Indeed, for any finite graph $\mathbb H$, there exists a graph
with girth greater than $k$ and chromatic number greater than that of $\mathbb
H$ (by~\cite{erdos_sparse_incomparability}). Any $c$-colouring of this graph is
$\templ{c}{k}$-free, but it does not admit a homomorphism to $\mathbb H$.

On the other hand, it is very easy to construct the infinite-domain CSP
templates for such problems. Clearly, the class of finite $\mathcal
F_k^{(c)}$-free coloured graphs has the amalgamation property, so there is no
need to add any new relations, because the Fra\"{\i}ss\'e-limit of this class is
already a finitely bounded homogeneous structure. The graph $\mathbb H_k^{(c)}$
such that $\CSP(\mathbb H_k^{(c)}) = \MMSNP(\templ{c}{k})$ is obtained from
$c$-many disjoint copies of the universal $\mathbb K_k$-free graph by connecting
every two vertices from distinct copies with an edge. So, the problems that
forbid monochromatic cliques are among the tamest MMSNP problems that are not
finite-domain CSPs.

In the second part of the paper, we investigate the complexity of the promise
version of these graph colouring problems, that is, Promise MMSNPs over a pair
of families $\templ{c}{k}$ and $\templ{d}{\ell}$. In this case, the following
two types of relations will often appear in the corresponding PCSP templates
obtained from \cref{th:mainequivalence}: first, for integers $0 \le k \le \ell$,
the $\ell$-ary relation \kinl{k}{\ell} consists of all tuples $\tuple a \in
\{0,1\}^\ell$ for which $\sum_i a_i = k$. Also, the \emph{Not All Equal}
relation $\NAE[d]{r} \subseteq [d]^r$ consists of all non-constant
tuples;\footnote{We sometimes omit the superscript $(d)$ if the domain is clear
from the context.} a tuple $\tuple a$ is constant if $a_i = a_j$ for all $i, j$.

The following example illustrates that PMMSNPs over forbidden monochromatic
cliques include non-trivial interesting problems in which the promise turns an
otherwise NP-complete MMSNP problem into an efficiently solvable one.

\begin{example}\label{ex:AIP}
    Let us consider the PMMSNP problem with template $(\mathcal{F}_3^{(2)},
    \mathcal{F}_4^{(2)})$. In this problem, we are given a graph $G = (V, E)$
    and must decide whether it has a 2-colouring that avoids monochromatic
    triangles, or not even a 2-colouring that avoids monochromatic 4-cliques. We
    will show that something slightly more strict is possible: if we assume the
    input graph has a 2-colouring avoiding all monochromatic triangles, we will
    find a 2-colouring avoiding all monochromatic 4-cliques. (Thus, if our
    algorithm successfully outputs a 2-colouring avoiding all monochromatic
    4-cliques we may answer \yes{}, otherwise we must answer \no{}.)
    
    To do this, construct a 4-uniform hypergraph whose vertices coincide with
    the vertices of $G$, and whose hyperedges are given by the 4-cliques of $G$.
    Call this hypergraph $H = (V, E')$.\footnote{Note that triangles in $G$ that
    belong to no 4-clique do not affect $H$.} The condition that $G$ has a
    2-colouring avoiding all monochromatic triangles implies that there exists a
    2-colouring $c : V \to \{0, 1\}$ of $H$ that, for every $(x, y, z, t) \in
    E'$, colours half of $x, y, z, t$ with $0$ and the other half with $1$. We
    must output a colouring $d : V \to \{0, 1\}$ that leaves no edge of $H$
    monochromatic. This is just the problem $\PCSP(\kinl{2}{4}, \NAE{4})$,
    well-known from the PCSP literature, which admits a poly-time solution via
    the \emph{Affine Integer Programming (AIP) relaxation}~\cite{BG21_pcsps}.
    Let us quickly present this algorithm.

    By assumption, there exists $c : V \to \{0, 1 \}$ such that
    \begin{equation}\label{eq:relaxation}
    \forall (x, y, z, t) \in E' : \qquad c(x) + c(y) + c(z) + c(t) = 2.
    \end{equation}
    Of course, it is \NP-hard to find this mapping; however, it is possible to
    solve the system of equations in~\eqref{eq:relaxation} over $\mathbb{Z}$, by
    computing the Smith normal form~\cite{smith_normal_form}; suppose
    $\tilde{c}$ is our solution. Now, consider any edge $(x, y, z, t)$. It is
    not possible that all of $\tilde{c}(x), \tilde{c}(y), \tilde{c}(z),
    \tilde{c}(t)$ are $\leq 0$ (or else the sum would be $\leq 0 < 2$); nor is
    it possible that all are $> 0$ (or else the sum would be at least $4 > 2$).
    Hence, if we set $v \in V$ to $1$ or $0$ depending on whether $\tilde{c}(v)
    \geq 1$ or $\tilde{c}(v) \leq 0$, we can find the desired solution in
    $\NAE{4}$.
\end{example}

\paragraph*{Rich 2-to-1 Conjecture}

The Promise MMSNPs over $\templ{c}{k}, \templ{d}{\ell}$ include the notorious
Approximate Graph Colouring (take $k=\ell=2$) which can be viewed as a PCSP over
two cliques. For example, $\PCSP(\mathbb K_3, \mathbb K_6)$ is equivalent to
$\PMMSNP(\templ{3}{2}, \templ{6}{2})$. The best known hardness results for the
Approximate Graph Colouring include $c$ vs.~$\Theta(2^c/\sqrt c)$ for any $c \ge
4$ by~\cite{KOWZ23_topology}, and $c$ vs.~$2c-1$ by~\cite{Bible} for any $c \ge
3$. Therefore, an unconditional complexity classification for our PMMSNPs is
currently out of reach.

However, under additional inapproximability assumptions, further progress on the
Approximate Graph Colouring problem has been achieved. Assuming the $d$-to-1
Conjecture of~\cite{khot2002ugc}, \NP-hardness was proved for all $\PCSP(\mathbb
K_c, \mathbb K_d)$ such that $3 \le c \le d$ in~\cite{dinur2006approxcol,
GS2020agc}. Later,~\cite{multislices} suggested a substantially different
hardness proof under the recently proposed Rich 2-to-1
Conjecture~\cite{rich2to1}. Outside of the Approximate Graph Colouring, the Rich
2-to-1 Conjecture was the basis of several hardness results for Boolean
PCSPs~\cite{bgs23ordered,booleanfourier2026}. These (and other) conjectures
provide a way to bypass the limitations of the unconditional sources of
hardness, and make progress on various open inapproximability problems in order
to better understand their underlying mathematical structure, which could, in
turn, lead to new, stronger, unconditional sources of hardness in the future.

Let us briefly introduce the Rich 2-to-1 Conjecture now (for the formal
definition, see \cref{sec:blackbox}). The (perhaps) more famous Unique Games
Conjecture of Khot~\cite{khot2002ugc} states that it is \NP-hard to even
approximately solve an \emph{almost satisfiable} instance of Unique Games
(equivalently, a CSP instance in which each constraint is a bijection). This
conjecture has inspired numerous influential results on hardness of
inapproximability of various computational
problems~\cite{khot2008vc,khot2007maxcut}. On the other hand, it is trivial to
solve a \emph{satisfiable} instance of Unique Games, therefore this conjecture
has no use in tackling so called \emph{perfect completeness} approximation
problems, that is, problems of finding an approximate solution to
\emph{satisfiable} instances. The Rich 2-to-1 Conjecture was proposed in 2021
by~\cite{rich2to1} to overcome this issue. Namely, it states that it is \NP-hard to
approximately solve a \emph{satisfiable} instance of Rich 2-to-1 Games
(equivalently, a CSP instance in which each constraint is a 2-to-1 function with
an extra regularity property imposed). On \emph{almost satisfiable} instances,
the authors of~\cite{rich2to1} proved that this conjecture is equivalent to the
Unique Games Conjecture, which makes it a perfect-completeness surrogate of the
UGC.\@

\paragraph*{Our contributions}

We view the contributions of this paper as twofold. Firstly, a classification
(resulting in a dichotomy) of all Promise MMSNP problems over forbidden
monochromatic cliques.

\begin{restatable}{theorem}{dichotomy}\label{th:dichotomy}
    Let $c, d, k, \ell$ be positive integers such that $\templ{c}{k} \to
    \templ{d}{\ell}$. Then
    \begin{itemize}
        \item $\ell \ge c(k-1)$ and $\PMMSNP(\templ{c}{k}, \templ{d}{\ell})$ is
            solvable in polynomial time via AIP, or
        \item $\ell < c(k-1)$ and $\PMMSNP(\templ{c}{k}, \templ{d}{\ell})$ is
            \NP-hard, assuming the Rich 2-to-1 Conjecture.
    \end{itemize}
\end{restatable}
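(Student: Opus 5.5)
The proof splits along the threshold $\ell$ versus $c(k-1)$. In both cases I would pass through the finite PCSP given by \cref{th:mainequivalence}, specialised to cliques. The key observation is that in a $\templ{c}{k}$-free colouring, every clique has at most $k-1$ vertices of each colour. Hence every clique has at most $c(k-1)$ vertices.

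\emph{Tractable side ($\ell\ge c(k-1)$).}
\begin{itemize}
\item If $\ell>c(k-1)$, a \yes{} instance contains no $\ell$-clique at all, so any $d$-colouring is $\templ{d}{\ell}$-free. AIP trivially accepts.
\item If $\ell=c(k-1)$, every $\ell$-clique of a \yes{} instance contains exactly $k-1$ vertices of each colour. I would first dispose of the degenerate cases: $k=1$ or $c=1$ make the problem trivial, and $d=1$ is excluded by $\templ{c}{k}\to\templ{d}{\ell}$.
\item In the remaining case, following \cref{ex:AIP}, build the $\ell$-uniform hypergraph of $\ell$-cliques. Restrict to colour class $1$ versus the rest; this yields an instance of $\PCSP(\kinl{(k-1)}{\ell},\NAE{\ell})$.
\item Solve $\sum_{v\in e}x_v=k-1$ over $\mathbb Z$ for every hyperedge $e$, and round by setting $x_v\ge 1\mapsto 1$ and $x_v\le 0\mapsto 0$. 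Since $0<k-1<\ell$, no hyperedge becomes monochromatic. The resulting $2$-colouring (hence $d$-colouring) avoids monochromatic $\ell$-cliques.
\end{itemize}

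\emph{Hard side ($\ell<c(k-1)$).} The plan is to reduce from the key intermediate theorem announced in the abstract. Under the Rich 2-to-1 Conjecture, it is \NP-hard to properly $d'$-colour an $m$-uniform hypergraph promised to have a colouring all of whose edge-patterns lie in a \emph{reconfigurable} relation $A\subseteq[c]^m$.

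I would take $m=c(k-1)$ and let $A$ be the relation of $m$-tuples using each of the $c$ colours exactly $k-1$ times. The reduction proceeds as follows.
\begin{itemize}
\item Given a hypergraph, first pass to a high-girth instance using the sparse incomparability lemma. This ensures distinct hyperedges share at most one vertex and there are no short cycles.
\item Replace each hyperedge by an $m$-clique. The resulting graph then has no cliques other than subsets of hyperedges.
\item Completeness: an $A$-colouring of the hypergraph has at most $k-1$ vertices per colour in every clique, so it is $\templ{c}{k}$-free.
\item Soundness: a $\templ{d}{\ell}$-free $d$-colouring uses each colour fewer than $\ell<m$ times on every hyperedge. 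This gives a colouring in $B=\{\tuple b\in[d]^m : \text{no colour occurs}\ge\ell\text{ times}\}$.
\end{itemize}

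To land exactly in ``proper colouring'', I would convert a $B$-colouring into an $\NAE{}$-colouring of a derived hypergraph. The idea is to replace each hyperedge by all its $\ell$-subsets, or, alternatively, to exhibit a minion homomorphism from $\mathrm{Pol}(A,\NAE{})$-type polymorphisms through $\mathrm{Pol}(A,B)$. The natural choice is the $\ell$-uniform hypergraph $H_\ell$ whose hyperedges are the $\ell$-subsets of the original $m$-edges. On $H_\ell$, the \no{} side is exactly a proper $d$-colouring, and the \yes{} side is the projection of $A$ to $\ell$ coordinates. That projection consists of $\ell$-tuples with at most $k-1$ per colour, possible precisely because $\ell<c(k-1)$.

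\emph{Main obstacle.} The hardest step is verifying that the yes-relation used (either $A$ itself, or its projection to $\ell$ coordinates) satisfies the technical reconfigurability condition needed by the hypergraph-colouring hardness theorem. I would try to show this via a direct combinatorial argument. Any two tuples in the relation should be connected by a sequence of single-coordinate recolourings, or swaps, staying inside the relation. The slack from the strict inequality $\ell<c(k-1)$ should guarantee that some colour is always under-used and can absorb a moved coordinate. This is exactly where the threshold $c(k-1)$ should enter: at equality the relation is rigid (the balanced configuration of the tractable case), and reconfigurability fails.
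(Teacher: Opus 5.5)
Your tractable side is correct and is essentially the paper's argument: you sandwich through $\kinl{(k-1)}{\ell}$ and round the AIP solution. The hardness side, however, has a genuine gap.

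First, you want to apply the hypergraph-colouring hardness theorem to the relation $A\subseteq[c]^{c(k-1)}$ of perfectly balanced tuples, but $A$ is not reconfigurable. Changing a single coordinate of a balanced tuple always unbalances it, so the reconfiguration graph of $A$ has no edges. In fact $A=\urel{c}{k}{c(k-1)}$ is exactly the relation of your tractable case, and $\PCSP(A,\NAE[d]{c(k-1)})$ is solved by your own AIP argument; your last paragraph essentially says so. Second, the proposed repair via $H_\ell$ points the wrong way. The map $H\mapsto H_\ell$ sends an $A$-colourable $H$ to an $\urel{c}{k}{\ell}$-colourable $H_\ell$, and turns a proper colouring of $H_\ell$ into a $B$-colouring of $H$. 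So it reduces $\PCSP(A,B)$ \emph{to} $\PCSP(\urel{c}{k}{\ell},\NAE[d]{\ell})$. That transfers hardness from $\PCSP(A,B)$ to the projected problem, not conversely, and instances of the form $H_\ell$ are a special subclass anyway. The minion-homomorphism alternative would need the direction $\mathrm{Pol}(A,B)\to\mathrm{Pol}(\urel{c}{k}{\ell},\NAE[d]{\ell})$, which you do not state. As written, nothing connects a source of hardness to $\PMMSNP(\templ{c}{k},\templ{d}{\ell})$.

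The missing step is to run your clique-replacement reduction with $m=\ell$ instead of $m=c(k-1)$; this is the nontrivial direction of \cref{lemma:clique-equivalence}. Take an arbitrary $\ell$-uniform instance of $\PCSP(\urel{c}{k}{\ell},\NAE[d]{\ell})$, pass to high girth, and put an $\ell$-clique on every hyperedge. An $\urel{c}{k}{\ell}$-colouring then has at most $k-1$ vertices of each colour in every clique, and a $\templ{d}{\ell}$-free colouring leaves every hyperedge non-monochromatic. It then remains to apply \cref{th:blackbox} to $R=\urel{c}{k}{\ell}$, checking the following.
\begin{itemize}
\item $R$ is symmetric and non-empty, since $\lceil \ell/c\rceil<k$.
\item Its arity is $\ell\ge 2$: containment gives $k\le\ell$, hence $\ell<c(\ell-1)$.
\item $R$ maps to $\NAE[d]{\ell}$ via the recolouring.
\item $R$ is reconfigurable. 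Since $\ell<c(k-1)$, a least frequent colour occurs at most $\lfloor \ell/c\rfloor\le k-2$ times and can always absorb one more occurrence. This lets you move any tuple to an almost balanced one. Then, by induction on Hamming distance, using such a colour as a temporary buffer, you can reach any fixed target.
\end{itemize}
Alternatively, your $m=c(k-1)$ route can be saved by prefixing it with a padding gadget that adds $c(k-1)-\ell$ fresh vertices to each $\ell$-edge. Since the $\ell$-projection of $A$ is $\urel{c}{k}{\ell}$ and the $\ell$-projection of $B$ lies in $\NAE[d]{\ell}$, this reduces $\PCSP(\urel{c}{k}{\ell},\NAE[d]{\ell})$ to $\PCSP(A,B)$. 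The detour buys nothing, though.
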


As the first step towards \cref{th:dichotomy}, we establish that every Promise
MMSNP of interest is poly-time equivalent to a certain finite-domain Promise CSP
in which both the promise and the target templates have a single, symmetric
relation. The PCSPs of this form are readily seen to be expressible as certain
promise hypergraph colouring problems. For concreteness, we deal with relations
over a universe $[c]$ that consist of all tuples of length $\ell$ in which each
$a \in [c]$ appears less than $k$ times. Such a relation is denoted with
$\urel{c}{k}{\ell}$. For example, $\urel{2}{3}{3}$ is the ternary \emph{Not All
Equal} relation. Consequently, we show that every promise MMSNP over two
monochromatic clique families is poly-time equivalent to a PCSP defined by a
pair of relations of this form.

\begin{restatable}[Special case of \cref{th:mainequivalence}]{lemma}{cliqueequivalence}\label{lemma:clique-equivalence}
    Let $c, d, k, \ell$ be positive integers. If $\templ{c}{k} \to
    \templ{d}{\ell}$, then the problems $\PMMSNP(\templ{c}{k}, \templ{d}{\ell})$
    and $\PCSP(\urel{c}{k}{\ell}, \urel{d}{\ell}{\ell})$ are poly-time
    equivalent.
\end{restatable}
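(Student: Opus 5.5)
We prove the two reductions separately. Both directions use the observation from the Feder--Vardi construction that a monochromatic $k$-clique in a colouring of $\mathbb G$ is the same as a homomorphism from $\mathbb K_k$ to $\mathbb G$ that is constant on colours.

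\emph{From $\PMMSNP(\templ{c}{k}, \templ{d}{\ell})$ to $\PCSP(\urel{c}{k}{\ell}, \urel{d}{\ell}{\ell})$.}
Given a graph $\mathbb G$, build the $\ell$-ary instance $\mathbb I$ on the vertex set $G$. Its constraint tuples are all $\tuple v = (v_1,\dots,v_\ell)$ such that $i \mapsto v_i$ is a homomorphism $\mathbb K_\ell \to \mathbb G$. Repetitions are allowed exactly when the repeated vertex has a loop. There are at most $|G|^\ell$ such tuples, so the construction is polynomial.

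\begin{itemize}
\item \emph{Completeness.} Let $f$ be a $\templ{c}{k}$-free $c$-colouring of $\mathbb G$, and suppose some colour $a$ occurs at $\ge k$ positions of a constraint tuple $\tuple v$. Restricting $i\mapsto v_i$ to $k$ of those positions gives a homomorphism from a monochromatic $\mathbb K_k$, a contradiction. Hence $f$ maps every tuple into $\urel{c}{k}{\ell}$.
\item \emph{Soundness (contrapositive).} Suppose $g\colon \mathbb I \to \urel{d}{\ell}{\ell}$. Any homomorphism from $\mathbb K_\ell$ to $\mathbb G$ is a constraint tuple, so it is not constant under $g$. Thus $g$ is a $\templ{d}{\ell}$-free colouring of $\mathbb G$.
\end{itemize}

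\emph{From $\PCSP(\urel{c}{k}{\ell}, \urel{d}{\ell}{\ell})$ to $\PMMSNP(\templ{c}{k}, \templ{d}{\ell})$.}
The natural map takes the Gaifman graph of the instance: each constraint tuple becomes a clique on its entries, with loops at repeated entries. This can create spurious cliques spread over several hyperedges, so we first preprocess.

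\begin{itemize}
\item \emph{Preprocessing.} Apply the Sparse Incomparability Lemma (in Kun's deterministic form) with target $\urel{d}{\ell}{\ell}$ and girth bound $>3$. This produces in polynomial time an instance $\mathbb I'$ with $\mathbb I' \to \mathbb I$ such that $\mathbb I' \to \urel{d}{\ell}{\ell}$ iff $\mathbb I \to \urel{d}{\ell}{\ell}$. This preserves the promise: \yes{} instances stay \yes{} via composition $\mathbb I'\to\mathbb I\to\urel{c}{k}{\ell}$, and \no{} instances stay \no{}. Girth $>3$ means no tuple has a repeated entry, no two tuples share two vertices, and there are no short Berge cycles.
\item \emph{Key claim.} In the Gaifman graph $\mathbb G$ of $\mathbb I'$, every clique is contained in a single hyperedge, and $\mathbb G$ is loopless. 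The argument: each edge of $\mathbb G$ lies in a unique hyperedge. If a triangle used two different hyperedges, it would yield a Berge cycle of length $\le 3$. So all edges of a triangle lie in one hyperedge, and this propagates to the whole clique.
\item \emph{Completeness.} Let $f\colon \mathbb I' \to \urel{c}{k}{\ell}$. A monochromatic $\mathbb K_k$ in $\mathbb G$ has $k$ distinct vertices, since $\mathbb G$ is loopless. By the key claim they lie in one hyperedge, so some colour appears $\ge k$ times in that tuple, a contradiction.
\item \emph{Soundness.} A $\templ{d}{\ell}$-free colouring of $\mathbb G$ leaves no hyperedge constant, because each hyperedge spans a $\mathbb K_\ell$ in $\mathbb G$. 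Hence it is a homomorphism $\mathbb I' \to \urel{d}{\ell}{\ell}$.
\end{itemize}

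The degenerate case $k=1$ is trivial: $\templ{c}{1}$ forbids every colouring, so there are no \yes{} instances. The case $k = 2$ needs only girth $>2$.

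\emph{Main obstacle.} The main step is the second direction. We must justify that sparse incomparability may be applied in the promise setting while controlling only the target $\urel{d}{\ell}{\ell}$. We must also prove carefully the structural claim that, in hypergraphs of girth $>3$, cliques of the Gaifman graph live inside single hyperedges.
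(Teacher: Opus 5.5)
Your proposal is correct and follows the paper's route: it specializes the two reductions in the proof of \cref{th:mainequivalence} (Claims~\ref{claim:sigma} and~\ref{claim:tau}). The forward direction lists all homomorphisms $\mathbb K_\ell \to \mathbb G$, and the backward direction uses sparse incomparability and replaces each tuple by a copy of $\mathbb K_\ell$. Your direct argument that, at girth $>3$, every clique of the Gaifman graph lies inside a single hyperedge does the job that the normal-form and biconnectedness property does in the paper's general proof, and makes explicit the verification the paper leaves to the reader.
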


Thus, the desired dichotomy (\cref{th:dichotomy}) amounts to classifying all
PCSPs of this form.

\begin{restatable}{corollary}{pcspdichotomy}\label{cor:dichotomy}
    Let $c, d, k, \ell$ be positive integers such that $\templ{c}{k} \to
    \templ{d}{\ell}$. Then
    \begin{itemize}
        \item $\ell \ge c(k-1)$ and $\PCSP(\urel{c}{k}{\ell}, \NAE[d]{\ell})$ is
            solvable in polynomial time via AIP, or
        \item $\ell < c(k-1)$ and $\PCSP(\urel{c}{k}{\ell}, \NAE[d]{\ell})$ is
            \NP-hard, assuming the Rich 2-to-1 Conjecture.
    \end{itemize}
\end{restatable}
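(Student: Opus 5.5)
The corollary follows from \cref{th:dichotomy} and \cref{lemma:clique-equivalence}, once we note that $\urel{d}{\ell}{\ell}=\NAE[d]{\ell}$. A tuple of length $\ell$ in which every colour appears fewer than $\ell$ times is exactly a non-constant tuple. So \cref{lemma:clique-equivalence} says that $\PMMSNP(\templ{c}{k},\templ{d}{\ell})$ and $\PCSP(\urel{c}{k}{\ell},\NAE[d]{\ell})$ are poly-time equivalent, and each bullet transfers verbatim. The hypothesis $\templ{c}{k}\to\templ{d}{\ell}$ makes the PCSP well defined, since it forces $\urel{c}{k}{\ell}\to\NAE[d]{\ell}$. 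However, \cref{th:dichotomy} is presumably itself proved via this corollary, so below I also plan a direct proof of both sides at the PCSP level.

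\emph{Tractable side} ($\ell\ge c(k-1)$). By pigeonhole, a tuple of length $\ell$ in which every colour of $[c]$ appears at most $k-1$ times exists only if $\ell\le c(k-1)$. So if $\ell>c(k-1)$ the relation $\urel{c}{k}{\ell}$ is empty, and every instance with a constraint is a \no{} instance. If $\ell=c(k-1)$, then every tuple in $\urel{c}{k}{\ell}$ uses each colour exactly $k-1$ times, a multislice. In this case I would run AIP as in \cref{ex:AIP}.
\begin{itemize}
\item Introduce integer variables $x_{v,a}$ for $v$ a vertex and $a\in[c]$, with $\sum_a x_{v,a}=1$ for every $v$.
\item For every constraint $(v_1,\dots,v_\ell)$ and every colour $a$, require $\sum_i x_{v_i,a}=k-1$.
\item Solve over $\mathbb Z$ via Smith normal form.
\item Round by colouring $v$ with colour $1$ if $x_{v,1}\ge 1$ and with colour $2$ otherwise.
\end{itemize}
The rounding is correct for the following reason. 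If all $x_{v_i,1}\ge1$ on a constraint, the sum is at least $\ell=c(k-1)>k-1$, using $c\ge2$. If all $x_{v_i,1}\le0$, the sum is at most $0<k-1$, using $k\ge2$. Hence the output tuple is non-constant, i.e.\ lies in $\NAE[d]{\ell}$ provided $d\ge2$. The degenerate cases $c=1$, $k=1$ or $d=1$ have to be checked separately against the hypothesis $\templ{c}{k}\to\templ{d}{\ell}$: either the relation is empty, or the recolouring cannot exist.

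\emph{Hard side} ($\ell<c(k-1)$). Here I would invoke the paper's main hardness result: under the Rich 2-to-1 Conjecture it is \NP-hard to properly colour (with any fixed number $d$ of colours) a uniform hypergraph promised to admit a \emph{reconfigurable} colouring. The task is then to show that the promise relation $\urel{c}{k}{\ell}$, viewed as a set of allowed colourings of an $\ell$-edge, contains a reconfigurable sub-relation whenever $\ell<c(k-1)$. The slack is the point: since $\ell\le c(k-1)-1$, some colour occurs at most $k-2$ times in every allowed edge colouring. This lets one move a single coordinate from one colour class to another while staying in $\urel{c}{k}{\ell}$. Such single-coordinate moves should give the connectivity or exchange property that reconfigurability requires, in contrast with the rigid multislice at $\ell=c(k-1)$, which is exactly where AIP succeeds.

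I expect the main obstacle to be matching the combinatorial slack condition to the precise technical definition of reconfigurability. My first attempt would be to use the sub-relation of tuples with each colour count in $\{k-2,k-1\}$ and check the definition directly on it. If that check fails, the fallback is to reduce from a smaller template such as $\urel{c}{k}{c(k-1)-1}$ via a gadget (minion homomorphism) and then pad arities.
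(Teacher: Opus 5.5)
Your tractable half is correct and is the paper's argument unfolded. Rounding on $x_{v,1}\ge 1$ amounts to the sandwich $\urel{c}{k}{\ell}\to\kinl{(k-1)}{\ell}\to\NAE[2]{\ell}\to\NAE[d]{\ell}$ (send colour $1$ to $1$ and every other colour to $0$), followed by AIP. Your opening deduction from \cref{th:dichotomy} and \cref{lemma:clique-equivalence} is how the paper frames the corollary. However, as you suspected, the paper proves both halves of \cref{th:dichotomy} by establishing exactly these PCSP statements. So the substance lies in the direct argument, and its hardness half has a genuine gap.

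\textbf{The gap.} The sub-relation you propose, with every colour count in $\{k-2,k-1\}$, fails in general.
\begin{itemize}
\item If $\ell<c(k-2)$ it is empty, so \cref{th:blackbox} does not apply. An example is $c=2$, $k=\ell=5$, $d\ge 2$, where containment holds and $\ell<c(k-1)$.
\item If $\ell=c(k-2)$ and $k\ge 3$, it is the perfectly balanced multislice. No single-coordinate change stays inside it. Worse, your own AIP rounding shows that its PCSP against $\NAE[d]{\ell}$ is in \P{}; for $c=2$, $k=\ell=4$ it is a renaming of $\kinl{2}{4}$ from \cref{ex:AIP}. So no hardness can be extracted from it.
\end{itemize}
The unspecified gadget/padding fallback is not needed either.

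\textbf{The missing observation.} You need no sub-relation at all. $R=\urel{c}{k}{\ell}$ itself satisfies every hypothesis of \cref{th:blackbox}:
\begin{itemize}
\item it is symmetric;
\item it is non-empty, since an almost-balanced tuple has all counts $\le\ceil{\ell/c}\le k-1$;
\item its arity is $\ell\ge k\ge 2$;
\item it is reconfigurable.
\end{itemize}
This is what the paper checks before invoking the theorem.

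Your slack observation already proves reconfigurability. Take $s\neq t$ in $R$ and pick $i$ with $s_i\neq t_i$.
\begin{itemize}
\item If $t_i$ occurs at most $k-2$ times in $s$, set $s_i\coloneqq t_i$.
\item Otherwise $t_i$ occurs $k-1$ times in $s$. But $t$ uses $t_i$ at most $k-2$ times outside position $i$, so some $j$ has $s_j=t_i\neq t_j$. First recolour $s_j$ to a colour $a$ occurring at most $k-2$ times in $s$; such $a$ exists because $\ell<c(k-1)$, and $a\neq t_i$. Then set $s_i\coloneqq t_i$.
\end{itemize}
Every intermediate tuple lies in $R$, and each round strictly lowers the Hamming distance to $t$, which gives connectivity. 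The paper argues slightly differently: it reconfigures every tuple to a fixed almost-balanced tuple by repeatedly recolouring a most frequent colour to a least frequent one.
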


In fact, we will demonstrate that the complexity of a $\PCSP(R, \NAE{\ell})$ is
governed by certain connectivity properties of the left-hand side relation. To
formalize this notion, let us call a relation \emph{reconfigurable} if any of
its members can be obtained from any other member by modifying one coordinate at
a time without leaving the relation at any point.

In particular, we prove the following hardness result for PCSPs, which is the
technical core and the second contribution of this paper. Let $\NAE[d]{r}$
denote the $r$-ary \emph{Not All Equal} relation on $d$ elements.

\begin{restatable}{theorem}{blackbox}\label{th:blackbox}
    Let $\emptyset \neq R \subseteq A^r$ be a symmetric\footnote{A relation is symmetric if it is closed under permutations of coordinates.} reconfigurable\footnote{In fact, the reconfigurability assumption can be relaxed (see \cref{sec:blackbox}).} relation of arity $r \ge 2$. For any $d$ such that $R \to \NAE[d]{r}$, the problem $\PCSP(R, \NAE[d]{r})$ is \NP-hard, assuming the Rich 2-to-1 Conjecture.
\end{restatable}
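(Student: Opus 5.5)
We will follow the algebraic route to PCSP hardness: analyse the polymorphism minion $\mathrm{Pol}(R, \NAE[d]{r})$ and show that every polymorphism admits a bounded-size set of ``influential'' coordinates that behaves consistently under minors. Then we plug this into a Label Cover-type reduction from Rich 2-to-1 Games, in the spirit of \cite{multislices,rich2to1}.

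Concretely, a polymorphism is a map $f\colon A^n\to[d]$ such that for every $r$-tuple of inputs whose columns lie in $R$, the outputs are not all equal. We view $f$ as a $d$-colouring of the product space $A^n$ with a product measure $\mu^{\otimes n}$. Here $\mu$ is a full-support distribution on $A$ that is the marginal of a symmetric distribution $\pi$ on $R$; symmetry of $R$ lets us take $\pi$ uniform on $R$, with equal marginals.

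The first step is to show that $f$ has a coordinate of large low-degree influence. Suppose instead that all low-degree influences of every indicator $1_{f=a}$, $a\in[d]$, are small. Some colour class $a$ has measure at least $1/d$. We would like an invariance principle (or a ``no monochromatic edge'' hypercontractivity argument) to say that sampling $r$ correlated inputs from $\pi^{\otimes n}$ yields, with positive probability, all outputs equal to $a$, which contradicts the polymorphism property.

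The obstacle is that $\pi$ is not a connected distribution in the sense of Mossel's invariance principle, since its support may have perfect correlations. This is exactly where reconfigurability enters, following the Braverman--Khot--Lifshitz--Minzer analysis of non-connected distributions. Reconfigurability says that the graph on $R$ in which tuples differing in one coordinate are adjacent is connected. This allows a chain of ``one-coordinate-change'' distributions whose pairwise correlations are below 1, so BKLM's structure theorem for $r$-wise correlated product functions applies. It gives that a positive-measure set in each coordinate forces a positive-probability common value unless some coordinate has large influence, or $f$ correlates with a low-degree character of an abelian embedding. I expect the main difficulty here: ruling out the ``Abelian/linear'' obstruction terms in BKLM. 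We plan to exploit the symmetry of $R$ together with the requirement $R\to\NAE[d]{r}$ on the output side, since $\NAE$ forbids only the constant tuple, to show that such character correlations cannot prevent monochromatic tuples. We will also use a random restriction that reduces to a sub-alphabet on which the distribution becomes connected.

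Once every polymorphism $f$ has a set $S(f)$ of at most $L=L(\epsilon,R,d)$ coordinates with influence at least $\tau$, we check compatibility with minors. For a minor $f^\pi$ with $\pi$ 2-to-1, influences only add up along $\pi$, so $\pi(S(f))\cap S(f^\pi)\neq\emptyset$ with constant probability once we restrict to low degree. The richness condition of Rich 2-to-1 Games lets us pass the low-degree influence through random 2-to-1 projections. Randomly decoding a label from $S(\cdot)$ then satisfies a constant fraction of constraints. By the soundness of the Rich 2-to-1 Conjecture, together with the standard reduction from minion homomorphism-type decoding to PCSP hardness (as used in \cite{multislices,bgs23ordered}), $\PCSP(R,\NAE[d]{r})$ is \NP-hard.
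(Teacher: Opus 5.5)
Your outer shell (decode polymorphisms consistently along uniformly random 2-to-1 minors, then reduce from Rich 2-to-1 Label Cover) is legitimate, but the minor step is unproved, and it is the crux. Mossel's argument applied to $f$, and separately to each minor $f^\pi$, gives nonempty bounded influential sets $S(f)$ and $S(f^\pi)$. The link you assert between them, that ``influences only add up along $\pi$'', holds for bijections but is false for 2-to-1 maps. The minor $f^\pi$ only sees $f$ on tuples constant on the blocks of $\pi$. That set has $\mu^{\otimes 2n}$-measure $(\sum_a\mu(a)^2)^n$, so product-measure influences of $f$ impose nothing on $f^\pi$. For example, under the uniform measure $f(x)=x_1\cdot\mathbf{1}[|x|\text{ odd}]=\tfrac14(1-\chi_{\{1\}}-\chi_{[2n]}+\chi_{[2n]\setminus\{1\}})$ on $\{0,1\}^{2n}$ has low-degree influence $1/16$ at coordinate $1$, yet every 2-to-1 minor of it vanishes identically.

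Averaging over $\pi$ via richness does not repair this. A random pullback $\pi^{-1}(\bar y)$ of $\bar y\sim\mu^{\otimes n}$ lives on even-count tuples, with a law that is not $\mu^{\otimes 2n}$ even after conditioning on that event. This is exactly why the paper, following BKLM, changes the code itself:
\begin{itemize}
\item vertices carry multislices $\mathcal S\tuple\#$ and $\mathcal S2\tuple\#$; a uniformly random 2-to-1 pullback of a uniform point of $\mathcal S\tuple\#$ is uniform on $\mathcal S2\tuple\#$, and folding plus richness makes $f_u$ well defined on all of $\mathcal S2\tuple\#$;
\item hyperedges are planted according to $\mu^\Omega_n$;
\item the multislice invariance principle (\cref{th:invariance}) turns $\E_\mu[\prod_i f_u(\bar x^{(i)})]=0$ into $\E_{\Omega^{2n}}[\prod_i Tf_u(\bar x^{(i)})]=o(1)$;
\item Mossel's inequality is applied to the lifted function $Tf_u$;
\item the $V$-side labels come from \cref{lem:finish}, BKLM's decoding lemma, which gives $\Pr_\pi[\pi(j)=d_{f^\pi}]\ge C$ for an influential coordinate $j$ of $Tf$.
\end{itemize}
Your proposal has no substitute for either the multislice invariance principle or this decoding lemma. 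So it never gets from an influential coordinate of $f_u$ to a constant fraction of satisfied constraints.

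You have also misplaced the difficulty of the first step. For symmetric reconfigurable $R$, the reconfiguration graph is a spanning subgraph of the line graph of every prefix/suffix graph $\mathbb G^R_i$, so all these bipartite graphs are connected. By \cref{lem:connected-correlation}, all $\rho_i$ are then $<1$, and by symmetry so is $\rho(A_j,\prod_{i\neq j}A_i;\Omega)$. Mossel's \cref{th:original-mossel} together with $\Gamma_{\bar\rho}(\delta,\dots,\delta)>0$ then yields the influential coordinate directly. There are no perfect correlations and no Abelian-embedding obstructions to rule out. The BKLM input relevant here is the multislice invariance principle, not a structure theory for non-connected distributions. Your random-restriction and character-correlation plans are therefore unnecessary, and as written they are not carried out anyway.
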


It is an extension of the work of~\cite{multislices}: they proved this theorem
in the special case of $R = \{(a, b) \in \{1,2,3\} \mid a \neq b\}$, which
coincides with the Approximate Graph Colouring. It is worth noting that their
proof in fact covers \cref{th:blackbox} for $r=2$, although the authors do not
mention it explicitly. As a side note, a symmetric $R \subseteq A^2$ (where
every value in $A$ appears in $R$) is reconfigurable if and only if the
(undirected) graph $(A; R)$ is connected and not
bipartite~\cite{KOWZ23_topology}. Thus, the Brakensiek-Guruswami
conjecture~\cite{BG21_pcsps}, which postulates that $\PCSP(G, H)$ is \NP-hard
whenever $G$ is not bipartite and $H$ is loopless, already follows from the case
$r=2$ of \cref{th:blackbox}, assuming the Rich 2-to-1 Conjecture.

In the language of~\cite{austrin2025usefulness}, \cref{th:blackbox} states that
symmetric reconfigurable relations are \emph{promise-useless} --- the extra
information about the input instance that they provide does not help to find
even a $d$-colouring that leaves no edge monochromatic (and hence cannot help us
find a solution to \emph{any} nontrivial hypergraph colouring problem). This
result builds, albeit conditionally, on the hardness results
of~\cite{brokenpromise} (where they study PCSP templates of the form $(R,
\NAE{3})$ such that no tuple in $R$ is injective) and~\cite{beyond1in3nae}
(where they study PCSP templates of the form $(\kinl{k}{\ell} \cup S,
\NAE[2]{\ell})$ for some relation $S$ over $\{0,1\}$).

Besides, the notion of reconfigurability in \cref{th:blackbox} hints towards its
topological nature --- we elaborate on this question in the
\nameref{sec:conclusions}. It is interesting to note that this notion has
appeared before in the Fourier-analytical research as well,
e.g.~Mossel~\cite{mossel} considers a relation \emph{connected} if it is
reconfigurable.

To illustrate the power of \cref{th:blackbox}, we give an example from the world
of approximate hypergraph colouring. Given a hypergraph $H = (V, E)$, a
\emph{Linearly Ordered (LO)} $c$-colouring is an assignment $V \to [c]$ to the
vertices of $H$, such that in every hyperedge of $H$, its maximum colour appears
exactly once. Quite some work has been done on the hardness of approximate LO
colouring in uniform hypergraphs~\cite{bbb21_los, nz23_los,nvwz25_los,kv25_los};
for now, let us focus on the case of 3-uniform hypergraphs. For the following,
let $(3, c, d)$-approximate LO colouring denote the problem of distinguishing
whether a 3-uniform hypergraph is LO $c$-colourable, or not even LO
$d$-colourable. The state-of-the-art hardness result is the following.

\begin{theorem}[Simple reduction from~\cite{kv25_los}]
    $(3, c, c + 1)$-approximate LO colouring is \NP-hard for $c \geq 2$.
\end{theorem}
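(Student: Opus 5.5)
We would obtain this statement by a homomorphic sandwiching argument rather than by a new hardness proof. For finite structures, if $\mathbb X \to \mathbb S$ and $\mathbb T \to \mathbb Y$, then $\PCSP(\mathbb X, \mathbb Y)$ reduces trivially to $\PCSP(\mathbb S, \mathbb T)$ via the identity map on instances. So it suffices to find an \NP-hard template $(\mathbb X, \mathbb Y)$ established in~\cite{kv25_los} such that
\[
\mathbb X \to \LO_c \quad\text{and}\quad \LO_{c+1} \to \mathbb Y .
\]
Here $\LO_m$ denotes the ternary structure on $[m]$ whose relation consists of the triples in which the maximum value occurs exactly once.

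\textbf{Step 1: identify the template.} We would extract from~\cite{kv25_los} its main ternary hardness statement. We expect this to be an \NP-hardness result for a family of ternary templates indexed by $c$, in which the left-hand side is either $\LO_c$ itself or a smaller relation such as $\kinl{1}{3}$ (possibly with extra tuples), and the right-hand side is an ``ordered'' relation containing a copy of $\LO_{c+1}$.

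\textbf{Step 2: build the two homomorphisms.} On the left, $\kinl{1}{3}\to \LO_2 \to \LO_c$ via $0\mapsto 1$, $1\mapsto 2$, followed by the inclusion $[2]\subseteq[c]$. On the right, we would exhibit the map $\LO_{c+1}\to\mathbb Y$, which should be the natural order-preserving map. Verifying that it preserves the relation is a finite case check: the unique maximum of a triple must be mapped to a unique maximum.

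\textbf{Step 3: conclude.} The two maps compose with the trivial reduction to give \NP-hardness of $(3,c,c+1)$-approximate LO colouring for every $c\ge 2$. If the source result in~\cite{kv25_los} is only stated for a single base case (for example $c=2$), we would instead run an induction on $c$. The inductive step would be a gadget reduction from $(\LO_c,\LO_{c+1})$ to $(\LO_{c+1},\LO_{c+2})$ that introduces a fresh ``top'' colour: given an instance, add a new vertex that is forced into the maximum colour, and attach it to each hyperedge through auxiliary hyperedges, so that ignoring that colour recovers an LO colouring with one fewer colour.

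\textbf{Main obstacle.} The main difficulty is Step 1 together with the right-hand homomorphism in Step 2. We must match exactly which target relation~\cite{kv25_los} proves hard, and check that $\LO_{c+1}$ maps into it. If the induction route is needed, the delicate part is soundness of the gadget: a \no{} instance must not become LO $(c+2)$-colourable by using the new top colour in unintended places. We would try to handle this by forcing that colour onto a single distinguished vertex through a rigid sub-gadget.
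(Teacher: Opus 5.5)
Your proposal does not prove the statement yet. Step 1 is deferred to an unidentified result, and the place where a real argument is needed, the colour-raising step, is left as an unspecified gadget whose soundness you list as an open obstacle.

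The pure sandwich route of Steps 1--2 cannot cover all $c\ge 2$ from any single template. Suppose $h\colon\LO_{c+1}\to\mathbb Y$ and $|Y|\le c$. Then $h(a)=h(b)$ for some $a<b$, so the tuple $(a,a,b)\in\LO_{c+1}$ is sent to a constant tuple of $\mathbb Y$. Every instance then maps to $\mathbb Y$ by a constant map, and $\PCSP(\mathbb X,\mathbb Y)$ is trivial. So a fixed hard template handles only finitely many $c$. Unless the source already gives a $c$-indexed family, the ``simple reduction'' has to be exactly the induction step you did not supply. Since that step only raises $c$, the base case $c=2$, i.e.\ \NP-hardness of $\PCSP(\LO_2,\LO_3)$, is what must come from~\cite{kv25_los}. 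It may come directly, or via a sandwich such as your $\kinl{1}{3}\to\LO_2$.

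The missing observation is that $(a,a,b)\in\LO_m$ if and only if $a<b$. So a single degenerate tuple expresses strict order, and no rigid sub-gadget is needed. Given an instance $\mathbb I$, add one fresh element $t$ and the tuples $(x,x,t)$ for all $x\in I$; repeated entries are allowed in PCSP instances.
\begin{itemize}
\item Completeness: extend an $\LO_c$-colouring by $t\mapsto c+1$.
\item Soundness: for any homomorphism $h$ to $\LO_{d+1}$, each new tuple forces $h(x)<h(t)\le d+1$. Hence $h$ maps $I$ into $[d]$, and the original tuples land in $\LO_{d+1}\cap[d]^3=\LO_d$. It is irrelevant whether $t$ itself receives the top colour.
\end{itemize}
Thus $\PCSP(\LO_c,\LO_d)$ reduces in polynomial time to $\PCSP(\LO_{c+1},\LO_{d+1})$. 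Iterating $c-2$ times from $(\LO_2,\LO_3)$ gives the statement.

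By contrast, the gadget you sketch would not work if its auxiliary hyperedges have the form $(x,y,t)$ with $x\neq y$. Such tuples force nothing, because an LO colouring may make $x$ the unique maximum of $(x,y,t)$.
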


\Cref{th:blackbox} completes the picture for $c \geq 3$, under the Rich 2-to-1
conjecture. Since the template corresponding to LO $c$-colouring a 3-uniform
hypergraph with $c \geq 3$ colours is reconfigurable (cf.~\cref{fig:lo3}), we
deduce:

\begin{corollary}\label{corr:los}
    $(3, c, d)$-approximate LO colouring is \NP-hard for all $3 \leq c \leq d$, assuming the Rich 2-to-1 Conjecture. 
\end{corollary}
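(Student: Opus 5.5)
We will obtain \cref{corr:los} directly from \cref{th:blackbox}, applied to the LO template. Let $R_c \subseteq [c]^3$ be the set of triples whose maximum value occurs exactly once. Then $(3,c,d)$-approximate LO colouring is exactly $\PCSP(R_c, R_d)$. We plan to weaken the target: since $R_d \subseteq \NAE[d]{3}$ (a constant triple has its maximum repeated), the identity map is a homomorphism $R_d \to \NAE[d]{3}$. Hence $\PCSP(R_c, \NAE[d]{3})$ reduces trivially to $\PCSP(R_c, R_d)$, because the \yes{} instances coincide and the \no{} instances of the latter contain those of the former. Also $R_c \to R_d \to \NAE[d]{3}$ via the inclusion $[c] \subseteq [d]$. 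So it suffices to check that $R_c$ satisfies the hypotheses of \cref{th:blackbox} with $r=3$.

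Nonemptiness and symmetry are immediate: $(1,1,2) \in R_c$, and the condition on the maximum is invariant under permutations of coordinates. The substantive step is reconfigurability for $c \ge 3$. We plan to connect every tuple of $R_c$ to one canonical tuple, say $(1,1,c)$, by single-coordinate changes that stay inside $R_c$. Take $\bar a \in R_c$ with unique maximum $m$ at position $i$, and let the other two entries be $x,y < m$.
\begin{enumerate}
\item Lower each of $x$ and $y$ to $1$, one at a time. The maximum is unchanged and still unique, so each intermediate tuple lies in $R_c$.
\item Raise the entry at position $i$ from $m$ to $c$. The tuple $(1,1,c)$ in that arrangement still has a unique maximum.
\item Move the position of the $c$ using the third colour. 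For example, go from $(c,1,1)$ to $(c,2,1)$, then to $(c,2,c')$ with $2<c'<c$ when such a $c'$ exists, or otherwise pass through $(2,2,\cdot)$-type tuples as described below.
\end{enumerate}

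Step 3 is the only part that needs care. A cleaner route is $(c,1,1) \to (1,1,1)$ is invalid, so instead use $(c,1,1) \to (c,2,1) \to (1,2,1) \to (1,2,c) \to (1,1,c)$. Each step changes one coordinate, and each intermediate tuple has a unique maximum: $c$, $c$, $2$, $c$, $c$ respectively. This works whenever $c \ge 3$, since the maximum $2$ must be strictly above $1$ and $c$ must be distinct from $2$. Any tuple therefore connects to every permutation of $(1,1,c)$, and so $R_c$ is reconfigurable.

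The case $c=2$ is the one genuine obstacle, and it is excluded by the hypothesis $c \ge 3$. With $c=2$ the relation $R_2$ is exactly the permutations of $(1,1,2)$. Moving the $2$ to another position would require passing through $(1,1,1)$ or a tuple with two $2$s, neither of which lies in $R_2$. Once reconfigurability is established for $c \ge 3$, \cref{th:blackbox} gives \NP-hardness of $\PCSP(R_c, \NAE[d]{3})$ under the Rich 2-to-1 Conjecture, and the trivial reduction above transfers this hardness to $(3,c,d)$-approximate LO colouring.
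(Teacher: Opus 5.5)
Your proposal is correct and follows the paper's route. You check that the LO template $R_c$ is nonempty, symmetric and reconfigurable for $c \ge 3$, apply \cref{th:blackbox} to get hardness of $\PCSP(R_c, \NAE[d]{3})$, and transfer it to $\PCSP(R_c, R_d)$ via $R_c \to R_d \to \NAE[d]{3}$ (\cref{obs:sandwich}); the paper only exhibits the reconfiguration graph for $\LO_3$ (\cref{fig:lo3}), so your explicit path $(c,1,1) \to (c,2,1) \to (1,2,1) \to (1,2,c) \to (1,1,c)$ usefully covers all $c \ge 3$, and the abandoned first draft of step 3 can simply be deleted.
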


This confirms (except for the case $c = 2$) the conjecture of Barto, Battistelli
and Berg~\cite{bbb21_los} on approximate LO colouring of 3-uniform hypergraphs,
under the Rich 2-to-1 conjecture. We also remark that there is an essential link
between a topological proof of the hardness of $(3, 3, 4)$-approximate LO
colouring~\cite{FNOTW25_topology} and our proof here: the reconfigurability is
an essential implication of~\cite{FNOTW25_topology}.

\begin{figure}
    \centering
    \begin{tikzpicture}
\node (001) at (0:4cm) {112};
\node (010) at (120:4cm) {121};
\node (100) at (240:4cm) {211};
\node (002) at ($(001)!1/3!(010)$) {113};
\node (020) at ($(010)!1/3!(100)$) {131};
\node (200) at ($(001)!2/3!(100)$) {311};

\node (012) at ($(002)!1/2!(010)$) {123};
\node (102) at ($(002)!1/4!(100)$) {213};
\node (021) at ($(020)!1/4!(001)$) {132};
\node (120) at ($(100)!1/2!(020)$) {231};
\node (201) at ($(200)!1/2!(001)$) {312};
\node (210) at ($(200)!1/4!(010)$) {321};

\node (211) at ($(201)!1/2!(210)$) {322};
\node (121) at ($(021)!1/2!(120)$) {232};
\node (112) at ($(012)!1/2!(102)$) {223};

\draw (001) -- (002);
\draw (001) -- (021);
\draw (001) -- (201);
\draw (002) -- (012);
\draw (002) -- (102);
\draw (010) -- (012);
\draw (010) -- (020);
\draw (010) -- (210);
\draw (012) -- (112);
\draw (020) -- (021);
\draw (020) -- (120);
\draw (021) -- (121);
\draw (100) -- (102);
\draw (100) -- (120);
\draw (100) -- (200);
\draw (102) -- (112);
\draw (120) -- (121);
\draw (200) -- (201);
\draw (200) -- (210);
\draw (201) -- (211);
\draw (210) -- (211);
    \end{tikzpicture}
    \caption{Reconfiguration graph for $\LO_3$. The tuple  $(a, b, c)$ will be rendered $abc$ for compactness.}\label{fig:lo3}
\end{figure}

It is worth noting that $(4, c, d)$-approximate LO colouring was recently shown
to be unconditionally \NP-hard for all $3 \le c \le d$ by~\cite{nvwz25_los}.

\paragraph*{Related work}

We shall mention some previous work in the context of \cref{th:blackbox}: it
consists in various rainbow/balanced hypergraph colouring inapproximability
results; we will use the PCSP parlance for clarity.

Observe that $\PCSP(\urel{c}{\ell}{\ell}, \urel{d}{\ell}{\ell})$ is just the
approximate non-monochromatic hypergraph colouring problem. This problem serves
as one of the key sources of hardness for PCSPs in general: see~\cite{Bible},
which proves that a PCSP is \NP-hard by gadget reduction from this problem if
and only if it fails to admit an \emph{Ol\v{s}\'{a}k polymorphism}, a condition
which can be checked algorithmically; through this condition they were able to
prove hardness of the approximate graph colouring $\PCSP(\mathbb K_c, \mathbb
K_{2c-1})$ for $c \ge 3$.

In \cref{table:results}, we summarize existing hardness results for several
regimes of promise hypergraph colouring that are subsumed, albeit conditionally,
by this work. Finally, PCSPs of our form were mentioned
in~\cite{sandeep2022packing} as \emph{Balanced Hypergraph Colouring} problems.

\renewcommand{\arraystretch}{1.5}
\begin{table}[h]
    \centering
    \caption{A summary of hypergraph colouring hardness results. These results only hold for choices of parameters where the PCSP template is valid, i.e.~only when $\mathbb{A} \to \mathbb{B}$ for $\PCSP(\mathbb{A}, \mathbb{B})$.}\label{table:results}
    \begin{tabular}{|c|c|c|c|}
    \hline 
    Problem & Parameters & Assumptions & Reference  \\ 
    \hline
    $\PCSP(\urel{c}{\ell}{\ell}, \urel{d}{\ell}{\ell})$ & $\ell \ge 3$ and $d \ge c \ge 2$ & --- & \cite{DRS05,nvwz25_los} \\ 
    \hline
    $\PCSP(\urel{\ell+1}{3}{\ell}, \urel{d}{\ell}{\ell})$ & $\ell,d \ge 3$ & V Label Cover\tablefootnote{This conjecture, which is, in some sense, a hypergraph generalisation of the ``fish-shaped'' conjecture of Dinur et al. \cite{dinur2006approxcol}, was proposed in \cite{BG2021quest}.} & \cite{BG2021quest}  \\
    \hline
    $\PCSP(\urel{c}{q+c}{cq}, \urel{d}{cq}{cq})$ & $q,c,d \ge 2$ & --- & \cite{GL2018balancedrainbow} \\
    \hline
    $\PCSP(\urel{c}{q+d}{cq}, \urel{d}{cq}{cq})$ & $q,c,d \ge 2$ & --- & \cite{GS2018rainbow}\tablefootnote{They prove it only for even $c$ and $d=2$, but with slight modifications to their proof, it is possible to generalise it to any $c,d \ge 2$.} \\
    \hline
    $\PCSP(\urel{c}{k}{\ell}, \urel{d}{\ell}{\ell})$ & $2 \leq \ell < c(k - 1)$ & Rich 2-to-1 & This work \\\hline
    \end{tabular}
\end{table}

\section{Preliminaries}
\paragraph*{Notation}
We write $[n]$ for the set $\{1, \dots, n\}$ and $\mathbb N$ for the set of
positive integers. Given a set $A$, any element of $A^n$ is an \emph{$n$-ary
tuple}. A tuple $\tuple x \in A^n$ can be treated as a function $\tuple x\colon
[n] \to A$. Conversely, we sometimes treat any function from a finite set to $A$
as a tuple, implicitly enumerating the elements of this finite set. The $i$-th
entry of a tuple $\tuple x \in A^n$ is denoted by $x_i$.

A (constraint) \emph{language} is a set $\sigma$ whose every member
$R\in\sigma$, called a \emph{relation symbol}, is assigned a positive integer
$\ar(R)$, called the \emph{arity} of $R$. For $\sigma=\{R_1,\ldots,R_n\}$, a
(relational) \emph{$\sigma$-structure} is a tuple $\mathbb A = (A, R_1^\mathbb
A,\ldots, R_k^\mathbb A)$, where $A$ is a set and $R_i^\mathbb A \subseteq
A^{\ar(R_i)}$, for each $i\in[k]$. The set $A$ is called the \emph{domain} of
$\mathbb A$ and, for every $i \in[k]$, $R_i^\mathbb A$ is called the
\emph{interpretation of $R_i$ in $\mathbb A$} or simply a \emph{relation}. In
this paper, all structures are to be assumed finite, unless stated otherwise.

For two $\sigma$-structures $\mathbb A$ and $\mathbb B$, a \emph{homomorphism}
from $\mathbb A$ to $\mathbb B$ is a map $h\colon\mathbb A \to \mathbb B$ that
preserves relations.

\paragraph*{MMSNP}

Let $\tau$ be a fixed language and $\mathcal F$ be a finite family of pairs
$(\mathbb F, f)$, where $\mathbb F$ is a finite $\tau$-structure, and $f\colon
F\to [c]$ for some $c\in\mathbb N$. The set $[c]$ is called the \emph{colour
set} of $\mathcal F$ and each $(\mathbb F,f)\in\mathcal F$ is called a
\emph{forbidden pattern} of $\mathcal F$.

A \emph{$c$-expansion} (or a \emph{$c$-colouring}) of a $\tau$-structure
$\mathbb I$ is a pair $(\mathbb I, g)$ where $g\colon I \to [c]$. If the number
of colours is not important, we will just write an \emph{expansion} (or a
\emph{colouring}). Conversely, the \emph{reduct} of $(\mathbb I, g)$ is the
$\tau$-structure $\mathbb I$.

We say that an expansion $(\mathbb I, g)$ is \emph{$\mathcal F$-free} if every
$(\mathbb F, f)\in\mathcal F$ and every homomorphism $h$ from $\mathbb F$ to
$\mathbb I$ satisfies $f\neq g\circ h$.

\begin{definition}[MMSNP]
    Let $\tau$ be a constraint language. For a finite family $\mathcal F$ of
    forbidden patterns over $\tau$, the problem $\MMSNP(\mathcal F)$ comes in
    two flavours:
    \begin{description}
        \item[(decision)] given a $\tau$-structure $\mathbb I$, output \yes{} if
            $\mathbb I$ has an $\mathcal F$-free expansion, and output \no{}
            otherwise;
        \item[(search)] given a $\tau$-structure $\mathbb I$, output an
            $\mathcal F$-free expansion.
    \end{description}
\end{definition}
The decision MMSNP is poly-time equivalent to the search MMSNP: it follows from
the poly-time equivalence of MMSNP and precoloured MMSNP~\cite[Corollary
6.10]{bodirsky_madelaine_mottet}. For the sake of brevity, we will write
$\mathcal F$ instead of $\MMSNP(\mathcal F)$ to denote the decision problem
associated with $\mathcal F$.

Let $(\mathbb I, g)$ be a $c$-expansion of $\mathbb I$. Given a mapping $r\colon
[c] \to [d]$, any $d$-expansion $(\mathbb I, h)$ that satisfies $r \circ g = h$
is called an \emph{$r$-preimage} of $(\mathbb I, g)$.

Let $\mathcal F$ and $\mathcal G$ be two families of forbidden patterns over the
same constraint language with colour sets $[c]$ and $[d]$, respectively. A
mapping $r\colon[c]\to[d]$ is a \emph{recolouring} from $\mathcal F$ to
$\mathcal G$, if no forbidden pattern of $\mathcal G$ has a $\mathcal F$-free
$r$-preimage. We write $\mathcal F\to \mathcal G$ if there is a recolouring from
$\mathcal F$ to $\mathcal G$.

\begin{definition}[PMMSNP]
    Let $\tau$ be a constraint language. For finite families of forbidden
    patterns $\mathcal F$ and $\mathcal G$ over $\tau$ such that $\mathcal F\to
    \mathcal G$, the problem $\PMMSNP(\mathcal F, \mathcal G)$ comes in two
    flavours:
    \begin{description}
        \item[(decision)] given a $\tau$-structure $\mathbb I$, output \yes{} if
            $\mathbb I$ has an $\mathcal F$-free expansion, and output \no{} if
            $\mathbb I$ even does not have a $\mathcal G$-free expansion;
        \item[(search)] given a $\tau$-structure $\mathbb I$ promised to have an
            $\mathcal F$-free expansion, output a $\mathcal G$-free expansion of
            $\mathbb I$.
    \end{description}
\end{definition}

\paragraph*{CSP}

It was already mentioned that every problem in MMSNP is poly-time equivalent to
a \emph{Constraint Satisfaction Problem} (CSP)~\cite{federvardi1998}.

\begin{definition}
    Let $\sigma$ be a constraint language. For a $\sigma$-structure $\mathbb B$,
    the problem $\CSP(\mathbb B)$ comes in two flavours:
    \begin{description}
        \item[(decision)] given a $\sigma$-structure $\mathbb I$, output \yes{}
            if $\mathbb I \to \mathbb B$, and output \no{} if $\mathbb I
            \not\to\mathbb B$;
        \item[(search)] given a $\sigma$-structure $\mathbb I$, output a
            homomorphism from $\mathbb I$ to $\mathbb B$.
    \end{description}
\end{definition}
\noindent The decision CSP is poly-time equivalent to the search
CSP~\cite[Corollary~4.9]{BJK05}.\@

\begin{definition}[PCSP]
    Let $\sigma$ be a relational language. For any two $\sigma$-structures
    $\mathbb A$ and $\mathbb B$ such that $\mathbb A \to \mathbb B$, the problem
    $\PCSP(\mathbb A, \mathbb B)$ comes in two flavours:
    \begin{description}
        \item[(decision)] given a $\sigma$-structure $\mathbb I$, output \yes{}
            if $\mathbb I \to \mathbb A$, and output \no{} if $\mathbb I
            \not\to\mathbb B$;
        \item[(search)] given a $\sigma$-structure $\mathbb I$ promised to admit
            a homomorphism to $\mathbb A$, output a homomorphism from $\mathbb
            I$ to $\mathbb B$.
    \end{description}
\end{definition}
\noindent The search PCSP is at least as hard as the decision PCSP, however the
converse is not known. We provide algorithms for the search version, and proofs
of hardness for the decision version.

``Sandwiching'' is a trivial reduction technique between PCSPs:

\begin{observation}[See~\cite{BG21_pcsps,Bible}]\label{obs:sandwich}
    If $\mathbb A \to \mathbb B \to \mathbb C \to \mathbb D$, then
    $\PCSP(\mathbb A, \mathbb D)$ reduces to $\PCSP(\mathbb B, \mathbb C)$ in
    poly-time.
\end{observation}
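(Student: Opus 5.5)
The statement is Observation~\ref{obs:sandwich}: from $\mathbb A \to \mathbb B \to \mathbb C \to \mathbb D$ we must show that $\PCSP(\mathbb A, \mathbb D)$ reduces to $\PCSP(\mathbb B, \mathbb C)$. The plan is to use the \emph{identity} reduction: an instance $\mathbb I$ of $\PCSP(\mathbb A,\mathbb D)$ is passed unchanged as an instance of $\PCSP(\mathbb B,\mathbb C)$. This is clearly poly-time and preserves the language $\sigma$. The whole argument rests on one fact, stated first: homomorphisms compose, so $\to$ is transitive on $\sigma$-structures. If $g\colon \mathbb X\to\mathbb Y$ and $h\colon\mathbb Y\to\mathbb Z$ and $\tuple x\in R^{\mathbb X}$, then $g(\tuple x)\in R^{\mathbb Y}$, hence $h(g(\tuple x))\in R^{\mathbb Z}$.

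For the decision version, two checks are needed.
\begin{itemize}
\item \textbf{Completeness.} If $\mathbb I\to\mathbb A$, composing with $\mathbb A\to\mathbb B$ gives $\mathbb I\to\mathbb B$. So \yes{} instances map to \yes{} instances.
\item \textbf{Soundness.} Suppose $\mathbb I\not\to\mathbb D$. If we had $\mathbb I\to\mathbb C$, composing with $\mathbb C\to\mathbb D$ would give $\mathbb I\to\mathbb D$, a contradiction. Hence $\mathbb I\not\to\mathbb C$, and \no{} instances map to \no{} instances.
\end{itemize}
We should also check that the target problem is well defined, i.e.\ $\mathbb B\to\mathbb C$; this is given.

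For the search version, the same idea gives a reduction in the same direction. If $\mathbb I\to\mathbb A$ is promised, then $\mathbb I\to\mathbb B$ by the completeness step. So the solver for $\PCSP(\mathbb B,\mathbb C)$ returns some $h\colon\mathbb I\to\mathbb C$. We then output $f\circ h$, where $f\colon\mathbb C\to\mathbb D$ is a fixed homomorphism. Since $\mathbb C$ and $\mathbb D$ are fixed finite structures, $f$ can be hardcoded, so the post-processing is constant-time per element.

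I expect no real obstacle. The only point needing care is that finiteness, or at least the fixedness of the templates, is what makes hardcoding $f$ legitimate for the search version.
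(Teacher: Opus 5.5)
Your proposal is correct. It is the standard identity-reduction argument: completeness comes from composing $\mathbb I\to\mathbb A\to\mathbb B$, and soundness from the contrapositive of composing with $\mathbb C\to\mathbb D$. The paper states the observation without proof and defers to the cited literature, where the argument is this one, including your search-version post-composition with a fixed homomorphism $\mathbb C\to\mathbb D$.
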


In the case $\mathbb B = \mathbb C$, the problem $\PCSP(\mathbb B, \mathbb C)$
is just a CSP, namely $\CSP(\mathbb B) = \CSP(\mathbb C)$.

If $R \subseteq A^r$ and $S \subseteq B^r$, we sometimes abuse the notation and
write $\PCSP(R, S)$ in place of $\PCSP((A; R), (B; S))$ for convenience.

\section{Equivalence with PCSP}\label{sec:equivalence}

In this section, we show that the problem $\PMMSNP(\mathcal F,\mathcal G)$ is
poly-time equivalent to some finite-domain $\PCSP$. The reductions generalise
the ones of Feder and Vardi~\cite{federvardi1998}, where the finite-domain CSP
encoded all valid colourings of the reducts of the forbidden structures.
However, in the promise setting, the two MMSNP sentences could have different
sets of reducts, so applying Feder and Vardi's reduction to each MMSNP sentence
would give two finite structures over different constraint languages. We resolve
this issue.

\mainequivalence*

We first argue that $\mathcal F$ and $\mathcal G$ can be assumed to be in
certain \emph{normal form}. A structure $\mathbb F$ is \emph{biconnected} if for
every $x\in F$, the substructure of $\mathbb F$ induced on $F\setminus\{x\}$ is
not isomorphic to the disjoint union of two structures. A family $\mathcal F$ is
in \emph{normal form} if every $\mathbb F\in \mathcal F$ is biconnected and if
$\mathcal F$ is closed under taking quotients, i.e., identifying any two
elements in any $\mathbb F\in \mathcal F$ results in a structure from $\mathcal
F$. The following result allows us to assume the normal form without loss of
generality.

\begin{lemma}[Proposition 3.3 and Lemma 4.4 in~\cite{bodirsky_madelaine_mottet}]\label{lem:normal_form}
    Every MMSNP problem is a finite disjunction of MMSNP problems in normal form.
\end{lemma}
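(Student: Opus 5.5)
The plan is to start from a single forbidden pattern problem. By \cite[Corollary 13]{madelaine_forbidden_patterns} every MMSNP sentence is a finite union of such problems, and this is the only source of the disjunction in the statement. I would then apply two \emph{normalising operations} that preserve the problem up to equivalence, and iterate them until both normal form conditions hold. Each operation replaces $\mathcal F$ by a family $\mathcal F'$, possibly over a larger colour set, such that an input $\mathbb I$ has an $\mathcal F$-free expansion if and only if it has an $\mathcal F'$-free expansion. Since we only care about the \yes{}-instances of the decision problem, this suffices.

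\emph{Quotient closure.} For every $(\mathbb F,f)\in\mathcal F$ and every two elements $u,v$ with $f(u)=f(v)$, I add the quotient obtained by identifying $u$ and $v$, with the induced colouring. Pairs with $f(u)\neq f(v)$ give no coloured quotient and are ignored. The family stays finite because quotients have fewer elements. The set of free expansions does not change. Indeed, the original patterns are still present. Conversely, a colour-respecting homomorphism from a quotient composed with the quotient map is a colour-respecting homomorphism from the original pattern.

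\emph{Splitting at a cut element.} Suppose $(\mathbb F,f)$ is not biconnected. Then some $x\in F$ has $F\setminus\{x\}$ equal to a disjoint union of nonempty parts, so $\mathbb F$ is obtained by gluing $\mathbb A$ and $\mathbb B$ at $x$, each with fewer elements than $\mathbb F$. Let $i=f(x)$.
\begin{itemize}
\item I split colour $i$ into two new colours $i^A$ and $i^B$. The intended meaning of $i^A$ is ``coloured $i$, and not the image of $x$ under any colour-respecting homomorphism from $(\mathbb A,f|_A)$''; symmetrically for $i^B$ and $(\mathbb B,f|_B)$.
\item Every other pattern of $\mathcal F$ is replaced by all of its lifts, i.e.\ all ways of refining each occurrence of colour $i$ to $i^A$ or $i^B$.
\item $(\mathbb F,f)$ itself is replaced by all lifts of $(\mathbb A,f|_A)$ with $x$ coloured $i^A$, together with all lifts of $(\mathbb B,f|_B)$ with $x$ coloured $i^B$.
\end{itemize}
To check equivalence, take an $\mathcal F$-free colouring $g$. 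Each $g$-coloured-$i$ element $v$ cannot be the image of $x$ under both an $\mathbb A$-homomorphism and a $\mathbb B$-homomorphism, since otherwise gluing them gives a forbidden copy of $\mathbb F$. Hence $v$ can be given the appropriate refined colour. Conversely, merging $i^A$ and $i^B$ back to $i$ turns any $\mathcal F'$-free colouring into an $\mathcal F$-free one. Patterns that are disconnected (possible for general MMSNP sentences) are handled in the same way, and connected components can be split off into further disjuncts when needed.

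The main obstacle I expect is \emph{termination}: the two operations interact. Splitting produces smaller patterns, but taking lifts multiplies their number. Quotients of the new pieces may fail to be biconnected again, and recolouring enlarges the colour set. I would therefore argue by well-founded induction on the multiset of sizes of non-biconnected patterns, ordered by the multiset extension of $<$ on $\mathbb N$. Taking lifts preserves sizes. A split replaces one pattern of size $n$ by finitely many patterns of size strictly less than $n$. Quotient closure adds only strictly smaller patterns. Hence the multiset strictly decreases at each step of ``split, then reclose under quotients'', the procedure stops, and the final family is biconnected and closed under quotients. This is the normal form required by the statement.
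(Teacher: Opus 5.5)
The paper does not prove this lemma; it imports it from Bodirsky, Madelaine and Mottet. Your two operations are individually sound. Quotient closure preserves the free expansions, and your check of the splitting step (the Feder--Vardi fresh-colour trick) is correct in both directions. The gap is termination, and it is not only a matter of the chosen measure.

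\textbf{The measure does not decrease.} Splitting at colour $i$ replaces every other pattern having $k\ge 1$ elements of colour $i$ by $2^k$ lifts of the same size. For instance, take the paths $a$--$x$--$b$ coloured $(1,1,1)$ and $(2,1,1)$. Splitting the second at $x$ turns the first into eight non-biconnected paths of the same size, so the multiset goes up.

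\textbf{The procedure cannot terminate at all.} This is the fatal point. Your procedure never deletes biconnected patterns; it only lifts them. But their coloured quotients need not be biconnected, and your reclosing step feeds those quotients back in. Take a monochromatic $4$-cycle with vertices $0,1,2,3$ in cyclic order. Identifying $0$ and $2$ gives a monochromatic path with two edges whose middle vertex is a cut vertex. After you split that path, the lifts of the $4$-cycle still include constantly coloured ones. Reclosing therefore reinserts constantly coloured two-edge paths, of the same size as the one just removed, and this repeats with ever finer colours. Under the literal definitions used here, a family containing a biconnected pattern with a non-biconnected coloured quotient can never be both biconnected and closed under quotients. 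So on this input your procedure never reaches normal form.

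\textbf{What is missing.} You need either a mechanism for deleting such patterns, or a closure notion compatible with biconnectivity. Deletion happens to work for the $4$-cycle, since its quotient $\mathbb K_2$ maps into it, but you give no general argument. The more realistic repair is to read closure in a weaker sense: every coloured quotient of a member contains an injective copy of some member.
\begin{itemize}
\item A non-biconnected quotient is then handled by splitting it, since each of its lifts contains one of the pieces injectively.
\item This property survives later colour refinements and later splittings of the pieces, so a quotient never has to be revisited.
\item Treating sizes from the largest down then terminates, because each stage creates only strictly smaller patterns.
\end{itemize}
To avoid the blow-up caused by lifting, perform the splits on partially coloured conjuncts of the MMSNP sentence. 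There a fresh monadic predicate leaves the other conjuncts untouched, so each split replaces one conjunct by two strictly smaller ones. Expand to full colourings only at the end; this does not affect biconnectivity.

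\textbf{A smaller definitional point.} The paper phrases biconnectivity via the substructure induced on $F\setminus\{x\}$. Under that phrasing, a tuple containing $x$ together with elements of both components of $F\setminus\{x\}$ prevents writing $\mathbb F$ as a gluing of $\mathbb A$ and $\mathbb B$ at $x$. Your gluing of an $\mathbb A$-homomorphism with a $\mathbb B$-homomorphism then need not be a homomorphism from $\mathbb F$. You need the gluing (incidence-graph) notion of a cut vertex.
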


Further in this section, let $\mathcal F = \mathcal F_1\vee\dots\vee\mathcal
F_m$ and $\mathcal G=\mathcal G_1\vee\dots\vee\mathcal G_n$ be two disjunctions
of MMSNPs in normal form such that $\mathcal F\to \mathcal G$. Denote the colour
set of each $\mathcal F_i$ (resp. $\mathcal G_j$) by $[c_i]$ (resp. $[d_j]$) for
some $c_i \in \mathbb N$ (resp. $d_j \in \mathbb N$).

\begin{observation}\label{obs:containment}
    For every $i\in[m]$ there is $j\in[n]$ such that $\mathcal F_i\to \mathcal
    G_j$.
\end{observation}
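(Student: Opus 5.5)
We argue by contradiction, using suitable witness structures: for each $\mathcal F_i$ we exhibit a structure that has an $\mathcal F_i$-free expansion but no $\mathcal G_j$-free expansion for any $j$ with $\mathcal F_i\not\to\mathcal G_j$. Since $\mathcal F\to\mathcal G$ is meant as containment of the problems (every \yes{} instance of $\mathcal F$ is a \yes{} instance of $\mathcal G$), such a structure gives the contradiction. For disjunctions, the \yes{} instances of $\mathcal F$ are exactly the structures admitting an $\mathcal F_i$-free expansion for some $i$. So fix $i$ and suppose that $\mathcal F_i\not\to\mathcal G_j$ for every $j\in[n]$.

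For each $j$, the characterisation of containment between single MMSNPs (recalled in the introduction, via~\cite{madelaine_containment,federvardi1998}) shows that $\mathcal F_i\not\to\mathcal G_j$ yields a finite structure $\mathbb I_j$ with an $\mathcal F_i$-free expansion and no $\mathcal G_j$-free expansion. The naive candidate is the disjoint union $\mathbb I_1\sqcup\dots\sqcup\mathbb I_n$. It has an $\mathcal F_i$-free expansion, since forbidden patterns are connected and a union of $\mathcal F_i$-free expansions is $\mathcal F_i$-free. The problem is that it might still admit a $\mathcal G_j$-free expansion, because $\mathcal G_j$-freeness on one component says nothing about the other components. What we need instead is a single structure $\mathbb I$ such that, for every $j$ simultaneously, no $\mathcal G_j$-free expansion exists.

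To achieve this, I would work with the canonical structure. Consider the structure $\mathbb U$ whose expansions encode all $\mathcal F_i$-free colourings, that is, the Feder--Vardi recolouring characterisation: $\mathcal F_i\to\mathcal G_j$ iff there is a map $r\colon[c_i]\to[d_j]$ with no forbidden pattern of $\mathcal G_j$ having an $\mathcal F_i$-free $r$-preimage. The standard proof of the ``only if'' direction builds, via the Sparse Incomparability Lemma / high-girth constructions, a structure $\mathbb I$ that has an $\mathcal F_i$-free expansion and such that every $[d_j]$-colouring of $\mathbb I$ factors, on some $\mathcal F_i$-free coloured copy, through a map $r\colon[c_i]\to[d_j]$; a Ramsey-type argument supplies the copy. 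Applying this to every $r\colon[c_i]\to[d_j]$ for every $j$ at once, which is finitely many maps, each of which fails to be a recolouring, I would build one Ramsey-rich structure $\mathbb I$ carrying an $\mathcal F_i$-free expansion $g$ such that any $d_j$-colouring $h$ of $\mathbb I$ agrees with $r\circ g$ on a copy containing a $\mathcal G_j$-forbidden pattern whose $r$-preimage is $\mathcal F_i$-free. Then $h$ is not $\mathcal G_j$-free. Hence $\mathbb I$ is a \yes{} instance of $\mathcal F$ but not of $\mathcal G$, contradicting $\mathcal F\to\mathcal G$.

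The main obstacle is this uniform Ramsey step: obtaining one $\mathbb I$ that is simultaneously a witness against all $j$. I would handle it by iterating the product/Ramsey construction of~\cite{kun2013,bodirsky_madelaine_mottet}, using the normal form from \cref{lem:normal_form} (biconnectedness and closure under quotients) so that colour-factoring on sub-copies is valid. If this route proves delicate, a simpler alternative is to observe that in the paper's setting $\mathcal F\to\mathcal G$ may be defined directly via recolourings between the disjunctions, in which case the observation follows by restricting a recolouring to the $\mathcal F_i$ component.
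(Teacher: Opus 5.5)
Your overall strategy is the right one. You fix $i$, suppose $\mathcal F_i\not\to\mathcal G_j$ for every $j$, and take witnesses $\mathbb I_j$ that have an $\mathcal F_i$-free expansion but no $\mathcal G_j$-free one. You then combine them into a single \yes{} instance of $\mathcal F$ that is not a \yes{} instance of $\mathcal G$.

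The gap is in the combination step. You reject the disjoint union for a reason that is false. The Ramsey construction you put in its place is only announced (``I would build\dots'', ``I would handle it by iterating\dots'') and never carried out. So as written, the proof stops exactly at the point you call the main obstacle.

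In fact there is no obstacle. Suppose $h$ were a $\mathcal G_j$-free expansion of $\mathbb U=\mathbb I_1\sqcup\dots\sqcup\mathbb I_n$. Then $h|_{I_j}$ would be a $\mathcal G_j$-free expansion of $\mathbb I_j$. Indeed, any homomorphism $e\colon\mathbb G\to\mathbb I_j$ with $f=h\circ e$ for some $(\mathbb G,f)\in\mathcal G_j$ composes with the inclusion $\mathbb I_j\hookrightarrow\mathbb U$ to give a forbidden occurrence in $(\mathbb U,h)$. Hence $\mathbb U$ has no $\mathcal G_j$-free expansion for any $j$ at all.

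Your remark that ``$\mathcal G_j$-freeness on one component says nothing about the other components'' reverses the logic. Freeness is inherited by substructures (MMSNP is closed under inverse homomorphisms). So a single component without a $\mathcal G_j$-free expansion rules out $\mathcal G_j$-free expansions of the whole union. Combine this with your correct observation that $\mathbb U$ has an $\mathcal F_i$-free expansion because the patterns of $\mathcal F_i$ are connected, and the contradiction follows at once. This is presumably why the paper records the statement as an observation without proof.

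Your fallback is also not available. The paper defines recolourings only between single families of forbidden patterns. The passage to normal form in Lemma~\ref{lem:normal_form} in general changes the colour sets, so there is no recolouring ``of the disjunction'' to restrict to a component. Read $\mathcal F\to\mathcal G$ for the disjunctions as containment of the problems, as you did at the start, and finish with the disjoint union.
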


We shall define $\mathbb S$ and $\mathbb T$ via a generalisation of the
construction given in the \nameref{sec:intro}. Let $\tau$ be the language of
both $\mathcal F$ and $\mathcal G$. For each $j\in[n]$, let $\mathcal G^*_j$
denote the set of all reducts of forbidden patterns in $\mathcal G_j$; for each
$i\in[m]$, define $\mathcal F_i^*$ similarly. Denote by $\sigma$ the language
which contains, for every $j \in [n]$ and every $\mathbb G \in \mathcal G^*_j$,
a $|G|$-ary symbol $R_\mathbb G$. We also expand $\sigma$ with an auxiliary
binary symbol $\sim$. This will be the constraint language of both $\mathbb S$
and $\mathbb T$.

Fix $j \in [n]$; let $\mathbb T_j$ be the $\sigma$-structure with domain
$[d_j]$, where

\begin{itemize}
    \item for every $\mathbb G\in \mathcal G_j^*$, the relation $R_\mathbb
        G^{\mathbb T_j} \coloneqq \left\{ \bar t\colon G\to [d_j] \mid (\mathbb
        G, \bar t)\not\in\mathcal G_j \right\}$;
    \item for every $j'\in[n]\setminus\{j\}$ and every $\mathbb G\in \mathcal
        G_{j'}^*$, the relations $R_\mathbb G^{\mathbb T_j} \coloneqq [d_j]^G$
        and $\sim^{\mathbb T_j}\coloneqq [d_j]^2$ are full.
\end{itemize}

Fix $i \in [m]$; by \cref{obs:containment}, $\mathcal F_i \to \mathcal G_j$ for
some $j \in [n]$. Let $\mathbb S_i$ be the $\sigma$-structure with domain
$[c_i]$, where

\begin{itemize}
    \item for every $\mathbb G\in \mathcal G_j^*$, the relation $R_\mathbb
        G^{\mathbb S_i} \coloneqq \{\bar s\colon G\to [c_i] \mid (\mathbb G,
        \bar s) \text{ is $\mathcal F_i$-free}\}$;
    \item for every $j'\in[n]\setminus\{j\}$ and every $\mathbb G\in \mathcal
        G_{j'}^*$, the relations $R_\mathbb G^{\mathbb S_i} \coloneqq [c_i]^G$
        and $\sim^{\mathbb S_i}\coloneqq [c_i]^2$ are full.
\end{itemize}
Finally, we define $\mathbb S\coloneqq\mathbb S_1\sqcup\dots\sqcup\mathbb S_m$
and $\mathbb T\coloneqq\mathbb T_1\sqcup\dots\sqcup\mathbb T_n$, where $\sqcup$
stands for the disjoint union. We will argue that $\PMMSNP(\mathcal F, \mathcal
G)$ is poly-time equivalent to $\PCSP(\mathbb S, \mathbb T)$, concluding the
proof of \cref{th:mainequivalence}.

\begin{claim}\label{claim:sigma}
    $\PMMSNP(\mathcal F,\mathcal G)$ poly-time reduces to $\PCSP(\mathbb
    S,\mathbb T)$.
\end{claim}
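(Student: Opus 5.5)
The reduction will be the natural generalisation of the Feder--Vardi construction. Given a $\tau$-structure $\mathbb I$, we build a $\sigma$-structure $\mathbb I'$ with the same domain $I$. For every $j\in[n]$, every $\mathbb G\in\mathcal G_j^*$ and every homomorphism $e\colon\mathbb G\to\mathbb I$, we put the tuple $e$ (viewed as a map $G\to I$) into $R_{\mathbb G}^{\mathbb I'}$. We also set $\sim^{\mathbb I'}\coloneqq I^2$. Since $\mathcal G$ is fixed, each $|G|$ is a constant, so there are at most $|I|^{|G|}$ candidate maps to check. Hence $\mathbb I'$ is computable in polynomial time. The purpose of the full relation $\sim$ is to force any homomorphism $\mathbb I'\to\mathbb S$ or $\mathbb I'\to\mathbb T$ to land inside a single component $\mathbb S_i$ or $\mathbb T_j$. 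This works because $\sim$ is full inside each component but contains no pair across two components of the disjoint union.

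\emph{Completeness.} Suppose $\mathbb I$ has an $\mathcal F$-free expansion. Reading $\mathcal F$ as a disjunction, this means $\mathbb I$ has an $\mathcal F_i$-free expansion $(\mathbb I,g)$ for some $i\in[m]$. Let $j$ be the index fixed for $i$ via \cref{obs:containment} in the definition of $\mathbb S_i$. I claim that $g$ is a homomorphism $\mathbb I'\to\mathbb S_i\subseteq\mathbb S$. The relation $\sim$ and all relations $R_{\mathbb G}$ with $\mathbb G\in\mathcal G_{j'}^*$, $j'\neq j$, are full in $\mathbb S_i$, so nothing needs to be checked for them. For $\mathbb G\in\mathcal G_j^*$ and a tuple $e\in R_{\mathbb G}^{\mathbb I'}$, we must show that $(\mathbb G,g\circ e)$ is $\mathcal F_i$-free. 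Suppose instead that some $(\mathbb F,f)\in\mathcal F_i$ admits a homomorphism $h\colon\mathbb F\to\mathbb G$ with $f=g\circ e\circ h$. Then $e\circ h\colon\mathbb F\to\mathbb I$ is a homomorphism with $f=g\circ(e\circ h)$, which contradicts the $\mathcal F_i$-freeness of $(\mathbb I,g)$.

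\emph{Soundness.} Suppose $\mathbb I'\to\mathbb T$ via some $t$. Because $\sim^{\mathbb I'}$ is full and $\sim$ relates no two elements from different components of $\mathbb T$, the image of $t$ lies in one component $\mathbb T_j$. So $t\colon I\to[d_j]$, and I claim that $(\mathbb I,t)$ is $\mathcal G_j$-free. Suppose $(\mathbb G,f)\in\mathcal G_j$ and $e\colon\mathbb G\to\mathbb I$ is a homomorphism with $t\circ e=f$. The reduct $\mathbb G$ lies in $\mathcal G_j^*$, so $e\in R_{\mathbb G}^{\mathbb I'}$ by construction. Since $t$ is a homomorphism, $t\circ e\in R_{\mathbb G}^{\mathbb T_j}$, which means $(\mathbb G,t\circ e)\notin\mathcal G_j$. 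But $(\mathbb G,t\circ e)=(\mathbb G,f)\in\mathcal G_j$, a contradiction. Hence $\mathbb I$ has a $\mathcal G_j$-free expansion, and therefore a $\mathcal G$-free one. Taking the contrapositive, \no{} instances of the PMMSNP are mapped to \no{} instances of the PCSP.

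The step I expect to need the most care is the bookkeeping around the symbols. The language $\sigma$ has one symbol for each pair $(j,\mathbb G)$, so identical reducts coming from different $\mathcal G_{j'}$ give distinct symbols. Several patterns may also share a reduct, and we rely on $R_{\mathbb G}^{\mathbb T_j}$ excluding exactly the forbidden colourings of that reduct. Finally, the $\sim$ trick is needed because $\mathbb I'$ may be disconnected, and without it different parts of $\mathbb I'$ could be sent to different components of $\mathbb S$ or $\mathbb T$. Once these points are set up, both directions are just composition of homomorphisms, as above.
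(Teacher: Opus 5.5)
Your proposal is correct and follows the paper's proof. It uses the same construction ($R_{\mathbb G}$ interpreted as $\Hom(\mathbb G,\mathbb I)$, with $\sim$ full), the same completeness argument showing that an $\mathcal F_i$-free colouring is a homomorphism into $\mathbb S_i$, and the same use of the full $\sim$ to confine a homomorphism to a single $\mathbb T_j$ for soundness; you simply make explicit the composition-of-homomorphisms steps that the paper leaves to the reader.
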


\begin{proof}
    Given an instance of $\PMMSNP(\mathcal F, \mathcal G)$ --- a
    $\tau$-structure $\mathbb X$ --- we construct an instance of $\PCSP(\mathbb
    S, \mathbb T)$ --- a $\sigma$-structure which we denote by $\sigma(\mathbb
    X)$. Namely, let $\sigma(\mathbb X)$ be the structure with the same domain
    as $\mathbb X$ in which every relation $R_\mathbb G^{\sigma(\mathbb X)}
    \coloneqq \Hom(\mathbb G, \mathbb X)$ is defined as the set of all
    homomorphisms from $\mathbb G$ to $\mathbb X$. The relation
    $\sim^{\sigma(\mathbb X)} \coloneqq X^2$ is full.
    
    If $\mathbb X$ is a \yes{} instance of $\MMSNP(\mathcal F)$, then there is
    $s\colon X\to [c_i]$ for some $i \in [m]$ such that $(\mathbb X, s)$ is
    $\mathcal F_i$-free. It follows from the construction of $\sigma(\mathbb X)$
    and $\mathbb S_i$ that $s$ is a homomorphism from $\sigma(\mathbb X)$ to
    $\mathbb S_i$.

    Conversely, if $h\colon X\to T$ is a homomorphism from $\sigma(\mathbb X)$
    to $\mathbb T$, then the image of $h$ is contained in some $\mathbb T_j$, as
    $\sim^{\sigma(\mathbb X)}=X^2$. It follows from the construction of
    $\sigma(\mathbb X)$ that $h$ colours every copy of every $\mathbb
    G\in\mathcal G_j$ in an allowed way, so $(\mathbb X, h)$ is $\mathcal
    G$-free.
\end{proof}

For the converse reduction, we make use of the Sparse Incomparability Lemma
which allows us to assume no short cycles in structures on input to
$\PCSP(\mathbb S, \mathbb T)$. We say that $\mathbb S$ has girth $>k$ if for any
choice of $\ell\leq k$ relational tuples $\bar s_1,\ldots,\bar s_\ell$ of
arities $r_1,\ldots,r_\ell$ in $\mathbb S$, the total number of elements
appearing in the tuples is greater than $\sum_{i=1}^\ell (r_i-1)$. That is, $k$
or fewer relational tuples never form a cycle.

\begin{lemma}[Sparse Incomparability Lemma~\cite{federvardi1998, kun2013}]\label{lem:kun_girth}
    Let $\sigma$ be a finite relational language, and $k,t \in \mathbb N$. For
    every $\sigma$-structure $\mathbb I$ there exists a poly-time constructible
    $\sigma$-structure $\mathbb I'$ with girth $>k$ such that for every
    $\sigma$-structure $\mathbb T$ of size $<t$ the equivalence $\mathbb I \to
    \mathbb T \Leftrightarrow \mathbb I' \to \mathbb T$ holds. Moreover,
    $\mathbb I'\to \mathbb I$.
\end{lemma}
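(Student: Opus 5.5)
The plan is the classical blow-up argument: replace every element of $\mathbb I$ by a large cloud of copies and lay down sparse random tuples between clouds. Short cycles will be rare enough to delete, while every large "box" of copies will still contain a tuple. We then derandomise the construction, following Kun~\cite{kun2013}.

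Fix $N$, polynomial in $|I|$ with exponent depending on $k,t,\sigma$. Let the domain of $\mathbb I'$ be $I\times[N]$, and let $\pi\colon I\times[N]\to I$ be the projection. For every $R\in\sigma$ of arity $r$ and every tuple $(x_1,\dots,x_r)\in R^{\mathbb I}$, include each lifted tuple $((x_1,a_1),\dots,(x_r,a_r))$ in $R^{\mathbb I'}$ independently with probability $p=N^{-(r-1)+\varepsilon}$, for a small $\varepsilon>0$ with $\varepsilon k<1$. Afterwards, delete one tuple from every cycle of length $\le k$. Since $\pi$ maps every surviving tuple onto a tuple of $\mathbb I$, we get $\pi\colon\mathbb I'\to\mathbb I$, which proves the ``moreover'' part. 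Composing with $\pi$ also gives $\mathbb I\to\mathbb T\Rightarrow\mathbb I'\to\mathbb T$.

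For the converse, suppose $g\colon\mathbb I'\to\mathbb T$ with $|T|<t$. For each $x\in I$, let $f(x)\in T$ be a colour taken by at least $N/t$ of the copies $\{x\}\times[N]$, and let $X_x$ be that colour class. The key density property is: for every tuple $(x_1,\dots,x_r)\in R^{\mathbb I}$ and all sets $X_{x_i}\subseteq\{x_i\}\times[N]$ of size $\ge N/t$, the relation $R^{\mathbb I'}$ contains a tuple in $X_{x_1}\times\dots\times X_{x_r}$. Granting this, $g$ maps such a tuple to $(f(x_1),\dots,f(x_r))$, which must therefore lie in $R^{\mathbb T}$. Hence $f\colon\mathbb I\to\mathbb T$.

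To verify the density property and the girth for the random construction, I will do the following counts.
\begin{itemize}
\item A fixed box has expected $\Theta(N^{1+\varepsilon})$ tuples. By Chernoff, it has fewer than half that many with probability $e^{-\Omega(N^{1+\varepsilon})}$. This beats the union bound over at most $2^{N|I|}$ choices of sets and polynomially many tuples of $\mathbb I$.
\item A cycle of $\ell\le k$ tuples has at most $\sum(r_i-1)$ distinct elements. So the expected number of such cycles is at most $\mathrm{poly}(|I|)\cdot N^{\varepsilon k}$, which is $o(N^{1+\varepsilon})$ because $\varepsilon k<1$.
\item By Markov's inequality, w.h.p.\ the deletions remove fewer tuples than any box contains, so the density property survives. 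After the deletions, $\mathbb I'$ has girth $>k$.
\end{itemize}
This yields a randomised polynomial-time construction.

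The main obstacle is making the construction deterministic, which is Kun's contribution. My first attempt would be to replace the random $r$-partite hypergraph between clouds by an explicit one that has girth $>k$ and the box-density property (a hypergraph expander-mixing property for sets of relative size $1/t$). Candidates are algebraic high-girth expanders, or derandomisation via small-bias/limited-independence spaces combined with the method of conditional expectations over the cycle count. Each tuple of $\mathbb I$ would use a separate copy of this gadget on the relevant clouds. One must also check that short cycles cannot arise across different tuples of $\mathbb I$ sharing elements; I would enforce this by routing each gadget through a common girth-$>k$ design, or by a further deletion argument. Since $k$, $t$ and $\sigma$ are fixed, $N$ and hence $\mathbb I'$ remain polynomial in $|I|$.
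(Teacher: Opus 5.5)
The paper does not prove this lemma; it imports it from Feder and Vardi (randomised construction) and Kun (derandomisation). Your first part is essentially the Feder--Vardi argument, and it is sound. The pigeonhole/box argument for $\mathbb I'\to\mathbb T\Rightarrow\mathbb I\to\mathbb T$ works, as does the projection $\pi$. So does the $\mathrm{poly}(|I|)\cdot N^{\varepsilon k}$ bound on collections violating the girth condition, since distinct relational tuples are included independently and a lifted tuple determines its base tuple. The Chernoff/Markov comparison also goes through with $N$ polynomial in $|I|$. One small repair: for unary symbols $p=N^{\varepsilon}>1$. Simply include all lifts, since a violating collection stays violating after discarding its unary tuples.

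The real problem is that this gives only a randomised Monte Carlo construction. Girth can be checked, but the box property quantifies over $2^{\Omega(N)}$ boxes. The statement, and its use in \cref{claim:tau}, needs a deterministic polynomial-time construction.

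That deterministic part is exactly what you leave open, and the route you sketch does not work as stated. Putting an explicit high-girth gadget on each tuple of $\mathbb I$ separately gives no control over cycles through several gadgets. A ``further deletion'' would need a quantitative bound that an uncoordinated placement does not give. For example, let $(x,y,z),(x,y,w)\in R^{\mathbb I}$ with $r=3$, and let both gadgets use the same labelling of the clouds. Then for every gadget tuple $(a,b,c)$, the lifts $((x,a),(y,b),(z,c))$ and $((x,a),(y,b),(w,c))$ span only $4=\sum_i(r_i-1)$ elements. So you must delete at least one tuple per gadget tuple, which is more than any box contains, and the argument that protected the box property collapses.

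Conditional expectations over a limited-independence space do not fix this either. They can push down the polynomial-size count of short cycles, but they cannot certify the box property. That property is a conjunction of exponentially many events, each needing exponentially small failure probability. It has to come from a structural expansion guarantee, robust enough to survive whatever you delete.

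A clean way to state what is needed: an explicit structure $\mathbb E$ on $[N]$ with one relation $E_{R,\bar x}$ for each $R\in\sigma$ and $\bar x\in R^{\mathbb I}$. You need $\mathbb E$ as a whole to have girth $>k$, and every $E_{R,\bar x}$ to meet every box with sides of size $\ge N/t$. Then set $R^{\mathbb I'}=\{((x_1,e_1),\dots,(x_r,e_r)) : \bar x\in R^{\mathbb I},\ \bar e\in E_{R,\bar x}\}$. This works because projecting to the $E$-coordinates sends distinct relational tuples to distinct ones without increasing the number of elements.

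Producing such high-girth expander relational structures deterministically in polynomial time is the nontrivial content of Kun's work. Your proposal does not supply it, so as written it proves only the randomised version of the lemma. You should either cite Kun for the derandomisation, as the paper does, or give such a construction.
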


Furthermore, we can assume that input structures are \emph{connected}, i.e.~not
isomorphic to a disjoint union of two structures. To see why this is the case,
observe that $\mathbb A \sqcup \mathbb B \to \mathbb S$ if and only if $\mathbb
A \to \mathbb S$ and $\mathbb B \to \mathbb S$. This implies that we can solve a
disconnected instance of $\PCSP(\mathbb S, \mathbb T)$ by solving its every
connected component. This means that $\mathbb I\to \mathbb S$ if and only if
$\mathbb I\to \mathbb S_i$, for some $i\in[m]$, and similarly $\mathbb
I\to\mathbb T\Leftrightarrow \mathbb I\to \mathbb T_j$, for some $j\in[n]$.
Since $\sim^{\mathbb T_j} = T_j^2$ is full, we can also assume that
$\sim^\mathbb I = I^2$ is full.

\begin{claim}\label{claim:tau}
    $\PCSP(\mathbb S,\mathbb T)$ poly-time reduces to $\PMMSNP(\mathcal
    F,\mathcal G)$.
\end{claim}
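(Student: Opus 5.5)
The plan is to reverse the construction of $\sigma(\mathbb X)$ from \cref{claim:sigma}: replace every tuple of the instance by a copy of the pattern it indexes. Let $\mathbb I$ be an instance of $\PCSP(\mathbb S,\mathbb T)$. Let $k$ exceed the size of every forbidden pattern in $\mathcal F$ and $\mathcal G$, and let $t>\max(|S|,|T|)$. By \cref{lem:kun_girth} and the remarks after it, we may assume that $\mathbb I$ is connected, has girth $>k$, and has full $\sim^{\mathbb I}$. Define a $\tau$-structure $\mathbb X$ with domain $I$ as follows: for every symbol $R_{\mathbb G}$ and every tuple $\bar u\colon G\to I$ in $R_{\mathbb G}^{\mathbb I}$, add the $\tau$-tuples of $\mathbb G$ pushed forward along $\bar u$. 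Girth $>1$ makes every $\bar u$ injective, so each $\bar u$ contributes an isomorphic copy of $\mathbb G$, which I call a \emph{gadget}. Girth $>2$ makes two distinct gadgets share at most one element. The decisive structural fact is the following.

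\emph{Key lemma.} Every homomorphism $h$ from a forbidden pattern $(\mathbb F,f)$ of some $\mathcal F_i$ or $\mathcal G_j$ into $\mathbb X$ has its image inside a single gadget. To prove it, factor $h$ through the quotient $\mathbb F/\ker h$. This quotient is again in the family, because the family is in normal form and hence closed under quotients; so it is biconnected, and it embeds injectively into $\mathbb X$. Its $\tau$-tuples are covered by gadgets. If two or more gadgets were needed, two cases arise. If the gadgets used form a cycle of length $\le |F|\le k$ in $\mathbb I$, this contradicts the girth. Otherwise the gadgets used are glued in a tree-like fashion along single shared vertices, and such a shared vertex would be a cut vertex of the biconnected injective image, which is also impossible.

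With the key lemma, the \no{} direction is immediate. Suppose $\mathbb X$ has a $\mathcal G_j$-free colouring $g$. For each tuple $\bar u\in R_{\mathbb G}^{\mathbb I}$ with $\mathbb G\in\mathcal G_j^*$, the colouring $g\circ\bar u$ of $\mathbb G$ is $\mathcal G_j$-free, so it lies in $R_{\mathbb G}^{\mathbb T_j}$. All other relations of $\mathbb T_j$ are full. Hence $g\colon\mathbb I\to\mathbb T_j\subseteq\mathbb T$, and this direction needs only the construction, not the key lemma.

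The \yes{} direction is where I expect the main obstacle. Suppose $s\colon\mathbb I\to\mathbb S_i$, and let $j$ be the index with $\mathcal F_i\to\mathcal G_j$ used to define $\mathbb S_i$. By the key lemma, any $\mathcal F_i$-pattern in $(\mathbb X,s)$ sits inside one gadget $\bar u(\mathbb G)$. Since gadgets share at most one element, the substructure there is exactly that copy of $\mathbb G$, so we obtain a homomorphism from the pattern into $(\mathbb G,s\circ\bar u)$. If $\mathbb G\in\mathcal G_j^*$, this contradicts $s\circ\bar u\in R_{\mathbb G}^{\mathbb S_i}$. The problem is gadgets for $\mathbb G\in\mathcal G_{j'}^*$ with $j'\neq j$: there $R_{\mathbb G}^{\mathbb S_i}$ is full, so $s$ may colour such a gadget with an $\mathcal F_i$-pattern.

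I would first try to repair this by changing which tuples become gadgets, rather than changing the templates. One option is to restrict the relations for other $j'$ in $\mathbb S_i$ to $\mathcal F_i$-free colourings as well. This keeps $\mathbb S\to\mathbb T$, since those relations are full in $\mathbb T_{j'}$ and $r\circ s$ lands in $\mathbb T_j$. Note that \cref{claim:sigma} still goes through under this change: there $s$ is an $\mathcal F_i$-free colouring of all of $\mathbb X$, so its restriction to any copy of any $\mathbb G$ is $\mathcal F_i$-free. If that modification is not allowed, the fallback is to build one $\tau$-structure $\mathbb X_j$ per $j$ using only $\mathcal G_j^*$-gadgets. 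One then argues via the disjunctive normal form and \cref{obs:containment} that a suitable combination of the $\mathbb X_j$ is a \yes{} instance whenever $\mathbb I\to\mathbb S$, and is a \no{} instance whenever $\mathbb I\not\to\mathbb T$.
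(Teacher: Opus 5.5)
Your diagnosis in the last paragraph is correct, and it exposes a flaw in the paper's own proof. Otherwise that proof follows your route exactly: Sparse Incomparability, one copy of $\mathbb G$ per tuple, normal form plus girth, and the same \no{} direction. In the \yes{} direction the paper only notes that a homomorphism $\mathbb I'\to\mathbb S_i$ colours the copies of $\mathcal G_j^*$-structures $\mathcal F_i$-free. It then concludes that $(\tau(\mathbb I),h)$ is $\mathcal F_i$-free, silently ignoring the copies of $\mathcal G_{j'}^*$-structures with $j'\neq j$, on which $\mathbb S_i$ imposes nothing.

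For the $\mathbb S$ actually defined in the paper, the claim can fail. Take digraphs and let $\mathcal F=\mathcal F_1=\mathcal G_1$ be $3$-colouring in normal form (monochromatic edges and loops forbidden). Let $\mathcal G_2$ have one colour and the loop as its only pattern. Then $\mathcal F_1\to\mathcal G_1$ and $\mathcal F_1\to\mathcal G_2$, so \cref{obs:containment} allows the construction to use $j=1$ for $\mathbb S_1$.
\begin{itemize}
\item $\PMMSNP(\mathcal F,\mathcal G)$ is in \P{}: answer \yes{} iff the input has no loop.
\item $\PCSP(\mathbb S,\mathbb T)$ is \NP-hard. Write a $3$-colouring instance with the edge symbol coming from $\mathcal G_1$ and $\sim$ full, and add one tuple of the loop symbol coming from $\mathcal G_2$. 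That tuple is harmless in $\mathbb S_1$, where the relation is full, and it rules out $\mathbb T_2$, where the relation is empty. What remains is exactly $3$-colouring.
\end{itemize}
So the statement cannot be proved as written unless $\P=\NP$. Your fallback with separate structures $\mathbb X_j$ is bound to fail and should be dropped.

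Your first repair is the right one. Shrink $R^{\mathbb S_i}_{\mathbb G}$ to the $\mathcal F_i$-free colourings for every $\mathbb G$ in every $\mathcal G^*_{j'}$. This keeps $\mathbb S_i\to\mathbb T_j$ via the recolouring, because the relations for $j'\neq j$ are full in $\mathbb T_j$ (not in $\mathbb T_{j'}$, as you wrote). It also leaves \cref{claim:sigma} intact by exactly the argument you give, and it removes the obstruction in the \yes{} direction. You thereby get \cref{th:mainequivalence} for a corrected $\mathbb S$, which is the statement worth having.

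One step of your key lemma still needs care, and the paper is equally silent on it. The girth condition says nothing about tuples supported on a single element, such as unary symbols or loops. A second gadget can place such a tuple on a vertex of the first gadget without creating a cut vertex. For example, an edge gadget on $(u,v)$ together with a one-element loop gadget on $u$ carries the biconnected pattern ``edge plus loop at its tail'' across two gadgets. Your cut-vertex argument does show that the vertex set of the image lies in one gadget. It does not justify your later assertion that the substructure there is exactly that copy of $\mathbb G$, and the \yes{} direction relies on that assertion. This has to be extracted from the precise normal form of Bodirsky, Madelaine and Mottet, or handled by a modified construction.
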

\begin{proof}
    Given an instance of $\PCSP(\mathbb S, \mathbb T)$ --- a $\sigma$-structure
    $\mathbb I$ --- we construct an instance of $\PMMSNP(\mathcal F, \mathcal
    G)$ --- a $\tau$-structure which we denote by $\tau(\mathbb I)$.
    
    We first obtain $\mathbb I'$ by applying \cref{lem:kun_girth} with
    parameters $k>\max\{|G| \mid \mathbb G\in \mathcal F_1 \cup\dots\cup \mathcal F_m\cup \mathcal G_1\cup\dots\cup  \mathcal G_n\}$ and $t >
    \max\{|S|, |T|\}$ and also taking the auxiliary relation $\sim^{\mathbb I'}
    = I'^2$ to be full. Denote by $\tau(\mathbb I)$ the $\tau$-structure with
    the same domain as $\mathbb I'$ which is obtained by replacing every $\bar t
    \in R_\mathbb G^{\mathbb I'}$ with a copy of $\mathbb G$ induced on $\bar
    t$, for every $\mathbb G \in \mathcal G_j^*$. Since all $\mathcal F_i$ and
    $\mathcal G_j$ are in normal form and since $\mathbb I'$ has large girth,
    every pair $\mathcal F_i \to \mathcal G_j$ has the following property: if
    there is a colouring $s$ of $\tau(\mathbb I)$ that leaves every copy of
    $\mathbb G$ $\mathcal F_i$-free (resp. $\mathcal G_j$-free) for each
    $\mathbb G\in \mathcal G_j^*$, then $(\tau(\mathbb I), s)$ is $\mathcal
    F_i$-free (resp. $\mathcal G_j$-free).
    
    If $\mathbb I \to \mathbb S_i$, then also $\mathbb I'\to \mathbb I \to
    \mathbb S_i$ by \cref{lem:kun_girth}. If $h$ is a homomorphism from $\mathbb
    I'$ to $\mathbb S_i$, then it colours all the copies of structures from
    $\mathcal G_j^*$ in $\tau(\mathbb I)$ such that each of them is $\mathcal
    F_i$-free, where $j$ is witnessing the containment $\mathcal F_i \to
    \mathcal G_j$. This implies that $(\tau(\mathbb I), h)$ is $\mathcal
    F_i$-free, and $\tau(\mathbb I)$ is a \yes{} instance of $\MMSNP(\mathcal
    F)$.

    Conversely, if $\tau(\mathbb I)$ is a \yes{} instance of $\MMSNP(\mathcal
    G)$, then $\tau(\mathbb I)$ has a $\mathcal G_j$-free expansion for some
    $j\in[n]$, witnessed by $s\colon I' \to [d_j]$. Since every copy of $\mathbb
    G$ in $\tau(\mathbb I)$ is $\mathcal G_j$-free, for each $\mathbb G
    \in\mathcal G_j^*$, the same mapping $s$ is a homomorphism from $\mathbb I'$
    to $\mathbb T_j$. By \cref{lem:kun_girth}, we then have $\mathbb I \to
    \mathbb T_j \to \mathbb T$.
\end{proof}

This concludes the proof of PMMSNP--PCSP correspondence.

\section{Forbidding monochromatic cliques}

In this section we provide a complete complexity classification for Promise
MMSNPs defined by forbidding monochromatic cliques. Recall that $\templ{c}{k}$
denotes the family of all $k$-cliques such that every two vertices in the same
clique have the same colour out of $\{1, \dots, c\}$.

\dichotomy*

\Cref{th:mainequivalence} asserts that we can study Promise CSPs defined by
suitable relations instead of Promise MMSNPs. Concretely, for $c, k, \ell \ge
1$, let $\urel{c}{k}{\ell} \subseteq [c]^\ell$ be the relation consisting of all
tuples $(a_1, \dots, a_\ell)$ such that $\forall{a \in [c]}:~\lvert\{i : a_i =
a\}\rvert < k$. Observe that $\urel{d}{\ell}{\ell}$ is isomorphic to
$\NAE[d]{\ell}$.

\Cref{th:mainequivalence} applied to $(\templ{c}{k}, \templ{d}{\ell})$ provides
the structures $\mathbb S$ and $\mathbb T$. Following the construction in
\cref{sec:equivalence} (in fact, even the reasoning similar to \cref{ex:AIP}
suffices), the reader may verify that these structures correspond to
$\urel{c}{k}{\ell}$ and $\urel{d}{\ell}{\ell}$ respectively --- the extra $\sim$
relation is full in both structures, and can be safely discarded without
affecting the complexity of $\PCSP(\mathbb S, \mathbb T)$.

\cliqueequivalence*

The rest of this section is organized as follows. First, we want to characterise
the containment $\templ{c}{k} \to \templ{d}{\ell}$ in terms of the relation
between the numbers $c, d, k$, and $\ell$. Next, the proof of the tractability
part of the dichotomy is presented. Finally, we prove the hardness part, which
is incidentally the most technical part: it requires a basic introduction to
Fourier analysis and related notions.

\paragraph*{Understanding containment}

\begin{lemma}
    Let $c,d,k,\ell$ be positive integers. Then $\templ{c}{k} \to
    \templ{d}{\ell}$ if and only if $k \leq \ceil{\ell/\ceil{c/d}}$.
\end{lemma}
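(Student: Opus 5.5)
First I unpack the definition of recolouring for clique families. A map $r\colon[c]\to[d]$ is a recolouring from $\templ{c}{k}$ to $\templ{d}{\ell}$ if and only if no monochromatic $\ell$-clique (coloured with some $j\in[d]$) has an $r$-preimage that is $\templ{c}{k}$-free. Such a preimage is an $\ell$-clique coloured by elements of $r^{-1}(j)$, and it is $\templ{c}{k}$-free exactly when each colour of $r^{-1}(j)$ is used fewer than $k$ times. Since there are no homomorphisms from a $k$-clique other than injections into the $\ell$-clique, the degenerate case $k>\ell$ simply means the preimage is automatically free. Thus $r$ is a recolouring if and only if, for every $j\in[d]$, we cannot distribute $\ell$ vertices among $|r^{-1}(j)|$ colours with each colour used at most $k-1$ times. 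This holds if and only if $|r^{-1}(j)|(k-1)<\ell$ for all $j$, with the convention that an empty fibre trivially has no preimage.

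So the condition becomes: there exists a partition of $[c]$ into at most $d$ fibres, each of size $m$ satisfying $m(k-1)<\ell$, i.e.\ $m(k-1)\le \ell-1$. To make the largest fibre as small as possible we balance the fibres, which gives largest fibre exactly $\ceil{c/d}$. Setting $q=\ceil{c/d}$, we get $\templ{c}{k}\to\templ{d}{\ell}$ if and only if $q(k-1)<\ell$. Necessity: any $r$ has some fibre of size at least $q$ by pigeonhole. Sufficiency: take the balanced partition.

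It remains to check the arithmetic identity $q(k-1)<\ell \iff k\le\ceil{\ell/q}$. I would argue $k\le \ceil{\ell/q} \iff k-1<\ell/q$. The forward direction holds because $\ceil{x}-1<x$. The backward direction holds because $k-1<x$ with $k-1$ an integer gives $k-1\le\ceil{x}-1$. The condition $k-1<\ell/q$ is in turn equivalent to $q(k-1)<\ell$.

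The main obstacle is being careful with the edge cases in the definition of recolouring: $k=1$, where every colouring contains a monochromatic $1$-clique, and $k>\ell$. I would verify that the fibre criterion $m(k-1)<\ell$ is still the right characterisation in both. For $k=1$ it says $0<\ell$, which is always true. This is consistent because no colouring is $\templ{c}{1}$-free, so every map is vacuously a recolouring, and indeed $1\le\ceil{\ell/q}$. Otherwise the argument is routine.
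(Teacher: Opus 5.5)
Your proof is correct and follows essentially the same route as the paper. A balanced map $r$ (fibres of size $\floor{c/d}$ or $\ceil{c/d}$) gives sufficiency, and pigeonhole on a fibre of size at least $\ceil{c/d}$ gives necessity. Your exact per-fibre criterion $|r^{-1}(j)|(k-1)<\ell$, together with the identity $q(k-1)<\ell \iff k\le\ceil{\ell/q}$, makes explicit the ceiling arithmetic the paper leaves implicit. This is especially helpful in the necessity direction, where the argument needs a suitably balanced preimage rather than an arbitrary one.
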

\noindent We call this inequality the \emph{containment condition} for $c,d,k,\ell$.

\begin{proof}
    Suppose that $k \leq \ceil{\ell/\ceil{c/d}}$. Fix any mapping
    $r\colon[c]\to[d]$ such that $|r^{-1}(i)|\in \{\floor{c/d}, \ceil{c/d}\}$
    for every $i\in[d]$. Every $r$-preimage of every monochromatic $\mathbb
    K_{\ell}$ contains at least $\ceil{\ell/\ceil{c/d}}$ vertices of the same
    colour, so it contains a monochromatic copy of $\mathbb K_{k}$, which
    implies that $r$ is a recolouring.
    
    For the opposite direction, let $r\colon [c]\to[d]$ be an arbitrary
    recolouring from $\templ{c}{k}$ to $\templ{d}{\ell}$. By the pigeonhole
    principle, there must exist $i \in [d]$ such that $|r^{-1}(i)| \ge
    \ceil{c/d}$. Every $r$-preimage of $(\mathbb K_{\ell}, v \mapsto
    i)\in\templ{d}{\ell}$ contains
    \begin{itemize}
        \item at least $\ceil{\ell/\ceil{c/d}}$ vertices of the same colour, and
        \item a monochromatic copy of $\mathbb K_k$, since $r$ is a recolouring.
    \end{itemize}
    Consequently $k \leq \ceil{\ell/\ceil{c/d}}$.
\end{proof}

In particular, observe that if $c,d,k,\ell > 0$ satisfy the containment
condition, then $k \le \ell$.

\paragraph*{Tractability}

We shall prove the first part of \cref{th:dichotomy}.

\begin{theorem}
    Let $c,d,k,\ell$ be positive integers satisfying the containment condition
    such that $\ell \ge c(k-1)$. Then $\PMMSNP(\templ{c}{k}, \templ{d}{\ell})$
    is in \P.
\end{theorem}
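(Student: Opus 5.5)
By \cref{lemma:clique-equivalence}, it suffices to solve the search version of $\PCSP(\urel{c}{k}{\ell}, \urel{d}{\ell}{\ell})$ in polynomial time. We are given an $\ell$-uniform hypergraph $H=(V,E)$, where each hyperedge is an ordered $\ell$-tuple. We are promised a colouring $s\colon V\to[c]$ in which every hyperedge uses each colour fewer than $k$ times. We must output a colouring $V\to[d]$ with no monochromatic hyperedge. By the containment condition we have $\ell \ge k$, and since the pair is a valid template we may assume $d\ge 2$; we only ever use two colours of $[d]$. The plan mirrors \cref{ex:AIP}: sandwich the problem (\cref{obs:sandwich}) around a PCSP known to be solved by AIP, or run AIP directly.

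The key counting observation is as follows. If a tuple $\tuple a\in\urel{c}{k}{\ell}$ takes value $a$ exactly $n_a\le k-1$ times, then $\sum_a n_a = \ell$ and $\ell \ge c(k-1) \ge \sum_a n_a$. This forces $\ell = c(k-1)$ and $n_a = k-1$ for every colour $a$. The case $\ell > c(k-1)$ is therefore impossible unless the relation is empty, in which case there are no \yes{} instances with edges and we may output anything. So assume $\ell=c(k-1)$. Then every hyperedge is coloured by $s$ with each colour appearing exactly $k-1$ times.

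Now fix colour $1$ and define $x_v = 1$ if $s(v)=1$ and $x_v=0$ otherwise. For every hyperedge $(v_1,\dots,v_\ell)$ we have $\sum_i x_{v_i} = k-1$, with $0<k-1<\ell$. Here $k\ge 2$ is needed (if $k=1$ the relation is empty, as above), and $k-1<\ell$ holds because $c\ge 2$. If $c=1$, then $\ell=k-1<k\le\ell$ is a contradiction, so that case is empty too. The map $a\mapsto[a=1]$ is thus a homomorphism from $\urel{c}{k}{\ell}$ to $\kinl{(k-1)}{\ell}$. The map $\{0,1\}\hookrightarrow[d]$ is a homomorphism from $\kinl{(k-1)}{\ell}$ to $\NAE[d]{\ell}$, because $0<k-1<\ell$ means the tuple is non-constant. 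By \cref{obs:sandwich}, it is enough to solve $\PCSP(\kinl{(k-1)}{\ell},\NAE[2]{\ell})$.

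For this problem we use the AIP argument exactly as in \cref{ex:AIP}. Solve the linear system $\sum_i y_{v_i}=k-1$ over $\mathbb Z$, one equation per hyperedge, via Smith normal form; a solution exists by the promise. Then round: vertex $v$ gets colour $1$ if $y_v\ge 1$ and colour $0$ if $y_v\le 0$. If all entries of an edge were $\le 0$ the sum would be $\le 0<k-1$, and if all were $\ge1$ the sum would be $\ge \ell>k-1$. Hence no edge is monochromatic. The only real subtlety is the degenerate bookkeeping (empty relations when $k=1$, $c=1$, or $\ell>c(k-1)$), which I expect to be the main, though minor, obstacle. Everything else is the standard AIP rounding.
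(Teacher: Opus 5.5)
Your proof is correct and follows the paper's argument essentially verbatim. You dispose of the empty cases, and then for $\ell = c(k-1)$ you sandwich $\urel{c}{k}{\ell} \to \kinl{(k-1)}{\ell} \to \NAE[2]{\ell} \to \urel{d}{\ell}{\ell}$ via \cref{obs:sandwich} and solve the middle problem by AIP. Your explicit check that $0 < k-1 < \ell$, which rules out $k=1$ and, via the containment condition, $c=1$, is a detail the paper leaves implicit when it asserts $\kinl{(k-1)}{\ell} \to \NAE[2]{\ell}$.
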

\begin{proof}
    By \cref{lemma:clique-equivalence}, it suffices to prove that
    $\PCSP(\urel{c}{k}{\ell}, \urel{d}{\ell}{\ell})$ is in \P. If $\ell >
    c(k-1)$, then the left-hand side relation is empty, so the problem is
    trivially tractable. Similarly if $d = 1$, then the right-hand side relation
    is empty, so the problem is yet again trivial. Thus assume that $\ell =
    c(k-1)$ and $d \geq 2$. We will argue that
    \[
        \urel{c}{k}{\ell} \to \kinl{(k-1)}{\ell} \to \NAE[2]{\ell} \to \urel{d}{\ell}{\ell}.
    \]
    Then, tractability follows by \Cref{obs:sandwich} from the tractability of
    \[
    \PCSP\left(\kinl{(k-1)}{\ell}, \NAE[2]{\ell}\right),
    \]
    which, in turn, follows from the techniques of~\cite{BG21_pcsps}, namely
    AIP.\@ Let us therefore show the required homomorphisms.
    \begin{description}
        \item[$\urel{c}{k}{\ell} \to \kinl{(k-1)}{\ell}$.] Since $\ell =
            c(k-1)$, it follows that each tuple of $\urel{c}{k}{\ell}$ consists
            of exactly $k - 1$ appearances of each of the $c$ colours in our
            domain. Mapping one fixed colour to 1 and the rest to 0 gives rise
            to a homomorphism to $\kinl{(k-1)}{\ell}$.
        \item[{${\NAE[2]{\ell}} \to \urel{d}{\ell}{\ell}$.}] Mapping 0 and 1 to
            any two distinct colours in the domain of $\urel{d}{\ell}{\ell}$
            suffices. (Such colours exist as $d \geq 2$.)\qedhere
    \end{description}
\end{proof}

\paragraph*{Hardness}

For the second part of \cref{th:dichotomy}, we prove the following.

\begin{theorem}\label{th:mainhardness}
    Let $c, d, k, \ell$ be positive integers satisfying the containment
    condition such that $\ell < c(k-1)$. Then $\PMMSNP(\templ{c}{k},
    \templ{d}{\ell})$ is \NP-hard, assuming the Rich 2-to-1 Conjecture.
\end{theorem}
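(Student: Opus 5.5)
The plan is to reduce to the PCSP formulation and then apply \cref{th:blackbox}. By \cref{lemma:clique-equivalence}, $\PMMSNP(\templ{c}{k}, \templ{d}{\ell})$ is poly-time equivalent to $\PCSP(\urel{c}{k}{\ell}, \urel{d}{\ell}{\ell})$, and $\urel{d}{\ell}{\ell}$ is isomorphic to $\NAE[d]{\ell}$. So it suffices to check that $R \coloneqq \urel{c}{k}{\ell}$ satisfies the hypotheses of \cref{th:blackbox} with $r = \ell$: $R$ is nonempty, symmetric and reconfigurable, $\ell \ge 2$, and $R \to \NAE[d]{\ell}$.

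\textbf{Routine hypotheses.} Symmetry of $R$ is immediate from its definition.
\begin{itemize}
\item \emph{Nonemptiness.} Since $\ell < c(k-1)$, we can distribute $\ell$ positions among $c$ colours with each colour used at most $k-1$ times.
\item \emph{Arity.} The containment condition gives $k \le \ell$, and $\ell < c(k-1)$ forces $k \ge 2$. Hence $\ell \ge 2$.
\item \emph{Homomorphism.} The recolouring $r\colon [c] \to [d]$ witnessing $\templ{c}{k} \to \templ{d}{\ell}$ maps $R$ into $\urel{d}{\ell}{\ell}$. Indeed, if some $\bar a \in R$ had $r(\bar a)$ constant, say equal to $i$, then $\bar a$ would colour an $\ell$-clique with an $r$-preimage of the monochromatic $(\mathbb K_\ell, v \mapsto i)$. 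That colouring has no monochromatic $\mathbb K_k$, which contradicts $r$ being a recolouring. (Equivalently, $\mathbb S \to \mathbb T$ from \cref{sec:equivalence}.)
\end{itemize}

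\textbf{Reconfigurability (the main step).} Given $\bar a, \bar b \in R$, I will move from $\bar a$ to $\bar b$ by single-coordinate changes within $R$, using as potential the number of positions where the current tuple $\bar x$ agrees with $\bar b$. Take any position $i$ with $x_i \ne b_i$.

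If $b_i$ occurs at most $k-2$ times in $\bar x$, set $x_i \coloneqq b_i$. This stays in $R$ and the potential increases.

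Otherwise $b_i$ occurs exactly $k-1$ times in $\bar x$, all at positions other than $i$. The slack $\ell \le c(k-1)-1$ guarantees some colour $e$ occurs at most $k-2$ times in $\bar x$; note $e \neq b_i$. Also, $\bar b$ contains $b_i$ at most $k-1$ times, one of which is at position $i$. So among the $k-1$ positions $j$ with $x_j = b_i$, at least one has $b_j \ne b_i$, i.e.\ it is currently mismatched. Recolour that $x_j$ to $e$; this stays in $R$, and position $j$ either stays mismatched or becomes matched. Now $b_i$ occurs $k-2$ times, so setting $x_i \coloneqq b_i$ stays in $R$. Overall the potential increases by at least one.

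Iterating reaches $\bar b$, so $R$ is reconfigurable. The hypotheses of \cref{th:blackbox} are then met, and $\PCSP(\urel{c}{k}{\ell}, \NAE[d]{\ell})$, hence the PMMSNP, is \NP-hard under the Rich 2-to-1 Conjecture.

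I expect the only delicate point to be this reconfigurability argument. It is exactly where the strict inequality $\ell < c(k-1)$ is used: at $\ell = c(k-1)$ every tuple is perfectly balanced and no single-coordinate move stays inside $R$.
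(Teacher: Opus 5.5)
Your proposal is correct and follows the paper's proof: you reduce via \cref{lemma:clique-equivalence} to $\PCSP(\urel{c}{k}{\ell}, \NAE[d]{\ell})$ and then check the hypotheses of \cref{th:blackbox}. The differences are minor. For reconfigurability, the paper only sketches moving every tuple to a fixed almost-balanced tuple by replacing a most frequent colour with a least frequent one, whereas your potential argument toward an arbitrary target $\bar b$ is more explicit and complete. You also verify $R \to \NAE[d]{\ell}$ directly from the recolouring, which the paper leaves implicit.
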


Braverman et al.~\cite{multislices} recently presented a yet different proof of
conditional hardness for $\PCSP(\mathbb K_3, \mathbb K_\ell)$ using their
analysis of the \emph{invariance principle} for multi-slices, and reducing from
the Rich 2-to-1 Conjecture.\footnote{Earlier, Guruswami and
Sandeep~\cite{GS2020agc} proved this under the weaker $d$-to-1 Conjecture.
However, we were not able to extend their techniques to our setting.} We follow
their approach to derive hardness for the PCSPs of the following form:

\blackbox*

The proof of this result is rather technical, so we will first show how it
implies \cref{th:mainhardness}, and then devote the remaining part of this paper
to proving \cref{th:blackbox}.

\begin{proof}[Proof of \cref{th:mainhardness}]   
    Fix $c, d, k, \ell$ such that $\ell < c(k-1)$. First, note that by the
    containment condition
    \[
        k \le \ceil*{\frac{\ell}{\ceil*{\frac{c}{d}}}} \le \ell,
    \]
    hence $\ell < c(k-1) \le c(\ell-1)$. This means that $c, d, k, \ell \ge 2$.

    By \cref{lemma:clique-equivalence}, it suffices to show that
    $\PCSP(\urel{c}{k}{\ell}, \urel{d}{\ell}{\ell})$ is \NP-hard, assuming the
    Rich 2-to-1 Conjecture. We are going to apply \cref{th:blackbox}.
    
    Let $R \coloneqq \urel{c}{k}{\ell}$ which is a symmetric relation of arity
    $\ell$. Since $\ceil{\ell/c} \le 1 + \ell/c < k$, it is non-empty.

    Recall that $R$ consists of all tuples of length $\ell$, using only elements
    in $[c]$, where every element appears less than $k$ times. If we start from
    some arbitrary tuple of $R$, and replace (one of) the most frequent colours
    with (one of) the least frequent colours, we can reach any tuple where each
    colour appears either $\floor{\ell/c}$ or $\ceil{\ell/c}$ times. This
    implies that every tuple in $R$ can be reconfigured to the uniquely chosen
    ``almost balanced'' tuple in $R$.

    Therefore $R \neq \emptyset$ is a symmetric reconfigurable relation. On the
    other hand, $\urel{d}{\ell}{\ell} = \NAE[d]{\ell}$. We conclude the proof by
    applying \cref{th:blackbox}.
\end{proof}

\section{PCSP hardness}\label{sec:blackbox}
This section is devoted to the proof of \cref{th:blackbox}. As mentioned
earlier, it is a special case of a more general hardness result that we prove.
More precisely, the \emph{reconfigurability} shall be relaxed to
\emph{connectedness} of Braverman, Khot, Lifshitz, and
Minzer~\cite{multislices}; we call it \emph{BKLM-connectedness} for
clarity.\footnote{To be precise, their definition of connectedness is less
restrictive, but the semantics of connectivity remains the same.}

\begin{definition}[Based on~\cite{multislices}]
    Given a relation $R \subseteq A^r$ and $1 \le i < r$, we define $\mathbb
    G^R_i$ to be the bipartite graph whose left side consists of all prefixes of
    tuples in $R$ of length $i$, whose right side consists of all suffixes of
    tuples in $R$ of length $r-i$, and where a prefix and a suffix are connected
    if they form a tuple in $R$ together.
\end{definition}

\begin{restatable}{theorem}{asymblackbox}\label{th:asymblackbox}
    Let $\emptyset \neq R \subseteq A^r$ be a symmetric BKLM-connected relation
    of arity $r \ge 2$. For any $d$ such that $R \to \NAE[d]{r}$, the problem
    $\PCSP(R, \NAE[d]{r})$ is \NP-hard, assuming the Rich 2-to-1 Conjecture.
\end{restatable}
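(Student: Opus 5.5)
We reduce from the Rich 2-to-1 Conjecture by the standard Label Cover/long-code route of~\cite{multislices}. Start from a Rich 2-to-1 Games instance with bipartite constraint graph $(U,V,E)$, label sets $[2n]$ on $U$ and $[n]$ on $V$, and 2-to-1 projections $\pi_e\colon[2n]\to[n]$. For each vertex $u\in U$ we introduce a block of variables indexed by $R^{2n}$, that is, by all assignments $x\colon[2n]\to R$ of a tuple of $R$ to each label. For $v\in V$ we do the same with $R^{n}$.

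Constraints are added in two places. Inside each block we place $R$-constraints on $r$-tuples $(x^{(1)},\dots,x^{(r)})$ whose coordinatewise columns lie in $R$; symmetry of $R$ lets us use unordered hyperedges. Along each edge $e$ we identify, or constrain, the $v$-variable $y$ with $u$-variables $x$ satisfying $x\circ$(pairs of $\pi_e$) $\sim y$. Concretely, $u$-variables are restricted to the multi-slice determined by the rich distribution, and the edge test folds two preimage coordinates into one column, as in~\cite{multislices}.

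Completeness is the dictator assignment. If $\lambda$ is a labelling satisfying all constraints, colour each $x\in R^{2n}$ in block $u$ by the entry of $x_{\lambda(u)}\in R$ at a fixed position, composed with a homomorphism $R\to A$. Every hyperedge has columns in $R$, so the $r$ colours form a tuple of $R$, and the edge constraints hold because $\pi_e(\lambda(u))=\lambda(v)$. Hence \yes{} instances map to $R$.

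Soundness is the main obstacle. Suppose there is an $\NAE[d]{r}$-colouring $f_u\colon R^{2n}\to[d]$. We split $f_u$ into indicators $f_u^{a}=\mathbf 1[f_u=a]$ for $a\in[d]$. No hyperedge is monochromatic, so $\E[\prod_{i} f^a_u(x^{(i)})]=0$ over the $R$-correlated distribution on $r$-tuples. BKLM-connectedness of $R$ is exactly the hypothesis under which the invariance principle for multi-slices from~\cite{multislices} applies. Using it, vanishing of this correlation for a function of density $\ge 1/d$ (some colour class has such density) forces $f^a_u$ to have a coordinate of large low-degree (noisy) influence. Without such a coordinate, the correlation would be close to its Gaussian analogue, which is bounded below by a positive function of the density, since $R$ induces a connected, non-degenerate Gaussian correlation.

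We then decode: $u$ and $v$ each choose a uniformly random label among the coordinates with influence above $\tau$ in their respective colour classes. The number of such coordinates is bounded in terms of $\tau,d$ via the low-degree influence sum bound. Next we show that influential coordinates of $f_v$ pull back through $\pi_e$ to influential coordinates of $f_u$, using the edge test and the 2-to-1 folding. This yields a labelling satisfying a constant fraction of constraints, contradicting the Rich 2-to-1 Conjecture.

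The hardest step is the invariance and correlation bound for $r$-wise correlated multi-slice distributions with BKLM-connected support. We would obtain it by iteratively applying the bipartite splitting $\mathbb G^R_i$ (prefix versus suffix). Connectedness of each $\mathbb G^R_i$ makes the corresponding two-part correlation strictly less than $1$, which is Mossel's condition, and this lets the multi-slice invariance of~\cite{multislices} be invoked inductively.
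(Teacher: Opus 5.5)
Your architecture matches the paper's: a multislice long code over a Rich 2-to-1 instance, dictator completeness, and the multislice invariance principle plus Mossel's Gaussian bound to get an influential coordinate at each $u$. The gap is the decoding step, which is where the 2-to-1 setting actually bites, and you assert it in a form that cannot be proved. You claim that influential coordinates of $f_v$ pull back through $\pi_e$ to influential coordinates of $f_u$ ``using the edge test and the 2-to-1 folding'', i.e.\ edge by edge for the fixed map $\pi_e$. For a fixed 2-to-1 map $\pi$, the points $\pi^{-1}(\tuple y)$ with $\tuple y\in\mathcal S\tuple{\#}$ form a subset of $\mathcal S2\tuple{\#}$ that is exponentially small in $n$. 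The low-degree influences of $f_u$ (or of its Markov lift $Tf_u$) are averages over all of $\mathcal S2\tuple{\#}$, so they say nothing about $f_u$ on that subset. Hence $f_v=f_u^{\pi}$ may have influential coordinates unrelated to those of $f_u$; this is exactly the obstruction for perfect-completeness reductions from 2-to-1 games. What makes the argument work is averaging over a \emph{uniformly random} 2-to-1 map $\pi$: the pullback of a uniform point of $\mathcal S\tuple{\#}$ along a uniform $\pi$ is uniform on $\mathcal S2\tuple{\#}$, and this is precisely what richness supplies. Your proposal never uses richness (the word appears only as a misnomer for the distribution defining the multislice). That is itself a warning sign, since the paper explicitly leaves open whether the plain 2-to-1 Conjecture suffices.

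The paper's replacement is \cref{lem:finish}, taken from Section 6.4 of~\cite{multislices}. It assigns to every $g\colon\mathcal S\tuple{\#}\to[0,1]$ a coordinate $d_g$ such that, whenever $Tf$ has a coordinate $j$ of degree-$k$ influence $>\tau$, we have $\Pr_\pi[\pi(j)=d_{f^\pi}]\ge C(\alpha,\beta,\tau,k)$ over a uniformly random 2-to-1 map $\pi$. Labelling each $u\in\mathbf{Good}$ by $j$ and each $v$ by $d_{f_v}$ then satisfies a constant fraction of edges, because by richness $\pi_{uv}$ is uniform for a random neighbour $v$. Note the direction: an influential coordinate at $u$ predicts the decoded label at $v$, on average over $\pi$. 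Also, $v$'s label comes from this decoding, not from influential coordinates that $f_v$ is separately forced to have.

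Relatedly, you misplace the difficulty. The $r$-wise invariance principle and the Gaussian bound via the splittings $\mathbb G^R_i$ are imported as black boxes (\cref{th:invariance}, \cref{th:mossel}); Mossel's recursive $\Gamma_{\tuple\rho}$ already handles the iteration. Besides the decoding, the remaining work is a technicality you skip: $\mu^\Omega_n$ needs $2n\,\Omega(\tuple a)\in\mathbb Z$, so the uniform distribution on $R$ must be perturbed. Its marginals may then differ, and the paper handles this with $r$ multislices $\mathcal S\tuple{\#}^{(i)}$ and a combined function $g_u$.

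Your encoding is also garbled. Vertices should be points of $\mathcal S\tuple{\#}\subseteq A^{n}$ and $\mathcal S2\tuple{\#}\subseteq A^{2n}$, not of $R^{2n}$. It is the hyperedges that correspond to $r\times 2n$ matrices with columns in $R$. The completeness map is simply $(v,\tuple y)\mapsto y_{s'(v)}$, with no ``homomorphism $R\to A$''.
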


Let us first argue that \cref{th:blackbox} indeed follows from
\cref{th:asymblackbox}. This amounts to proving that symmetric reconfigurable
relations are BKLM-connected.

\begin{proposition}
    Let $R \subseteq A^r$ be a symmetric reconfigurable relation. Then $R$ is
    BKLM-connected.
\end{proposition}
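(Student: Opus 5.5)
Presumably BKLM-connected means that each bipartite graph $\mathbb G^R_i$, for $1 \le i < r$, is connected. The plan is to fix $i$ and show that any two vertices of $\mathbb G^R_i$ are joined by a path. Since every vertex of $\mathbb G^R_i$ (a prefix or a suffix) lies on at least one edge, and edges correspond exactly to tuples of $R$, it suffices to show the following: for any two tuples $\bar a, \bar b \in R$, the corresponding edges $(\bar a|_{[i]}, \bar a|_{[r]\setminus[i]})$ and $(\bar b|_{[i]}, \bar b|_{[r]\setminus[i]})$ lie in the same connected component.

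By reconfigurability, there is a sequence $\bar a = \bar t^0, \bar t^1, \dots, \bar t^m = \bar b$ of tuples in $R$ in which consecutive tuples differ in exactly one coordinate. So it is enough to treat a single step: two tuples $\bar t, \bar t' \in R$ differing only in coordinate $p$. If $p > i$, then $\bar t$ and $\bar t'$ have the same prefix, so their edges share the vertex $\bar t|_{[i]}$. Symmetrically, if $p \le i$, they share the suffix vertex. In either case the two edges are adjacent, hence in the same component. Chaining along the reconfiguration sequence gives connectivity of $\mathbb G^R_i$.

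Strictly, this argument needs only reconfigurability, not symmetry. Symmetry would matter only if BKLM-connectedness involved all coordinate splits, not just prefix/suffix splits, or some non-bipartite condition. In that case I would first use symmetry to permute coordinates so that any given split becomes a prefix/suffix split, and then run the same argument.

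The main obstacle is definitional. If the paper's notion of BKLM-connectedness additionally requires, as in~\cite{multislices}, some non-bipartiteness or aperiodicity (for instance connectivity of the squared graph on prefixes), this simple argument would need extension. The way I would handle that is by applying symmetry to a tuple with a repeated or swapped coordinate, giving an odd closed walk within $R$'s reconfiguration structure. I expect, however, that plain connectivity of each $\mathbb G^R_i$ is what is required, and then the proof above is complete.
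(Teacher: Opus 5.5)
Your proof is correct and is essentially the paper's argument. The paper notes that $\mathbb G^R_i$ has no isolated vertices, so it is connected if and only if its line graph (vertices the tuples of $R$, adjacent when they share a prefix or a suffix) is connected, and that the reconfiguration graph is a spanning subgraph of this line graph, which is your single-step observation chained along a reconfiguration sequence. As you suspected, the intended notion is plain connectivity of every $\mathbb G^R_i$, and symmetry is not used, so your hedging about extra conditions is unnecessary.
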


\begin{proof}
    Fix $1 \le i < r$, and consider the bipartite graph $\mathbb G_i^R$. Since
    this graph has no isolated vertices by construction, it is connected if and
    only if its \emph{line graph}\footnote{The line graph of $\mathbb{G}_i^R$
    has as vertices the edges of $\mathbb{G}_i^R$ (i.e.~tuples of $R$), and two
    edges are connected whenever one of their endpoints coincides (i.e.~two
    tuples are connected if they agree on the prefix or on the suffix). } is
    connected.
    
    Consider now the \emph{reconfiguration graph} of $R$: the vertices of this
    graph are the tuples of $R$; two tuples are connected whenever they differ
    in precisely one coordinate. Clearly, the reconfiguration graph is a
    subgraph of the line graph of $\mathbb G_i^R$, and they have the same vertex
    sets. Since the former is connected by the assumption, the latter must be
    connected as well, which concludes the argument.
\end{proof}

The proof of \cref{th:asymblackbox} follows the strategy of~\cite{multislices}
through which its authors proved hardness of Approximate Graph Colouring,
assuming the Rich 2-to-1 Conjecture. Its nature is heavily analytical; in
particular, it relies on the \emph{invariance principle} on \emph{multislice}
domains developed in the same paper.

The rest of this section is organized as follows. First we define the Rich
2-to-1 Conjecture, the source of hardness for our PCSPs. Then we introduce the
necessary notions from the Fourier analysis. Finally, we present the proof of
\cref{th:asymblackbox}.

\subsection*{Rich 2-to-1 Conjecture}

We will have to concede the CSP formulation of the Rich 2-to-1 Conjecture, used
in the \nameref{sec:intro}, in favour of the more commonly used \emph{Label
Cover} one. For the following, a map $\pi : [2n] \to [n]$ is called
\emph{2-to-1} if $|\pi^{-1}(i)| = 2$ for each $i \in [n]$.

\begin{definition}[Rich 2-to-1 Label Cover]\label{defn:2to1}
    A Rich 2-to-1 Label Cover instance is a tuple $(U \cup V, E, [2n], [n], \{
    \pi_e \}_{e \in E})$, where $(U \cup V, E)$ form a bi-regular bipartite
    graph, and each $\pi_e : [2n] \to [n]$ is a 2-to-1 map. Furthermore, for
    every fixed $u \in U$, if we sample $v$ from the neighbourhood of $u$
    uniformly at random, then the function $\pi_{uv}$ is uniformly distributed
    among the set of 2-to-1 maps $[2n] \to [n]$ (this property is called
    \emph{richness}).\footnote{In other words, if one fixes any $u \in U$ and
    considers the maps $\pi_e$ associated with the edges incident to $u$, every
    2-to-1 map appears equally often.}

    A \emph{labelling} for this instance is an assignment $c$ which sends $U$ to
    $[2n]$ and $V$ to $[n]$. The value of the labelling is the probability that
    $c(v) = \pi_{uv}(c(u))$ for $(u, v)$ sampled uniformly at random from $E$.
    The value of the instance is the maximum value of any labelling.
\end{definition}

\begin{conjecture}[Rich 2-to-1 conjecture~\cite{rich2to1}]\label{conj:rich2to1}
    For every $\varepsilon > 0$ there exists a large enough $n$ such that it is
    \NP-hard to decide whether a Rich 2-to-1 Label Cover instance $(U \cup V, E,
    [2n], [n], \{ \pi_e\}_{e \in E})$ has value 1, or value $< \varepsilon$.
\end{conjecture}

For a tuple $\tuple x \in A^{n}$ and a map $\pi\colon [2n] \to [n]$, the
pullback of $\tuple x$ along $\pi$ is the tuple $\tuple y \in A^{2n}$ such that
$y_j = x_{\pi(j)}$ for all $j \in [2n]$. We will usually denote this tuple by
$\pi^{-1}(\tuple x)$, abusing the notation.

Conversely, for a function $f$ on $2n$ variables and a map
$\pi\colon[2n]\to[n]$, we denote by $f^\pi$ the $n$-ary function $g$ defined by
pulling back its input and passing to $f$, that is,
\[
g(x_1, \dots, x_n) = f(x_{\pi(1)}, \dots, x_{\pi(2n)}).
\]
In this case, we also write $f \xrightarrow{\pi} g$. This operation is known as
a \emph{minor} in PCSP literature.

\subsection*{Informal overview of \cref{th:asymblackbox}}

Let us preface this overview by emphasizing that our proof is largely based on
the proof of~\cite[Theorem 6.7]{multislices}; in particular, the
reduction~\cite[Section 6.2]{multislices}, the completeness proof~\cite[Section
6.3]{multislices}, and a part of the soundness proof~\cite[Section
6.4]{multislices} are followed verbatim, with a few simplifications. Due to
this, and the lack of space, we only present a proof sketch here, deferring some
technical details to the Appendix.

\asymblackbox*

We will assume that each $a \in A$ can be extended to a tuple $(a, a_2, \dots,
a_r) \in R$; otherwise such $a$ can be removed from $A$ without affecting the
complexity of the PCSP in question. Throughout the proof, $r, d$, and $|A|$ are
treated as constants.

In order to prove \cref{th:asymblackbox}, we shall reduce from the Rich 2-to-1
Conjecture (see \cref{conj:rich2to1}), transforming Rich 2-to-1 Label Cover
instances into instances of $\PCSP(R, \NAE{r})$, i.e., $r$-uniform hypergraphs.
Fix the value of $n$ obtained from \cref{conj:rich2to1} for a small $\varepsilon
> 0$ that we shall define later. Let $\Phi = (U \cup V, E, [2n], [n],
\{\pi_e\}_{e \in E})$ be a Rich 2-to-1 Label Cover instance. Let us recall the
standard Long Code transformation (see e.g.~\cite{Bible}): each vertex $u \in U$
is replaced with a copy of $A^{2n}$ (called \emph{the cloud of $u$}), each $v
\in V$ is replaced with a copy of $A^n$ (the cloud of $v$); every edge $(u, v)
\in E$ commands identification of all pairs of tuples $\tuple x \in A^{2n}$ and
$\tuple y \in A^n$ in the clouds of $u$ and $v$ such that $\tuple x =
\pi_{uv}^{-1}(\tuple y)$; additionally, some hyperedges are added, which we
discuss later.

Our transformation differs from the above only in the use of a ``shorter'' code:
instead of the whole Cartesian power of $A$, we are content with a certain
subset of it, called \emph{multislice}.
\begin{definition}[Multislice]
    For any $n\in\mathbb N$ and $\tuple \# \in \{1, \dots, n\}^A$ such that
    $\sum_a \#_a = n$, let $\ms{n}{\tuple \#}$ be the subset of $A^n$ consisting
    of all tuples in which each $a \in A$ appears exactly $\#_a$ times. By
    $2\tuple\#$ we denote the tuple $(2\#_a)_a \in \{0,\dots,2n\}^A$. If $\#_a
    \ge \alpha \cdot n$ for all $a \in A$, then we call the tuple $\tuple\#$
    \emph{$\alpha$-balanced}.
\end{definition}

The Long Code reduction framework further stipulates that, for any $v_1, \dots,
v_r \in V$ with a common neighbour $u \in U$ and any $\tuple y^{(1)}, \dots,
\tuple y^{(r)}$ from the respective clouds, a hyperedge $(\tuple y^{(1)}, \dots,
\tuple y^{(r)})$ is added whenever $(\pi_{uv_1}^{-1}(y^{(1)})_i, \dots,
\pi_{uv_r}^{-1}(y^{(r)})_i) \in R$ for all $i\in[2n]$. Since our domain is
stripped down to a union of multislices, not all of these hyperedges exist ---
we only add some of them. To be precise, let $\Omega$ be a probability
distribution over $R$. We define the relation $\mu^\Omega_n$ as the following
set\footnote{Of course, this requires that $2n \cdot \Omega(\tuple a) \in
\mathbb Z$ for any $\tuple a \in R$.}
\[
\bigg\{ (\ith{x}{1}, \dots, \ith{x}{r}) \in \prod_{i=1}^r A^{2n} \,\,:\,\, \forall \bar a \in R. \left\lvert\left\{j : (x^{(1)}_j, \dots, x^{(r)}_j) = \bar a\right\}\right\rvert = 2n \cdot \Omega(\tuple a) \bigg\}.
\]
Intuitively, one can think of all $r \times 2n$ matrices in which the frequency
of each column $\tuple a$ is proportional to $\Omega(\tuple a)$. Indeed, the
uniform distribution over the set $\mu^\Omega_n$ is designed to ``simulate''
$\Omega^{2n}$.

Observe that $\mu^\Omega_n$ is, in fact, a relation over a product of $r$
multislices. Now, we only include those hyperedges which satisfy
$(\pi_{uv_1}^{-1}(y^{(1)}), \dots, \pi_{uv_r}^{-1}(y^{(r)})) \in \mu^\Omega_n$.

The reduction is summarized concisely in the following paragraph. We will make
use of \emph{marginal distributions}: Given a distribution $\Omega$ over $A_1
\times \cdots \times A_r$, its \emph{marginal distribution} to $i \in [r]$ is a
distribution $\Omega_i$ over $A_i$ defined as $\Omega_i(a) = \sum_{a_1 \in A_1,
\dots, a_r \in A_r} \Omega(a_1, \dots, a_{i-1}, a, a_{i+1}, \dots, a_r)$ for
each $a \in A_i$.

\paragraph*{Reduction}

Let $\Phi = (U \cup V, E, [2n], [n], \{\pi_e\}_{e \in E})$ be a Rich 2-to-1
Label Cover instance. The reduction produces an instance of $\PCSP(R,
\NAE[d]{r})$, that is, an $r$-uniform hypergraph $H$. Let $\Omega$ be the
uniform distribution over $R$; to simplify the presentation of the proof, we
shall assume that $n \cdot \Omega(\tuple a) \in \mathbb N$ for every $\tuple a
\in R$ (this is not necessarily the case, but it suffices to tweak the
distribution $\Omega$ just slightly --- we defer the technical details to the
Appendix).

Denote by $\rho \coloneqq \Omega_i$ the marginal distribution of $\Omega$ to any
coordinate $i \in [r]$ (they are all equal as $\Omega$ is uniform over a
symmetric relation). Put $\tuple\# = (n \cdot \rho(a))_a \in \mathbb N^A$ and
note that $\tuple\#$ is $\alpha$-balanced for $\alpha \coloneqq \min_{\tuple a
\in R} \Omega(\tuple a) = 1/|R| > 0$.

We replace each vertex $v \in V$ with a copy of the multislice
$\slice{\tuple\#}$, and represent its elements as pairs $(v, \bar y)$ where
$\bar y \in \slice{\tuple\#}$. All these pairs form the set of vertices $V(H)$.
One could also add multislices arising from the vertices $u \in U$ to $V(H)$,
but they will be rendered redundant by the following paragraph anyway.

We define the relation $\sim$ on $V(H)$ by $(v, \bar y) \sim (v', \bar y')$ if
there is a common neighbour $u \in U$ of $v$ and $v'$, and a tuple $\bar x \in
\ms{n}{2\tuple\#}$ such that $\bar x = \pi^{-1}_{u,v}(\bar y) =
\pi^{-1}_{u,v'}(\bar y')$. We extend $\sim$ to an equivalence relation on
$V(H)$, which, abusing notation, we also denote by $\sim$. In the hypergraph
$H$, we identify all vertices in the same equivalence class of $\sim$, which can
be done in logarithmic space~\cite{omer2008conn}.

For every $v_1, \dots, v_r \in V$ with a common neighbour $u \in U$ and every
$(\ith{y}{1}, \dots, \ith{y}{r})$ such that $(\pi_{uv_1}^{-1}(y^{(1)}), \dots,
\pi_{uv_r}^{-1}(y^{(r)})) \in \mu^\Omega_n$, add to $H$ the hyperedge over
\[
\left( v_1, \ith{y}{1} \right), \dots, \left( v_r, \ith{y}{r} \right).
\]

\begin{claim}[Completeness]
    If $\Phi$ (Label Cover instance) has a perfect labelling, then $H \to R$.
\end{claim}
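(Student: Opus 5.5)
The plan is to use the standard dictator (long code) assignment. Fix a perfect labelling $\lambda$ of $\Phi$, so $\lambda(v) = \pi_{uv}(\lambda(u))$ for every edge $(u,v) \in E$. For each vertex $(v, \tuple y)$ of $H$ we set $h(v, \tuple y) \coloneqq y_{\lambda(v)} \in A$. We then check two things in turn: that $h$ is well defined on $\sim$-equivalence classes, and that it maps every hyperedge into $R$.

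\textbf{Well-definedness.} It suffices to handle the generating pairs of $\sim$, since $h$ is then constant on the equivalence classes of the generated relation. Suppose $(v,\tuple y) \sim (v',\tuple y')$ via a common neighbour $u$ and a tuple $\tuple x = \pi_{uv}^{-1}(\tuple y) = \pi_{uv'}^{-1}(\tuple y')$. By definition of the pullback, $x_j = y_{\pi_{uv}(j)}$ for every $j$. Taking $j = \lambda(u)$ and using $\pi_{uv}(\lambda(u)) = \lambda(v)$ gives
\[
x_{\lambda(u)} = y_{\lambda(v)} = h(v,\tuple y).
\]
The same computation with $v'$ gives $x_{\lambda(u)} = h(v',\tuple y')$, so the two values agree. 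Equivalently, $h$ factors through the map $(v,\tuple y) \mapsto (u, x_{\lambda(u)})$, which is consistent on all identified vertices.

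\textbf{Hyperedges.} Take a hyperedge on $(v_1,\ith{y}{1}),\dots,(v_r,\ith{y}{r})$ with common neighbour $u$. Put $\ith{x}{i} = \pi_{uv_i}^{-1}(\ith{y}{i})$, so that $(\ith{x}{1},\dots,\ith{x}{r}) \in \mu^\Omega_n$. As in the previous step, $h(v_i,\ith{y}{i}) = x^{(i)}_{\lambda(u)}$. So the image of the hyperedge is the column of the $r \times 2n$ matrix $(\ith{x}{1},\dots,\ith{x}{r})$ at position $\lambda(u)$. Every column of a matrix in $\mu^\Omega_n$ lies in $R$, because only tuples $\tuple a \in R$ receive positive counts $2n\cdot\Omega(\tuple a)$ and these counts sum to $2n$. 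Hence the image tuple is in $R$. The same argument covers hyperedges whose vertices have been merged by $\sim$, because well-definedness has already been established.

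The only real subtlety, and the main point to handle carefully, is the identification step: the proof needs $h$ to be expressible through the $u$-side coordinate $\lambda(u)$ uniformly. This works because of perfect completeness, $\pi_{uv}(\lambda(u)) = \lambda(v)$. The Appendix tweak of $\Omega$, needed for integrality, does not affect this argument, provided the tweaked distribution is still supported on $R$.
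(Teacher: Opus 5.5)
Your proof is correct and follows the paper's argument: the same assignment $h(v,\bar y)=y_{\lambda(v)}$, with the image of each hyperedge identified as the $\lambda(u)$-th column of a matrix in $\mu^\Omega_n$, which lies in $R$. Your explicit check that $h$ is constant on $\sim$-classes, via $\pi_{uv}(\lambda(u))=\lambda(v)$, spells out what the paper leaves implicit in the phrase ``thanks to the equivalence constraints''.
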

\begin{proof}
    Let $s \colon U \to [2n]$ and $s' \colon V \to [n]$ be labellings that
    satisfy all the constraints of $\Phi$. We define the assignment $h : V(H)
    \to A$ as $h(v, \bar y) = y_{s'(v)}$.

    We will argue that $h$ is a homomorphism from $H$ to $R$. Indeed, consider a
    hyperedge
    \[
    (v_1, \bar y^{(1)}), \dots, (v_r, \bar y^{(r)}).
    \]
    witnessed by a common neighbour $u \in U$. It suffices to prove that
    \[
    \left(h(v_1, \bar y^{(1)}), \dots, h(v_r, \bar y^{(r)})\right) = \left(y^{(1)}_{s'(v_1)}, \dots, y^{(r)}_{s'(v_r)}\right) \in R.
    \]
    Thanks to the equivalence constraints $\sim$, the right-hand side tuple is
    exactly
    \[
    \left(\pi_{uv_1}^{-1}(y^{(1)})_{s(u)}, \dots, \pi_{uv_r}^{-1}(y^{(r)})_{s(u)}\right)
    \]
    which belongs to $R$ by the definition of $\mu^\Omega_n$.
\end{proof}

\paragraph*{Soundness}
This part is much more complicated than completeness. To spell it out, given a
$\NAE[d]{r}$-colouring of the hypergraph, we ought to construct a labelling of
the original Label Cover instance with a superconstant value. Historically,
Fourier analysis has been central in constructing such a labelling. Let us
review some standard probabilistic notation commonly used in Fourier-analytic
arguments.

For a function $f : X \to \mathbb R$ and a distribution $\Omega$ over $X$, we
denote by $\E_{\Omega}[f]$ the expectation of $f(x)$ when $x$ is sampled from
$\Omega$ (which we denote by $x \sim \Omega$). When we write $\E_X[f]$ or
$\E[f]$, the uniform distribution over $X$ (the domain of $f$) is assumed
implicitly. By $\Omega^n$ we denote the $n$-fold tensor power of $\Omega$, that
is, the distribution over $X^n$ such that the measure $\Omega^n(x_1, \dots, x_n)
= \prod_{j=1}^n \Omega(x_j)$.

Cartesian powers and multislices are connected via the following lifting
operator, which is a Markov operator of a certain \emph{coupling} defined
in~\cite{multislices}. Given $\tuple \# \in \mathbb N^A$, there exists an
operator $T$ that maps any function $f\colon \mathcal \slice{\tuple \#} \to
\mathbb R$ to a function $Tf \colon A^{\sum_a \#_a} \to \mathbb R$ that
satisfies
\[
\E_{\slice{\tuple\#}}[f] = \E_{\rho^n}[Tf]
\]
where $n = \sum_a \#_a$ and $\rho$ is a distribution over $A$ such that the
measure of each $a \in A$ is proportional to $\#_a$.

The soundness proof amounts to analysing the properties of functions satisfying,
in expectation, $f(x_1)f(x_2)\cdots f(x_r) \approx 0$: note that the equality
holds for any hyperedge $(x_1, \dots, x_r)$ whenever $f$ is an indicator
function of any fixed colour in a $\NAE{r}$-colouring. The invariance principle
for multislices by~\cite{multislices} argues that these lifting operators
preserve products, up to a negligible error. To be precise, here and later we
will denote negligible errors by $o(1)$ --- a real number that can be pushed
arbitrarily close to 0 by taking $n$ large enough (the semantics of $n$ will
always be clear from the context).

\begin{theorem}[Corollary of~{\cite[Theorem 1.11]{multislices}}]\label{th:invariance}
    Let $\tuple \#^{(1)}, \dots, \tuple \#^{(r)} \in \mathbb N^A$ such that $\sum_a
    \#^{(i)}_a = n$ for each $i \in [r]$. Let $\Omega$ be a distribution over $A^r$;
    denote $\mu \coloneqq \mu^\Omega_n$. If the support of $\Omega$ is a
    symmetric BKLM-connected relation, then any $\left\{f_i : \mathcal S \tuple
    \#^{(i)} \to [0,1]\right\}_{i=1}^r$ satisfy
    \begin{equation*}
    \left\lvert \E_{(\tuple x^{(1)}, \dots, \tuple x^{(r)}) \sim \mu}\left[ \prod_{i=1}^r f_i(\tuple x^{(i)}) \right] - \E_{(\tuple x^{(1)}, \dots, \tuple x^{(r)}) \sim \Omega^n}\left[ \prod_{i=1}^r T_i f_i(\tuple x^{(i)}) \right] \right\rvert
    = o(1)
    \end{equation*}
    where $T_i$ is the lifting operator for $\tuple \#^{(i)}$.
\end{theorem}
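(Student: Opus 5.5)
The plan is to derive the statement directly from \cite[Theorem 1.11]{multislices}, the invariance principle for multislices. The work is to check that our setting satisfies its hypotheses, and then to rewrite its conclusion in the lifted form displayed above. The central tool in \cite{multislices} is a coupling between the uniform distribution on a multislice $\slice{\tuple\#^{(i)}}$ and the product distribution $\rho_i^n$, where $\rho_i(a)\propto \#^{(i)}_a$. The operator $T_i$ is the Markov operator of this coupling: $T_if(\tuple y)$ is the expectation of $f(\tuple x)$ over $\tuple x$ coupled with $\tuple y$.

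\textbf{Step 1: the multislice side of the coupling.} Sampling $(\ith{x}{1},\dots,\ith{x}{r})\sim\mu^\Omega_n$ is the same as taking a fixed matrix whose column multiset has counts $2n\Omega(\tuple a)$ and applying a uniformly random column permutation. Hence each row $\ith{x}{i}$ is uniform on its multislice. The correlation across rows is exactly the multislice analogue of $\Omega^{n}$ (up to the factor $2$ normalisation).

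\textbf{Step 2: the product side of the coupling.} I would draw $\tuple y\sim\Omega^n$ and couple it with $\mu$ by a column-wise resampling procedure. Only $o(n)$ columns need to be fixed to correct the empirical column counts. Under this coupling, each row marginal is precisely the single-row coupling $(\ith{x}{i},\tuple y^{(i)})$ underlying $T_i$. This is the consistency property proved in \cite{multislices}.

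\textbf{Step 3: hypotheses of Theorem 1.11.} Their theorem controls $\E_\mu\prod f_i - \E_{\Omega^n}\prod T_if_i$ for bounded $f_i$. It assumes that $\Omega$ has full support on a relation whose graphs $\mathbb G^R_i$ are connected, so that the underlying correlation is strictly less than one. Here the support of $\Omega$ is exactly a symmetric BKLM-connected relation. Symmetry guarantees that every split of the $r$ coordinates into a prefix and a suffix, after reordering, is of the form covered by the definition of $\mathbb G^R_i$. This gives connectivity for every bipartition of $[r]$. I expect their "connectedness" to be stated over all bipartitions, and this is precisely where symmetry is used.

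\textbf{Step 4: the bound, and the main obstacle.} Since the $f_i$ take values in $[0,1]$, the products are bounded. The error term in \cite{multislices} decays with $n$ when $r$, $|A|$ and $\min\Omega$ are fixed, which yields the stated $o(1)$. The main obstacle is matching definitions. The corollary as stated applies to arbitrary $\tuple\#^{(i)}$, but it is only meaningful when they equal the marginals of $\Omega$, since otherwise $\mu$ is empty. Their theorem may be phrased with low-degree or low-influence truncations, or with noise applied first. I would remove these by their standard "noise is free" argument: BKLM-connectedness gives correlation $<1$, so a little extra noise changes both sides by only $o(1)$.
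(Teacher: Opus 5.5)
Your proposal follows the same route as the paper, which gives no argument beyond attributing the statement to Theorem~1.11 of~\cite{multislices}, and your hypothesis check is the right content: symmetry upgrades the prefix/suffix connectivity of the graphs $\mathbb G^R_i$ to connectivity across every bipartition of $[r]$ (at least as strong as the connectedness used in~\cite{multislices}); a lower bound on $\min_{\tuple a}\Omega(\tuple a)$ keeps the $o(1)$ uniform, which matters because the paper's $\Omega$ varies with $n$; and the $\tuple\#^{(i)}$ must be the suitably normalised marginals of $\Omega$ for the statement to make sense. Do not rely on the coupling ``consistency'' of Step~2, though: it is unsupported (a column-wise correction can alter a row that already lies in its multislice), and even exact row-wise agreement would not by itself give the product identity, since that would need the rows to be conditionally independent given the product sample --- connectedness enters analytically, through the theorem itself.
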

Roughly speaking, if $f$ is a $\NAE{r}$-colouring of the hypergraph, then $Tf$
may be thought of as an \textit{almost} $\NAE{r}$-colouring of a hypergraph that
we would obtain by applying the full Long Code transformation to the Label Cover
instance.

Mossel~\cite{mossel} proved the powerful noise inequalities, from which we
conclude these \textit{almost} $\NAE{r}$-colourings necessarily possess a
coordinate with significant \emph{low-degree influence}. For a function $g
\colon A^n \to \mathbb{R}$ and a distribution $\rho$ over $A$, the degree-$k$
influence of $j \in [n]$, denoted by $\lowInf[\rho]{g,j,k}$, is a number in
$[0,1]$ which (intuitively) measures how important the $j$-th input is to the
output of $g$. The precise definition of this notion will not be important to
us. We state a specialization of Mossel's result, tailored to our purposes:

\begin{restatable}[Corollary of~{\cite[Proposition 6.4]{mossel}}]{theorem}{mossel}\label{th:mossel}
    Let $\Omega$ be a distribution over $A^r$ whose support is a symmetric
    BKLM-connected relation of arity $r \ge 2$ and size $\ge 2$.

    For all $\delta > 0$ there exist $\tau, k > 0$ such that if $g : A^n \to
    [0,1]$ with $\E_{\overline x \sim \Omega_i^n}[g(\overline x)] \ge \delta$
    for each $i \in [r]$, and
    \[
    \E_{(\bar x^{(1)}, \dots, \bar x^{(r)}) \sim \Omega^n}\left[ \prod_{i=1}^r g(\overline x^{(i)})\right ] = o(1)
    \]
    then
    \[
    \max_{i \in [r]}\max_{j \in [n]} \lowInf[\Omega_i]{g,j,k} > \tau.
    \]
\end{restatable}
\noindent For completeness, its proof is provided in the Appendix.

The coordinates with significant low-degree influence give rise to a labelling
of the original Label Cover instance via the following result. We will say that
$\beta > 0$ is \emph{$\tuple\#$-valid} if $\beta \cdot \sum_a \#_a \in \mathbb
N$.
\begin{lemma}[See {\cite[Section 6.4]{multislices}}]\label{lem:finish}
    Let $\tau, k, \alpha > 0$ and $\tuple\# \in \mathbb N^A$ be
    $\alpha$-balanced with $\sum_a \#_a$ sufficiently large. For any sufficiently small $\tuple\#$-valid $\beta > 0$,
    we can assign to every function $g : \slice{\tuple\#} \to [0,1]$ a
    coordinate $d_g \in [n]$ in such a way that any $f : \slice{2\tuple\#} \to
    [0,1]$ satisfies
    \[
    \max_{j \in [2n]} \lowInf[\rho]{Tf, j, k} > \tau \implies \Pr_{\text{2-to-1 }\pi}\left[ \pi(j) = d_{f^\pi} \right] \ge C = C(\alpha,\beta,\tau,k)
    \]
    where $T$ and $\rho$ are the lifting operator and the distribution
    associated with $\tuple\#$.
\end{lemma}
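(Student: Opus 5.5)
The plan is to follow the list-decoding strategy of \cite[Section 6.4]{multislices}, as is standard in Label Cover reductions. Each $g\colon \slice{\tuple\#}\to[0,1]$ gets a small candidate set $L_g\subseteq[n]$ of coordinates, and $d_g$ is then a single element of $L_g$, chosen in a way that depends only on $g$. The difficulty is that influence is defined for the lifted function $Tf$ on $A^{2n}$, while $d_{f^\pi}$ must be read off the minor $f^\pi$, which lives on the smaller multislice $\slice{\tuple\#}$. So the main task is to relate the low-degree influences of $Tf$ to those of $T'(f^\pi)$, where $T'$ is the lifting operator for $\tuple\#$.

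The key steps, in order, are as follows.

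\textbf{Step 1.} Define $L_g=\{j\in[n] : \lowInf[\rho]{T'g,j,k'}>\tau'\}$, with $k',\tau'$ depending on $k,\tau,\alpha,\beta$. The standard bound $\sum_j \lowInf[\rho]{h,j,k'}\le k'\,\mathrm{Var}(h)\le k'$ gives $|L_g|\le k'/\tau'$. If $L_g\ne\emptyset$, let $d_g$ be its smallest element; otherwise let $d_g$ be arbitrary. Because $L_g$ is so small, it does not matter whether $d_g$ is chosen deterministically or at random.

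\textbf{Step 2.} Fix $f$ and a coordinate $j\in[2n]$ with $\lowInf[\rho]{Tf,j,k}>\tau$. Show that for a random 2-to-1 $\pi$, with probability at least $C'(\alpha,\beta,\tau,k)$, the coordinate $\pi(j)$ lies in $L_{f^\pi}$. Here $\beta$ plays the role of the noise parameter of the multislice coupling: the lifting $T$ is essentially a $\beta$-noisy resampling. The approach is to view the pair $(\slice{2\tuple\#},\pi)$ as first merging $j$ with its random partner $j'$ in $\pi^{-1}(\pi(j))$. Since $\pi$ is a uniformly random pairing, $j'$ is a uniform coordinate other than $j$.

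\textbf{Step 3.} Write the influence at $j$ as a sum over low-degree Fourier/Efron--Stein components containing $j$. Because the pairing is random, with high probability no other element of a given set $S$ of bounded size ($|S|\le k$) is paired with an element of $S$ other than as forced. The degree-$\le k$ part of $Tf$ restricted to the diagonal $x_j=x_{j'}$ then maps, up to an $O(k^2/n)$ error, to the degree-$\le 2k$ part of $T'(f^\pi)$. One step needs care: $T$ must commute approximately with the minor operation, i.e.\ $T'(f^\pi)\approx (Tf)^\pi$ in $L_2$ on low degrees. This is supposed to follow from the construction of the coupling in \cite{multislices}, using $\alpha$-balancedness. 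The influence at $j$ should then transfer to $\pi(j)$, up to a loss controlled by the correlation of $x_j,x_{j'}$ under the merge. This correlation is bounded away from degenerate because $\rho$ has all atoms at least $\alpha$.

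\textbf{Step 4.} Conclude that $\Pr_\pi[\pi(j)=d_{f^\pi}]\ge \Pr_\pi[\pi(j)\in L_{f^\pi}]\cdot \tau'/k'=:C$. This uses the fact that when $\pi(j)\in L_{f^\pi}$, the coordinate $d_{f^\pi}$ is one of at most $k'/\tau'$ candidates. To make this step honest, choose $d_g$ uniformly at random from $L_g$ and argue in expectation, then derandomise by averaging.

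I expect the main obstacle to be Step 3. Influence of a lifted function must be shown to survive taking a random 2-to-1 minor across two different multislices. This requires the approximate commutation of $T$ with minors and a hypercontractivity-type bound to control the cross terms. I would first try to import the corresponding estimate verbatim from \cite[Section 6.4]{multislices}, which handles exactly this for $A=[3]$, and check that only $\alpha$-balancedness was used there.
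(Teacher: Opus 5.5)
The paper gives no proof of this lemma; it imports it wholesale from \cite[Section 6.4]{multislices}, so your fallback of importing the estimate is what the paper does. As a self-contained argument, however, your sketch has a genuine gap at Step~3, which is the entire content of the lemma. You track only the degree-$\le k$ Fourier components of $Tf$ under the identification $x_j=x_{j'}$, with an $O(k^2/n)$ error. The minor also acts on the high-degree part of $Tf$: coordinates paired by $\pi$ merge, so high-degree components can drop to low degree, land on sets containing $\pi(j)$, and cancel exactly the contribution you tracked. Apart from the asserted commutation $T'(f^\pi)\approx(Tf)^\pi$, which is exact when $T$ is the identity, your argument is product-space Fourier analysis. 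It would therefore apply verbatim on $A^{2n}$, where the conclusion is false. On $\{\pm1\}^{2n}$ with the uniform measure, $h=x_1-\prod_{i\ge 2}x_i$ has $\mathbf{Inf}^{\le 1}_1[h]=1$, yet $\prod_{i\ge 2}y_{\pi(i)}=y_{\pi(1)}$ for every 2-to-1 $\pi$, so $h^\pi\equiv 0$. After rescaling to $(h+2)/4$, every minor is the constant $1/2$, and any decoding gives $\Pr_\pi[\pi(1)=d_{h^\pi}]=1/n$. Nor can the commutation be derived from $L_2(\rho^{2n})$ information about $Tf$. For $y\sim\rho^n$ the point $\pi^{-1}(y)$ is constant on the pairs of $\pi$, an event of exponentially small $\rho^{2n}$-measure. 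A correct argument must use genuinely multislice structure. One example is that for uniform 2-to-1 $\pi$ and uniform $y\in\mathcal S\tuple\#$ the point $\pi^{-1}(y)$ is uniform on $\mathcal S 2\tuple\#$, so that $\mathbb E_\pi\langle F^\pi,G^\pi\rangle=\langle F,G\rangle$; this is the kind of fact that rules out such systematic cancellation. That structure is precisely what your sketch leaves to be imported.

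You also misplace $\beta$: $T$ is the fixed lifting operator attached to $\tuple\#$ and has no noise parameter. The requirement $\beta n\in\mathbb N$ suggests that $\beta$ counts coordinates inside the decoding procedure, for example coordinates re-randomised by a noise step or left free by a random restriction. That would also explain why $C$ depends on $\beta$. In your sketch $\beta$ enters only through unspecified thresholds in Step~1 and is never used in Step~3. Finally, ``derandomise by averaging'' in Step~4 does not produce a single assignment $g\mapsto d_g$ that works for all $f$ simultaneously; it only gives the bound in expectation for each $f$. That suffices for the soundness proof, where only the expected value of the labelling matters, but it is weaker than the lemma as stated.
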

Roughly speaking, the conclusion says that, if one assigns $j$ to the vertex
$u$, and $d_{f^{\pi_{uv}}}$ to each neighbour $v$ of $u$, then a constant
fraction of the edges incident to $u$ will be satisfied.

We provide a concise summary of the soundness proof below.

\begin{claim}[Soundness]
    If $H \to \NAE[d]{r}$, then there exists a labelling of $\Phi$ (Label Cover
    instance) with a value $> \varepsilon$.
\end{claim}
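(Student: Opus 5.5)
Let $c\colon V(H)\to[d]$ be a $\NAE[d]{r}$-colouring. For each $v\in V$ and colour $a\in[d]$ we take the indicator $f_v^a\colon \slice{\tuple\#}\to\{0,1\}$ of colour $a$ on the cloud of $v$. Because $\sim$-equivalent vertices are identified, these indicators are consistent across edges. Concretely, for $u\in U$ we set $f_u^a\colon\slice{2\tuple\#}\to\{0,1\}$, $f_u^a(\bar x)=f_v^a(\bar y)$ whenever $\bar x=\pi_{uv}^{-1}(\bar y)$. This is well defined, and $f_u^a\xrightarrow{\pi_{uv}} f_v^a$, so that $f_v^a=(f_u^a)^{\pi_{uv}}$. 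Since no hyperedge is monochromatic, for every $u$ and every $a$ we get
\[
\E_{(\bar x^{(1)},\dots,\bar x^{(r)})\sim\mu^\Omega_n}\Big[\prod_{i=1}^r f_u^a(\bar x^{(i)})\Big]=0 .
\]
Here we use that every tuple of $\mu^\Omega_n$ with $u$'s neighbours arises as a hyperedge; by richness, the tuples $\pi_{uv_i}^{-1}(\bar y^{(i)})$ range over the whole of $\mu^\Omega_n$, so the expectation is a genuine average of hyperedge indicators. We then apply \cref{th:invariance} with all $\tuple\#^{(i)}=2\tuple\#$. This gives $\E_{\Omega^{2n}}[\prod_i Tf_u^a(\bar x^{(i)})]=o(1)$, and $Tf_u^a$ takes values in $[0,1]$.

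Next we choose a good colour. The colours partition each multislice, so $\sum_a\E[f_u^a]=1$, and some $a_u$ has $\E[f_u^{a_u}]\ge 1/d$. The lifting preserves expectation, $\E_{\rho^{2n}}[Tf]=\E[f]$, and all marginals $\Omega_i$ equal $\rho$ by symmetry. Hence the hypothesis of \cref{th:mossel} holds with $\delta=1/d$ and $g=Tf_u^{a_u}$. The support $R$ has size $\ge2$, since a singleton symmetric relation is a constant tuple and would not map to $\NAE[d]{r}$. \cref{th:mossel} therefore yields $\tau,k$, independent of $u$ and $n$, and a coordinate $j_u\in[2n]$ with $\lowInf[\rho]{Tf_u^{a_u},j_u,k}>\tau$.

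Finally we decode. For every $v$ and colour $a$, \cref{lem:finish} (with $\beta$ fixed small and valid) assigns a coordinate $d_{f_v^a}$. We label $u$ by $j_u$, and we label $v$ by $d_{f_v^{a}}$ for a colour $a$ chosen uniformly at random from $[d]$. For each $u$, \cref{lem:finish} applied to $f=f_u^{a_u}$ gives $\Pr_\pi[\pi(j_u)=d_{f^\pi}]\ge C$. By richness, $\pi_{uv}$ for a random neighbour $v$ is a uniformly random 2-to-1 map, and $f^{\pi_{uv}}=f_v^{a_u}$. Consequently an edge at $u$ is satisfied with probability at least $C/d$ over the random choice of colours. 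Averaging over $u$ (the graph is bi-regular), some deterministic labelling has value at least $C/d$. Taking $\varepsilon<C(\alpha,\beta,\tau,k)/d$ gives the claim.

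The main obstacle is the bookkeeping in the first step: well-definedness of $f_u^a$ through the $\sim$-identification, and checking that the uniform distribution on $\mu^\Omega_n$ truly corresponds to hyperedges. This includes the integrality tweak of $\Omega$, under which the marginals become only approximately equal. I would handle the tweak by keeping the support equal to $R$ and absorbing the discrepancy into the $o(1)$ terms and the balance constant $\alpha$.
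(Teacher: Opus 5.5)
Your argument is correct and follows the paper's pipeline: fold the colouring into functions on $\slice{2\tuple\#}$, observe that every tuple of $\mu^\Omega_n$ is a hyperedge, apply \cref{th:invariance}, then \cref{th:mossel}, then decode with \cref{lem:finish}. The one real difference is how the colour class is chosen.

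The paper fixes a single colour class of global density at least $1/d$. It transfers this density to the $U$-side via bi-regularity and richness to get $\E_u\E[f_u]\ge 1/d$. It then averages to extract a set $\mathbf{Good}$ of measure at least $\delta=1/(2d)$ on which $\E[f_u]\ge\delta$. The $V$-labels $d_{f_v}$ are then deterministic, and the value is at least $\delta\cdot C$.

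You instead choose $a_u$ locally, using that the colour classes partition $\slice{2\tuple\#}$. So every $u$ qualifies with $\delta=1/d$, and no density transfer or averaging over $U$ is needed. The price is that $a_u$ varies with $u$, so you must decode each $v$ with a uniformly random colour. This loses a factor $1/d$, the same order as the paper's loss, before you derandomise by the usual expectation argument. It is the same device the paper uses in its appendix for the random index $i\in[r]$.

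You also make explicit two points the paper leaves tacit. First, every $\bar x\in\slice{2\tuple\#}$ is a pullback along some $\pi_{uv}$, because all counts are even and the instance is rich. Second, $|R|\ge 2$, which \cref{th:mossel} requires.

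One caveat: your last paragraph understates the integrality tweak. After perturbing $\Omega$, the marginals $\Omega_i$ need not coincide, so there is no single $\tuple\#$ and the reduction itself changes. The paper's appendix uses $r$ cloud families $\slice{\tuple\#^{(i)}}$ per $v$. It then combines the lifted functions $T_if_{u,i}$ into one function $g_u$ before invoking \cref{th:mossel}. Your per-$u$ colour choice carries over to that setting along the same lines: take $a_u$ maximising $\E_i\E[f^{a}_{u,i}]$, and decode $v$ by a random pair $(a,i)$. But this is a structural modification, not just an $o(1)$ correction.
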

\begin{proof}
Since $H \to \NAE[d]{r}$, there is a subset of $V(H)$ of fractional size at
least $1/d$ that does not induce any hyperedge in $H$; let $f\colon V(H) \to
\{0,1\}$ be its indicator function. In other words $\E[f] \ge 1/d$, and for
every hyperedge $(\ith{y}{1}, \dots, \ith{y}{r})$ we have $f(\ith{y}{1})\cdot
\ldots \cdot f(\ith{y}{r}) = 0$.

For each $v \in V$, let $f_{v} \colon \ms{n}{\tuple\#} \to \{0,1\}$ be $f$
restricted to the domain $\left\{(v, \bar y) \mid \bar y \in
\ms{n}{\tuple\#}\right\}$. For each $u \in U$, the equality constraints (the
equivalence $\sim$) give rise to a function $f_u \colon \ms{2n}{2\tuple\#} \to
\{0,1\}$ such that $f_{u} \xrightarrow{\pi_{uv}} f_{v}$ for any neighbour $v$ of
$u$. By the regularity of $(U \cup V, E)$ and richness, we have
\[
\E_{u \in U}\E[f_{u}] \ge 1/d.
\]
Denote $\delta \coloneqq 1/(2d)$. By an averaging argument, we obtain a set
$\mathbf{Good} \subseteq U$ of fractional size at least $\delta$ such that
$\E[f_{u}] \ge \delta$ for each $u \in \mathbf{Good}$.

Fix $u \in \mathbf{Good}$. The equality constraints ensure that, sampling $(\bar
x^{(1)}, \dots, \bar x^{(r)}) \in \mu^\Omega_n$ uniformly at random, we have
\[
\E\left[ \prod_{i=1}^r f_{u}(\bar x^{(i)}) \right] = 0.
\]
We apply \cref{th:invariance} (the invariance principle) to $\{f_{u}\}_{i=1}^r$,
and obtain
\[
\E_{(\bar x^{(1)}, \dots, \bar x^{(r)}) \sim \Omega^{2n}}\left[ \prod_{i=1}^r T f_{u}(\bar x^{(i)}) \right] = o(1)
\]
where $T$ is the lifting operator for $2\tuple\#$. By the Markov properties of
$T$, we have
\[
\E_{\bar x \sim \rho^{2n}}[ T f_u(\bar x) ] = \E_{\ms{2n}{2\tuple{\#}}}[f_{u}] \ge \delta.
\]
Applying the noise inequality \cref{th:mossel} to $T f_u$, we obtain $\tau, k >
0$ such that
\[
\max_{j \in [2n]} \lowInf[\rho]{T f_u,j,k} > \tau.
\]
Assign to $u$ any label $j$ that maximises the expression above; to the vertices
$u' \in U \setminus \textbf{Good}$ assign arbitrary labels.

We now want to appeal to \cref{lem:finish} in order to label the vertices in
$V$. Fix any sufficiently small $\tuple\#$-valid $\beta$. To each $v \in V$
assign the label $d_{f_v}$ obtained from \cref{lem:finish}.

We shall argue that the value of this labelling is greater than $\varepsilon$,
which would conclude the proof. Fix $u \in \mathbf{Good}$ and its label $j$. For
a random neighbour $v$ of $u$, by the richness of the instance, the constraint
$\pi_{uv}$ is distributed uniformly among all 2-to-1 maps, and $f_{u}
\xrightarrow{\pi_{uv}} f_{v}$. By \cref{lem:finish}, we get that $\pi_{uv}(j)$
is the label assigned to $v$ with probability at least $C(\alpha,\beta,\tau,k)$.
Thus, by the biregularity and the richness of $\Phi$, the probability that our
labelling satisfies a uniformly random edge $(u, v)$ is at least
\[
\frac{|\textbf{Good}|}{|U|} \cdot C(\alpha,\beta,\tau,k) \ge \delta \cdot C(\alpha,\beta,\tau,k)
\]
so it suffices to pick $\varepsilon < \delta \cdot C(\alpha,\beta,\tau,k)$.

The dependency of various parameters used in the soundness proof on one another
is a delicate matter, so we shall elaborate. First of all, note that $n$ can be
assumed to be sufficiently large as long as $\varepsilon > 0$ is allowed to be
chosen arbitrarily small. Recall that we treat $A, r, R$, and $d$ as constants.
Therefore, $\alpha$ and $\delta$ are also constants. Then $\tau$ and $k$ are
obtained from \cref{th:mossel}. We find a suitable sufficiently small $\beta >
0$, and pick $\varepsilon > 0$ small enough so that $\varepsilon \le \delta
\cdot C(\alpha,\beta,\tau,k)$. Finally, for this $\varepsilon$ the Rich 2-to-1
Conjecture gives $n$. It remains to note that we must ensure that $\beta n \in
\mathbb N$: we can change $\beta$ slightly to arrange that without breaking
other dependencies. To see how, after fixing the upper-bound $B$ on $\beta$,
take $\varepsilon > 0$ extremely tiny, which forces $n$ to be extremely large.
Now note that there exists $\ell \in \mathbb N$ such that $B/2 \le \ell/n \le
B$. It suffices to take $\beta = \ell/n$.
\end{proof}

\section{Conclusions}\label{sec:conclusions}
In the present paper, we initiated the study of Promise MMSNP problems. For the
most fundamental case, forbidding monochromatic cliques, we obtained a
complexity dichotomy under the Rich 2-to-1 Conjecture. This class of problems
includes the notorious Approximate Graph Colouring, which are not yet known to
admit an unconditional hardness proof. Nevertheless, some special cases of our
Promise MMSNPs do follow from existing unconditional hardness results: for
example, if the forbidden cliques only use 2 colours, the hardness can be
derived from the dichotomy for symmetric Boolean PCSPs
by~\cite{ficak2019symmetric}.

The hardness of the Approximate Graph Colouring was also obtained
by~\cite{GS2020agc} from the weaker $d$-to-1 Conjecture of
Khot~\cite{khot2002ugc}. Hence the most natural question is whether ``richness''
is necessary for the Promise MMSNPs studied in this paper. In other words,
\emph{do Promise MMSNPs over forbidden monochromatic cliques admit the dichotomy
under the 2-to-1 Conjecture?}

These Promise MMSNPs are tightly connected with certain \emph{balanced}
hypergraph colouring problems. More concretely, consider $\PCSP(R, \NAE{r})$
where $R$ constitutes the promise colouring. If the promise colouring is
\emph{perfectly balanced}, that is, each colour appears the same number of times
in every hyperedge, then a proper ($\NAE{r}$) colouring can be computed
efficiently (see \cite{GL2018balancedrainbow}). For multiple special cases, this
was proved to be the only tractable case, e.g.,~\cite{GL2018balancedrainbow,
GS2018rainbow}. Our result can also be viewed as another special case where a
perfect balance of the colours characterises tractability.

There are also some interesting connections with the \emph{topological approach
to PCSPs}. Some of the previous results on rainbow-free colouring of hypergraphs
are based on topological combinatorics~\cite{ABP20_rainbow}; while our approach
is Fourier-analytic, the notion of \emph{reconfigurability} is (perhaps
unsurprisingly, given the name) in some sense topological. Indeed, many of the
recent applications of topology in
(P)CSPs~\cite{KOWZ23_topology,FNOTW25_topology,MO25_topology,AFOTW25_topology}
have used \emph{hom-complexes}, a vast generalisation of the reconfiguration
graph.  (It is worth noting that our \Cref{th:blackbox} implies all the hardness
results that we know of that follow via the topological approach to PCSP,
although under the additional hypothesis of the Rich 2-to-1 conjecture.)
Such connections between topological and Fourier-analytic approaches
to PCSP have been noticed before, e.g.~a connection between the work
of~\cite{KOWZ23_topology} and Fourier analysis via winding numbers due to Mottet
(personal communication). Finding a principled reason why these connections
appear may be worth investigating.

Besides the forbidden vertex-coloured patterns, there has been some research on
the edge-coloured ones. It would be interesting to obtain similar results for
promise problems of forbidden monochromatic cliques of the latter type as well.
The forbidden edge-coloured patterns give rise to a larger than MMSNP fragment
of Existential Second-Order Logic called \emph{GMSNP}, see~\cite{obda}. Unlike
MMSNP, the computational dichotomy for GMSNP remains unresolved; however, it
lies within the scope of the Bodirsky-Pinsker conjecture as well as
MMSNP~\cite{asnp}. Recently, the containment problem for two GMSNP formulas was
shown to be decidable, and also dependent on the existence of a
recolouring~\cite{gmsnp_containment}. Deciding whether a given graph has an edge
2-colouring that avoids monochromatic triangles is a classic \NP-complete
problem~\cite{np_completeness}. It remains \NP-hard for an arbitrary (fixed)
clique size~\cite{burr}, and for an arbitrary (fixed) number of
colours~\cite{edge_colourings_np_complete}. Nevertheless, its promise extensions
present new obstacles, compared to the vertex-coloured cliques. In particular,
\begin{itemize}
    \item Characterising the existence of a recolouring must depend on the
        Ramsey numbers, which are very hard to compute.
    \item The associated finite-domain PCSPs are no longer symmetric: the full
        symmetric group acting with permutations on the edge set of a clique
        does not always map the edges of a subclique to the edges of a
        subclique.
\end{itemize}



\bibliography{bibliography}

\appendix

\section{Noise inequalities}
Correlation plays an important role in our analysis. For a distribution $\Omega$
over $A \times B$, it is denoted by $\rho(A, B; \Omega) \in [0, 1]$. We will not
define it formally here; the intuition behind $\rho(A, B; \Omega)$ is that it
measures the correlation between the first and second coordinates of $\Omega$.

The reason BKLM-connected relations are interesting is due to the following
correlation bound that they provide:
\begin{lemma}[Lemma 2.9 in~\cite{mossel}]\label{lem:connected-correlation}
    Let $\Omega$ be a distribution over $A \times B$ whose support forms a
    connected bipartite graph. Then $\rho(A, B; \Omega) < 1$.
\end{lemma}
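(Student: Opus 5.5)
The plan is to work directly with the standard definition of Mossel's correlation,
\[
\rho(A,B;\Omega)=\sup\Bigl\{\E_{(x,y)\sim\Omega}[f(x)g(y)] \;:\; \E[f]=\E[g]=0,\ \E[f^2]=\E[g^2]=1\Bigr\},
\]
where $f\colon A\to\mathbb R$ and $g\colon B\to\mathbb R$, and the moments of $f$ and $g$ are taken with respect to the marginals $\Omega_1$ and $\Omega_2$. The argument is a compactness-plus-equality-case-of-Cauchy--Schwarz argument.

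\textbf{Reduction to full-support marginals.} First I would discard the elements of $A$ and $B$ outside the supports of the marginals. This changes neither the correlation (functions are only evaluated $\Omega$-almost surely) nor the bipartite support graph, since its vertices are exactly these elements. After this, $L^2(\Omega_1)$ and $L^2(\Omega_2)$ are genuine inner product spaces on $\mathbb R^A$ and $\mathbb R^B$.

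\textbf{Degenerate case.} If $|A|=1$ or $|B|=1$, the only mean-zero function on that side is $0$. The feasible set is then empty, and $\rho=0$ by convention.

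\textbf{The supremum is attained.} Otherwise the feasible set is a product of two unit spheres in finite-dimensional spaces, so it is compact. The objective $\E[f(x)g(y)]$ is continuous, so the supremum is attained by some pair $(f,g)$. Suppose, for contradiction, that this value equals $1$.

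\textbf{Equality case.} By Cauchy--Schwarz in $L^2(\Omega)$,
\[
\E[f(x)g(y)]\le\sqrt{\E[f(x)^2]\,\E[g(y)^2]}=1,
\]
with equality only if $f(x)=\lambda g(y)$ holds $\Omega$-almost surely for some scalar $\lambda$, and the normalisation forces $\lambda=1$. Since $\Omega$ is a distribution on a finite set, "almost surely" means: for every edge $(a,b)$ of the support graph, $f(a)=g(b)$.

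\textbf{Connectedness gives the contradiction.} Walk along paths in the connected bipartite graph. Each edge transfers the common value, so $f$ and $g$ are both constant, equal to the same constant $c$, on all of $A\cup B$. Mean zero forces $c=0$, which contradicts $\E[f^2]=1$. Hence the attained supremum is strictly less than $1$.

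The only delicate point is making sure the supremum is a maximum, so that a strict inequality for each feasible pair yields a strict inequality for $\rho$. This is exactly why the argument needs finiteness of $A$ and $B$ and the reduction to the supports of the marginals; the rest is routine.
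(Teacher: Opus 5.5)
Your proof is correct and gives exactly the stated strict inequality. The paper itself proves nothing here; it only imports the lemma from Mossel, whose result is quantitative and is proved by a different route.

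The cited source shows $\rho\le 1-\alpha^2/2$, where $\alpha$ is the smallest atom of $\Omega$, using a path argument along these lines. For feasible $f,g$ one uses $\E[(f(x)-g(y))^2]=2-2\E[f(x)g(y)]$. After possibly negating both functions, pick $a$ with $f(a)\ge 1$ and $b$ with $g(b)\le 0$, and follow a simple path from $a$ to $b$ in the support graph. The path has at most $1/\alpha$ edges. The jumps $|f(a')-g(b')|$ along it sum to at least $1$, so Cauchy--Schwarz gives $\E[(f(x)-g(y))^2]\ge\alpha^2$.

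Your argument (attainment on a product of compact spheres, the equality case $f(a)=g(b)$ on every edge, propagation by connectedness) is shorter and needs no bookkeeping. However, it only yields $\rho(\Omega)<1$ for each fixed $\Omega$, with no explicit gap.

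The explicit bound depends on $\Omega$ only through $\alpha$, so it holds uniformly over all distributions with the given support and atoms at least $\alpha$. This uniformity is implicitly what the appendix needs: there $\Omega$ is perturbed with $n$, while the parameters $\tau,k$ of the noise inequality must not drift with $n$.

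If you want that uniformity from your method, run the compactness argument jointly over triples $(\Omega,f,g)$. The set of such $\Omega$ is compact. The marginals stay at least $\alpha$, so feasible pairs satisfy $|f|,|g|\le\alpha^{-1/2}$. A limit point would then be a feasible pair of value $1$ for a distribution with the same connected support, contradicting your fixed-$\Omega$ result.
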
 

For a distribution $\Omega$ over $A_1 \times \cdots \times A_r$, its correlation
vector $\tuple \rho \in [0,1]^{r-1}$ is defined as
\[
\rho_i = \rho(A_1 \times \cdots \times A_i, A_{i+1} \times \cdots \times A_r; \Omega)
\]
where $\Omega$ is treated as a binary distribution over $(A_1 \times \cdots
\times A_i) \times (A_{i+1} \times \cdots \times A_r)$.

Finally, we need Gaussian stability measures. Let $\mathcal N$ be the
distribution function of a standard Gaussian. We define for $\rho, \mu, \nu \in
[0,1]$
\[
\Gamma_{\rho}(\mu, \nu) = \Pr\left[ X \le \mathcal N^{-1}(\mu), Y \ge \mathcal N^{-1}(1-\nu) \right]
\]
where $(X, Y)$ are $\rho$-correlated\footnote{The exact definition of
$\rho$-correlated variables is not important here; we refer the curious reader
to~\cite{mossel}.} Gaussian variables. Note that $\Gamma_\rho$ is an increasing
function in $\mu$ and $\nu$. The only fact needed about $\Gamma_\rho$ is the
following.
\begin{observation}[See~\cite{dinur2006approxcol}]\label{obs:positive-gamma}
    If $\mu > 0$ and $\rho < 1$, then $\Gamma_\rho(\mu, \mu) > 0$.
\end{observation}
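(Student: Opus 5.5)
The plan is to use the fact that a non-degenerate bivariate Gaussian has a strictly positive density on all of $\mathbb R^2$, and that the event defining $\Gamma_\rho(\mu,\mu)$ is a quadrant with nonempty interior. Concretely, I would write the $\rho$-correlated pair as $X$ and $Y = \rho X + \sqrt{1-\rho^2}\, Z$, where $X, Z$ are independent standard Gaussians. Since $\rho \in [0,1)$, we have $\sqrt{1-\rho^2} > 0$, so the covariance matrix $\begin{pmatrix} 1 & \rho \\ \rho & 1\end{pmatrix}$ is positive definite. Hence $(X,Y)$ has an everywhere-positive density on $\mathbb R^2$.

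Next I would set $t \coloneqq \mathcal N^{-1}(\mu)$ and $s \coloneqq \mathcal N^{-1}(1-\mu)$, so that $\Gamma_\rho(\mu,\mu) = \Pr[X \le t,\ Y \ge s]$. The degenerate value $\mu = 1$ needs a separate, trivial check: there $t = +\infty$ and $s = -\infty$ (with the natural conventions), the event is almost sure, and the probability is $1 > 0$. For $0 < \mu < 1$ both thresholds are finite real numbers. The set $Q = (-\infty, t] \times [s, \infty)$ then contains an open box, for instance $(t-2, t-1) \times (s+1, s+2)$. Integrating the strictly positive joint density over a set of positive Lebesgue measure gives a strictly positive probability.

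If one prefers to avoid the joint density, an equivalent route is to condition on $X$. For every fixed $x \le t$,
\[
\Pr\bigl[Y \ge s \mid X = x\bigr] = 1 - \mathcal N\!\left(\frac{s - \rho x}{\sqrt{1-\rho^2}}\right) > 0,
\]
since the Gaussian tail is positive at every finite point. Integrating this over $\{x \le t\}$, which has probability $\mu > 0$, gives a strictly positive result.

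There is no real obstacle here. The only points needing care are that $\rho < 1$ is exactly what makes the division by $\sqrt{1-\rho^2}$ (equivalently, non-degeneracy of the covariance) legitimate, and that the boundary case $\mu = 1$, where the inverse CDF is infinite, is treated separately as above.
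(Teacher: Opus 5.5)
Your proof is correct. The paper does not prove this observation; it only cites \cite{dinur2006approxcol}, so there is no argument of its own to compare yours against. Your route supplies the omitted justification in the standard way, by non-degeneracy of the bivariate Gaussian or equivalently by conditioning on $X$. You correctly identify $\rho<1$ as exactly the hypothesis that makes $\sqrt{1-\rho^2}>0$, and you handle the boundary case $\mu=1$ separately.

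Your argument never uses $\mu=\nu$. Taking $s=\mathcal N^{-1}(1-\nu)$ instead, it shows directly that $\Gamma_\rho(\mu,\nu)>0$ for all $\mu,\nu>0$. That two-argument form is what the inductive step of \cref{obs:recurs-positive-gamma} actually needs. With it, the paper's appeal to monotonicity of $\Gamma_\rho$ in the base case becomes unnecessary.
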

The definition of $\Gamma$ is extended to more than 2 variables in a recursive
way. Namely, given $\tuple \rho \in [0,1]^{r-1}$ and $\tuple \mu \in [0,1]^r$
for $r \ge 3$ we define
\[
\Gamma_{\tuple \rho}(\mu_1, \dots, \mu_r) \coloneqq \Gamma_{\rho_1}\left(\mu_1, \Gamma_{\rho_2, \dots, \rho_{r-1}}(\mu_2, \dots, \mu_r)\right).
\]
\begin{observation}\label{obs:recurs-positive-gamma}
    \sloppy If $\mu_1, \dots, \mu_r > 0$ and $\rho_1, \dots, \rho_{r-1} < 1$,
    then $\Gamma_{\tuple \rho}(\mu_1, \dots, \mu_r) > 0$.
\end{observation}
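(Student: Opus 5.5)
Induction on $r$, using the recursive definition of $\Gamma_{\tuple\rho}$ together with \cref{obs:positive-gamma}.

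\emph{Base case $r = 2$.} Here $\tuple\rho = (\rho_1)$, and $\Gamma_{\tuple\rho}(\mu_1,\mu_2)=\Gamma_{\rho_1}(\mu_1,\mu_2)$. We cannot invoke \cref{obs:positive-gamma} directly, since it only covers equal arguments. Instead, put $\mu \coloneqq \min\{\mu_1,\mu_2\}>0$. Since $\Gamma_{\rho_1}$ is increasing in each argument, $\Gamma_{\rho_1}(\mu_1,\mu_2)\ge\Gamma_{\rho_1}(\mu,\mu)$. Because $\rho_1<1$, \cref{obs:positive-gamma} makes the right-hand side strictly positive.

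\emph{Inductive step $r\ge 3$.} By definition,
\[
\Gamma_{\tuple\rho}(\mu_1,\dots,\mu_r)=\Gamma_{\rho_1}\bigl(\mu_1,\nu\bigr),\qquad \nu\coloneqq\Gamma_{\rho_2,\dots,\rho_{r-1}}(\mu_2,\dots,\mu_r).
\]
The vector $(\rho_2,\dots,\rho_{r-1})$ has length $r-2$, all entries $<1$, and the arguments $\mu_2,\dots,\mu_r$ number $r-1$ and are all positive. The induction hypothesis at arity $r-1$ therefore gives $\nu>0$. We also have $\nu\le 1$, since $\Gamma$ is a probability, so $\nu$ is an admissible argument. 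The pair $(\mu_1,\nu)$ then has both entries positive, and $\rho_1<1$, so the base-case argument (take the minimum, then use monotonicity and \cref{obs:positive-gamma}) yields $\Gamma_{\rho_1}(\mu_1,\nu)>0$.

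There is no substantial obstacle. The one point needing care is the base case, where the arguments of $\Gamma_{\rho}$ may differ; monotonicity reduces it to the diagonal case covered by \cref{obs:positive-gamma}. We should also confirm that the recursion is anchored at $r=2$ with the two-variable $\Gamma_\rho$, which is how the definition reads, and that $\Gamma$ always takes values in $[0,1]$, so the inner value is a legal argument of the outer function.
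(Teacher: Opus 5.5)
Your proof is correct and takes essentially the same route as the paper: induction on $r$, with the base case reduced to the diagonal case of \cref{obs:positive-gamma} by setting $\mu=\min\{\mu_1,\mu_2\}$ and using monotonicity, and the inductive step obtained by feeding the positive inner value $\Gamma_{\rho_2,\dots,\rho_{r-1}}(\mu_2,\dots,\mu_r)$ into the two-variable case. You also note explicitly that this inner value lies in $[0,1]$ and so is a valid argument, which the paper leaves implicit.
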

\begin{proof}
    We prove it by induction. For the base case $r = 2$, recall from
    \Cref{obs:positive-gamma} that
    \[
    \Gamma_\rho(\mu_1, \mu_2) \ge \Gamma_\rho(\mu, \mu) > 0
    \]
    where $\mu \coloneqq \min\{\mu_1, \mu_2\} > 0$. For the inductive step, we
    have
    \[
    \Gamma_{\tuple \rho}(\mu_1, \dots, \mu_r) \coloneqq \Gamma_{\rho_1}\left(\mu_1, \Gamma_{\rho_2, \dots, \rho_{r-1}}(\mu_2, \dots, \mu_r)\right) > 0
    \]
    because $\Gamma_{\rho_2, \dots, \rho_{r-1}}(\mu_2, \dots, \mu_r) > 0$ by the inductive assumption.
\end{proof}

\Cref{th:mossel} follows from the powerful noise inequalities discovered by
Mossel~\cite{mossel}, which we state here for completeness.
 
\begin{theorem}[Proposition 6.4 in~\cite{mossel}]\label{th:original-mossel}
    Let $\Omega$ be a distribution over $\prod_{i=1}^r A_i$ such that
    \[
    \alpha \coloneqq \min_{\tuple a \in \mathsf{supp}(\mathbf P)}~\Omega(\tuple a) \le \frac{1}{2}
    \]
    and
    \[
    \rho \coloneqq \max_j \rho\left( A_j, \prod_{i \neq j} A_i; \Omega \right) < 1.
    \]
    For every $\varepsilon > 0$ there exists $\tau > 0$ such that if $f_i : A_i^n \to [0,1]$ for $i \in [r]$ satisfy
    \[
    \lowInf[\Omega_i]{f_i, j, \log(1/\tau)/\log(1/\alpha)} \le \tau
    \]
    for all $i \in [r]$ and $j \in [n]$, then
    \[
    \Gamma_{\tuple{\rho}} \left( \E_{\Omega_1^n}[f_1], \dots, \E_{\Omega_r^n}[f_r] \right) - \varepsilon \le \E_{(\bar x^{(1)}, \dots, \bar x^{(r)}) \sim \Omega^n}\left[ \prod_{i=1}^r f_i(\overline x^{(i)}) \right]
    \]
    where $\Omega_i$ denotes the marginal distribution of $\Omega$ to $i$.
\end{theorem}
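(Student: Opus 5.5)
We would prove this by induction on $r$, following Mossel's original argument. The base case $r=2$ is a two-function Gaussian noise-stability bound. The inductive step merges coordinates $2,\dots,r$ into one "super-coordinate" and applies the base case again. Monotonicity of $\Gamma$ in each argument controls the accumulated errors.

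\textbf{Base case $r=2$.} Write $\rho_1=\rho(A_1,A_2;\Omega)<1$ and let $f_1,f_2$ have all low-degree influences at most $\tau$.
\begin{enumerate}
\item First smooth: replace $f_i$ by $T_{1-\gamma}f_i$ for a small $\gamma>0$. Since $\rho_1<1$, the components of degree $>k$ contribute at most $\rho_1^{k}$ to the correlation $\E_{\Omega^n}[f_1 f_2]$. Hence
\[
\bigl\lvert \E[f_1f_2]-\E[T_{1-\gamma}f_1\,T_{1-\gamma}f_2]\bigr\rvert\le \varepsilon/3
\]
once $\gamma$ is chosen depending on $\rho_1,\varepsilon$. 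Smoothing leaves the means unchanged and keeps the values in $[0,1]$.
\item The smoothed functions have small tails above degree $k\approx\log(1/\tau)/\log(1/\alpha)$. This is exactly where the degree cut-off in the statement comes from.
\item Apply the invariance principle for bounded-degree multilinear polynomials with low influences. Hypercontractivity for the finite spaces $(A_i,\Omega_i)$ holds with constants depending only on $\alpha$. We replace the Fourier–Efron–Stein expansions by the same polynomials evaluated on $\rho_1$-correlated Gaussian ensembles. The expectation of the product, and the distance of each function from $[0,1]$, change by at most $\varepsilon/3$.
\item Round the Gaussian functions to $[0,1]$, at a further cost of $\varepsilon/3$. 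Borell's isoperimetric inequality then says that, among $[0,1]$-valued functions of given means on $\rho_1$-correlated Gaussian space, the product expectation is minimised by opposite half-spaces. That minimum is exactly $\Gamma_{\rho_1}(\mu_1,\mu_2)$.
\end{enumerate}

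\textbf{Inductive step.} Let $B=\prod_{i\ge2}A_i$ and define $g(\bar x^{(2)},\dots,\bar x^{(r)})=\prod_{i\ge2}f_i(\bar x^{(i)})$, a $[0,1]$-valued function on $B^n$.
\begin{enumerate}
\item The marginal of $\Omega$ on $B$ has minimum atom probability at least $\alpha$. Each correlation $\rho(A_j,\prod_{i\ne j,\,i\ge2}A_i)$ of this marginal is at most the corresponding correlation of $\Omega$, so it is still $<1$.
\item Apply the inductive hypothesis to $f_2,\dots,f_r$ under this marginal. This gives $\E[g]\ge\Gamma_{\rho_2,\dots,\rho_{r-1}}(\mu_2,\dots,\mu_r)-\varepsilon'$.
\item View $\Omega$ as a distribution on $A_1\times B$, which has correlation $\rho_1<1$. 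Apply the base case to the pair $(f_1,g)$.
\item Since $\Gamma_{\rho_1}$ is increasing and uniformly continuous in its second argument, $\varepsilon'$ can be chosen small relative to $\varepsilon$. The result is the recursive bound $\Gamma_{\bar\rho}(\mu_1,\dots,\mu_r)-\varepsilon$.
\end{enumerate}

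\textbf{Main obstacle.} The base case needs $g$ to have low low-degree influences, but we only control the influences of the individual $f_i$. Our plan is to show that for bounded functions on disjoint blocks of a product space,
\[
\opInf^{\le k}_j(g)\le (r-1)\sum_{i\ge2}\opInf^{\le k'}_j(f_i)
\]
up to a small high-degree error. We would do this after smoothing every $f_i$ by $T_{1-\gamma}$ first, so that the high-degree mass of the product is negligible; then only low-degree parts of the factors matter. This forces $\tau$ to shrink at each level of the induction. The degree parameter inflates by a factor depending on $r$, which we would absorb by defining the final $\tau$ through the recursion.
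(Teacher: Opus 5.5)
The paper contains no proof of this statement. It is imported verbatim from Mossel and used only as a black box to derive \cref{th:mossel}, so the relevant comparison is with Mossel's argument.

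Your overall plan matches that argument: smoothing, invariance with $\alpha$-dependent hypercontractivity, Borell's inequality for two functions, then induction by merging coordinates $2,\dots,r$ into one block $B$. Your workaround for the inductive step also goes through. However, the ``main obstacle'' comes only from proving the base case under the symmetric hypothesis that \emph{both} functions have small low-degree influences. As far as I recall, Mossel's two-function bound is stated under the weaker hypothesis $\max_j \min\{\mathbf{Inf}^{\le k}_j(f), \mathbf{Inf}^{\le k}_j(g)\} \le \tau$. Applied to $(f_1, g)$ with $g = \prod_{i \ge 2} f_i$ on $A_1 \times B$, this hypothesis holds because of $f_1$ alone. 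So nothing about the influences of $g$ is needed, and the induction closes immediately. The price is an extra reduction inside the base case: roughly, averaging out the boundedly many coordinates on which one of the two functions is influential (on which the other then is not).

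If you keep the symmetric base case, three points need to be made explicit.

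First, smoothing all $r$ functions at once requires an $r$-function version of your step~1, namely $\bigl\lvert \E[\prod_i f_i] - \E[\prod_i T_{1-\gamma} f_i] \bigr\rvert \le \varepsilon$. This is proved by telescoping and uses $\rho(A_j, \prod_{i \ne j} A_i; \Omega) < 1$ for \emph{every} $j$. That is where the full hypothesis on $\rho$ enters, and your sketch does not address it.

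Second, no low-degree product inequality or degree inflation is needed. For $\tilde f_i = T_{1-\gamma} f_i$ one has $\mathbf{Inf}_j(\tilde f_i) \le \mathbf{Inf}^{\le k}_j(f_i) + (1-\gamma)^{2k}$. For $[0,1]$-valued factors, the \emph{total} influence on $B^n$ satisfies $\mathbf{Inf}_j(\prod_{i\ge 2}\tilde f_i) \le (r-1)\sum_{i \ge 2}\mathbf{Inf}_j(\tilde f_i)$, because resampling the $j$-th coordinate of $B^n$ resamples $x^{(i)}_j \sim \Omega_i$ in each block. This bounds all low-degree influences of $\tilde g$ at once. Since $k = \log(1/\tau)/\log(1/\alpha) \to \infty$ as $\tau \to 0$, the term $(1-\gamma)^{2k}$ is negligible for fixed $\gamma$.

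Third, the inductive hypothesis on the marginal over $A_2 \times \cdots \times A_r$ yields that marginal's correlation vector. That vector is only dominated by $(\rho_2, \dots, \rho_{r-1})$, so to reach $\Gamma_{\tuple{\rho}}$ as stated you also need $\Gamma_\rho(\mu,\nu)$ to be non-increasing in $\rho$.

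Similarly, in your base case the Gaussian ensembles carry the singular values of the Markov operator, which are only $\le \rho_1$. An extra Gaussian noise step is therefore needed before Borell's inequality at correlation exactly $\rho_1$ applies.
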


We are in a position to prove \cref{th:mossel}. We state it again, for the
reader's convenience.
\mossel*
\begin{proof}[Proof of \cref{th:mossel}]
    We apply \cref{th:original-mossel} with $A_1 = \cdots = A_r = A$ and $f_1 =
    \cdots = f_r = g$.

    Clearly $\alpha$, the minimum probability of any atom in $\Omega$, satisfies
    $\alpha \le 1/2$. Since the support of $\Omega$ is BKLM-connected and
    symmetric, we deduce from \cref{lem:connected-correlation} that the
    correlation
    \[
    \rho\left( A_j, \prod_{i \neq j} A_i; \Omega \right) = \rho < 1
    \]
    for any $j \in [r]$. Hence the assumptions of \cref{th:original-mossel} are
    satisfied.

    We will argue, by contraposition, that it is possible to find $\varepsilon >
    0$ small enough so that
    \[
    \Gamma_{\tuple{\rho}} \left( \E_{\Omega_1}[g], \dots, \E_{\Omega_r}[g] \right) - \varepsilon > \E_{(\bar x^{(1)}, \dots, \bar x^{(r)}) \sim \Omega^n}\left[ \prod_{i=1}^r g(\overline x_i)\right ] = o(1).
    \]
    Then \cref{th:original-mossel} gives us $\tau > 0$, we define $k =
    \log(1/\tau)/\log(1/\alpha)$, and that would conclude the proof.

    Since the support of $\Omega$ is BKLM-connected, we know from
    \cref{lem:connected-correlation} that $\tuple{\rho} \in [0, 1)^{r-1}$.
    It suffices to note that $\Gamma_{\tuple{\rho}}(\E_1[g], \dots,
    \E_r[g]) > 0$ by \cref{obs:recurs-positive-gamma}.
\end{proof}

\section{Full proof of the hardness theorem}
\asymblackbox*

We will assume that each $a \in A$ can be extended to a tuple $(a, a_2, \dots,
a_r) \in R$; otherwise we can remove such $a$ from $A$ without affecting the
complexity of the PCSP in question. Throughout the proof, $r,d$, and $|A|$ are
treated as constants. We use the standard Big-O notation to describe asymptotic
bounds: for example, writing $f(n) = \mathcal O(g(n))$ means that $f(n)$ is
upper-bounded by a constant times $g(n)$ for all sufficiently large $n$.
Similarly, writing $f(n) = o(1)$ means that $f(n) \to 0$ as $n \to \infty$.

\paragraph*{Reduction}
We reduce from the Rich 2-to-1 Conjecture (see \cref{conj:rich2to1}) to our
PCSP.\@ Let $\Phi = (U \cup V, E, [2n], [n], \{\pi_e\}_{e \in E})$ be a Rich
2-to-1 Label Cover instance. The reduction will produce an instance of $\PCSP(R,
\NAE[d]{r})$, that is, an $r$-uniform hypergraph $H$. Let $\omega$ be the
uniform distribution over $R$. We can always find a distribution $\Omega$ over
$R$ which
\begin{enumerate}
    \item has the same support as $\omega$,
    \item satisfies $n \cdot \Omega(\tuple a) \in \mathbb Z$ for every $\tuple a
        \in R$, and
    \item is $O(1/n^2)$-close in KL-divergence\footnote{Kullback-Leibler (KL)
        divergence is an information-theoretic measure of difference between
    distribution; it measures how much an approximating distribution ($\Omega$
in our case) is different from the true distribution ($\omega$ in our case).} to
$\omega$.
\end{enumerate}
In particular, the product distributions $\Omega^{2n}$ and $\omega^{2n}$ are
$\mathcal O(1/n)$-close in KL-divergence, and by Pinsker's inequality the
statistical distance\footnote{The statistical distance is given by
    $\frac{1}{2}\sum_{\tuple s \in R^{2n}} | \Omega^{2n}(\tuple s) -
\omega^{2n}(\tuple s) |$. Pinsker's inequality (ignoring constant factors)
states that the statistical distance between two distributions is at most of the
order of the square root of the KL-divergence between the two distributions.}
between them is $\mathcal O(\sqrt{1/n}) = o(1)$. We denote $\alpha \coloneqq
\min_{\tuple a \in R} \omega(\tuple a)/2 > 0$; provided $n$ is sufficiently
large, we have that $\min_{\tuple a \in R} \Omega(\tuple a) \ge \alpha$.

For each $i \in [r]$, denote by $\Omega_i$ the marginal distribution of $\Omega$
to the $i$-th coordinate. Put $\ith{\#}{i} = (n \cdot \Omega_i(a))_a \in \mathbb
N^A$; note that each $\ith{\#}{i}$ is $\alpha$-balanced.

We replace each vertex $v \in V$ with copies of the multislices
$\ms{n}{\ith{\#}{i}}$ for each $i=1\ldots r$. We represent these as triples $(v, i,
\bar x)$ where $i \in [r]$ and $\bar x \in \ms{n}{\ith{\#}{i}}$. All these
triples form the set of vertices $V(H)$.

\paragraph*{Folding}
We define the relation $\sim$ on $V(H)$ by $(v, i, \bar x) \sim (v', i', \bar
x')$ if there is a common neighbour $u \in U$ of $v$ and $v'$, index $j \in
[r]$, and a tuple $\bar y \in \ms{2n}{2\ith{\#}{j}}$ such that $\bar y =
\pi^{-1}_{u,v}(\bar x) = \pi^{-1}_{u,v'}(\bar x')$. We extend $\sim$ to an
equivalence relation on $V(H)$, which, abusing notation, we also denote by
$\sim$. In the hypergraph $H$, we will identify all vertices in the same
equivalence class of $\sim$, which can be done in logarithmic
space~\cite{omer2008conn}. 

\paragraph*{Hyperedges}
Recall the definition of
\[
\mu^\Omega_n = \bigg\{ (\ith{x}{1}, \dots, \ith{x}{r}) \in \prod_{i=1}^r A^{2n} \,\,:\,\, \forall \bar a \in R. \left\lvert\left\{j : (x^{(1)}_j, \dots, x^{(r)}_j) = \bar a\right\}\right\rvert = 2n \cdot \Omega(\tuple a) \bigg\},
\]
and observe that $\mu^\Omega_n \subseteq \prod_{i=1}^r \ms{2n}{2\ith{\#}{i}}$.
The hyperedges of $H$ are described by the following sampling procedure. Sample
$u \in U$ and $(\ith{x}{1}, \dots, \ith{x}{r}) \in \mu^\Omega_n$, both uniformly
at random. For each $i \in [r]$, sample a neighbour $v_i$ of $u$ conditioned on
$\ith{x}{i} = \pi^{-1}_{uv_i}(\ith{y}{i})$ for some $\ith{y}{i} \in
\ms{n}{\ith{\#}{i}}$. Add the hyperedge over
\[
(v_1, 1, \ith{y}{1}), \dots, (v_r, r, \ith{y}{r}).
\]

\paragraph*{Reduction correctness}
\begin{claim}[Completeness]
    If $\Phi$ (Label Cover instance) has a solution, then $H \to R$.
\end{claim}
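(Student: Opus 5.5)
Let $s\colon U\to[2n]$ and $s'\colon V\to[n]$ be a labelling satisfying every constraint of $\Phi$, so $\pi_{uv}(s(u))=s'(v)$ for every edge $(u,v)\in E$. I will take the dictator assignment from the informal completeness claim and check that it respects the folding and sends every hyperedge into $R$. Concretely, I define $h$ on the unfolded vertex set by $h(v,i,\bar y)\coloneqq y_{s'(v)}$ for every $v\in V$, $i\in[r]$ and $\bar y\in\ms{n}{\ith{\#}{i}}$.

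\textbf{Step 1 (compatibility with the folding).} I first show that $h$ is constant on the classes of $\sim$. Since $\sim$ is the equivalence closure of the generating relation, it suffices to treat a single generating pair $(v,i,\bar y)\sim(v',i',\bar y')$, witnessed by a common neighbour $u$, an index $j$, and $\bar x\in\ms{2n}{2\ith{\#}{j}}$ with $\bar x=\pi_{uv}^{-1}(\bar y)=\pi_{uv'}^{-1}(\bar y')$. By the definition of the pullback and the perfectness of the labelling,
\[
x_{s(u)}=y_{\pi_{uv}(s(u))}=y_{s'(v)}.
\]
By the same argument, $x_{s(u)}=y'_{s'(v')}$. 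Hence $h(v,i,\bar y)=h(v',i',\bar y')$, so $h$ descends to a well-defined map $V(H)\to A$.

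\textbf{Step 2 (hyperedges).} Consider a hyperedge over $(v_1,1,\ith{y}{1}),\dots,(v_r,r,\ith{y}{r})$, produced from some $u$ and some $(\ith{x}{1},\dots,\ith{x}{r})\in\mu^\Omega_n$ with $\ith{x}{i}=\pi_{uv_i}^{-1}(\ith{y}{i})$. As in Step 1, $h(v_i,i,\ith{y}{i})=y^{(i)}_{s'(v_i)}=x^{(i)}_{s(u)}$. The image of the hyperedge is therefore the column $s(u)$ of the $r\times 2n$ matrix $(\ith{x}{1},\dots,\ith{x}{r})$.

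By the definition of $\mu^\Omega_n$, every column of this matrix lies in the support of $\Omega$. That support equals $R$, because the counts are prescribed only for $\bar a\in R$ and they sum to $2n$. Hence the image of every hyperedge lies in $R$, and $h$ is a homomorphism $H\to R$.

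\textbf{Main obstacle.} The only point that needs care is Step 1. The folding merges vertices across different indices $i$ and different multislices, so I must check that the dictator value does not depend on which representative of a class is chosen. The identity $x_{s(u)}=y_{s'(v)}$, which uses only the perfectness of the labelling, handles this uniformly for all generating pairs. The equivalence closure then needs no further argument, since constancy on generating pairs propagates along chains.
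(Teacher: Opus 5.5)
Your proof is correct and follows the paper's argument: the same dictator assignment $h(v,i,\bar y)=y_{s'(v)}$, with the image of each hyperedge identified as column $s(u)$ of a matrix in $\mu^\Omega_n$, which therefore lies in $R$. Your Step~1 spells out the check that $h$ is constant on $\sim$-classes, which the paper's proof only alludes to with the phrase ``thanks to the equivalence constraints''.
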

\begin{proof}
    Let $s : U \to [2n]$ and $s' : V \to [n]$ be labellings that satisfy all the
    constraints of $\Phi$. We define the assignment $h : V(H) \to A$ as $h(v, i,
    \bar y) = y_{s'(v)}$.
    
    We will argue that $h$ is a homomorphism from $H$ to $R$. Indeed, consider a
    hyperedge
    \[
    (v_1, 1, \bar y^{(1)}), \dots, (v_r, r, \bar y^{(r)})
    \]
    witnessed by a common neighbour $u$. It suffices to prove that
    \[
    \left(h(v_1, 1, \bar y^{(1)}), \dots, h(v_r, r, \bar y^{(r)})\right) = \left(y^{(1)}_{s'(v_1)}, \dots, y^{(r)}_{s'(v_r)}\right) \in R.
    \]
    Thanks to the equivalence constraints $\sim$, the right-hand side tuple is
    exactly
    \[
    \left(\pi_{uv_1}^{-1}(y^{(1)})_{s(u)}, \dots, \pi_{uv_r}^{-1}(y^{(r)})_{s(u)}\right)
    \]
    which belongs to $R$ by the definition of $\mu^\Omega_n$.
\end{proof}

\begin{claim}[Soundness]
    If $H \to \NAE[d]{r}$, then there exists a labelling of $\Phi$ (Label Cover
    instance) with a value $> \varepsilon$.
\end{claim}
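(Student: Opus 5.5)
We follow the summary proof of the main text, now adapting it to the asymmetric reduction in which each $v \in V$ carries $r$ multislices $\ms{n}{\ith{\#}{i}}$. Fix a $\NAE[d]{r}$-colouring of $H$. Some colour class $I \subseteq V(H)$ must have density at least $1/d$ with respect to the measure induced by the hyperedge-sampling procedure, i.e.\ averaged over a random hyperedge and a random position $i$. This class is hyperedge-free, so its indicator $f$ satisfies $\prod_i f(\text{$i$-th vertex}) = 0$ on every hyperedge. For each $v$ and $i$ we let $f_{v,i}$ be the restriction of $f$ to $\{(v,i,\bar y)\}$.

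Folding gives, for each $u\in U$ and $i\in[r]$, a function $f_{u,i}\colon \ms{2n}{2\ith{\#}{i}}\to\{0,1\}$ with $f_{u,i}\xrightarrow{\pi_{uv}} f_{v,i}$ for every neighbour $v$. Well-definedness is exactly the content of $\sim$. By biregularity and richness, a uniform $u$ together with $\bar x$ drawn from the $i$-th marginal of $\mu^\Omega_n$ (uniform on $\ms{2n}{2\ith{\#}{i}}$) induces, via a random compatible neighbour $v_i$, the uniform distribution on the $i$-th slice copies. Hence $\E_u \frac1r\sum_i \E[f_{u,i}]\ge 1/d$.

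We must control every coordinate, since \cref{th:mossel} requires $\E_{\Omega_i^n}[g]\ge\delta$ for each $i$. For this we use the $\bar\rho$-version of \cref{th:original-mossel} directly with distinct $f_i = Tf_{u,i}$; that theorem already allows different functions. We then need all $\E[f_{u,i}]\ge\delta$. To guarantee this, we first symmetrise: $R$ is symmetric, and we can choose $\Omega$ symmetric under coordinate permutations (round a symmetric distribution). Then all $\ith{\#}{i}$ coincide, and we may take $f$ to be the indicator of the same colour class for each $i$; the copies differ only by the index $i$. For a fixed $u$, averaging over $i$ then gives $\min_i\E[f_{u,i}]$ large only for a $\delta$-fraction of good $u$. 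If the classes of different $i$ differ, we instead apply Markov per $i$ and a union bound with $\delta=1/(2rd)$. We accept that for some $u$ certain $i$ may be small, and handle these by selecting the colour class maximising $\E_u\min_i$. This is the delicate point. We expect it to be resolved by symmetry of $\Omega$: the hyperedge set is then invariant under permuting the positions, so we may replace $f_{\cdot,i}$ by a single function defined on the common multislice.

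For good $u$, $\E_{\mu^\Omega_n}[\prod_i f_{u,i}(\ith{x}{i})]=0$ because every such tuple (with compatible neighbours) is a hyperedge. \Cref{th:invariance} then gives $\E_{\Omega^{2n}}[\prod_i T_if_{u,i}]=o(1)$. Replacing $\Omega$ by $\omega$ costs $o(1)$ by the Pinsker estimate, and the support is BKLM-connected with correlations $<1$ by \cref{lem:connected-correlation}. \Cref{obs:recurs-positive-gamma} and \cref{th:original-mossel} then yield $\tau,k$ and some $i$, $j\in[2n]$ with $\lowInf[\Omega_i]{T_if_{u,i},j,k}>\tau$; we label $u$ by $j$. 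We label each $v$ by $d_{f_{v,i}}$ from \cref{lem:finish}, choosing $i$ by a fixed rule, or randomly among the $r$ indices at the cost of a factor $1/r$. Then richness and \cref{lem:finish} give value at least $\delta C/r$. Choosing $\varepsilon$ below this, in the parameter order of the main-text proof ($\alpha,\delta$, then $\tau,k$, then $\beta$, then $\varepsilon$, then $n$, with $\beta$ adjusted to be valid), completes the argument. The main obstacle is the density-in-every-coordinate requirement. We resolve it through the symmetric choice of $\Omega$ and correspondingly identical marginals $\ith{\#}{i}$, so that $f_{u,i}$ essentially coincide and the single-function \cref{th:mossel} applies verbatim.
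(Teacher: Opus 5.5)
There is a genuine gap, and it is at the step you call delicate. Your resolution rests on ``we can choose $\Omega$ symmetric under coordinate permutations (round a symmetric distribution)''. This is false in general. Take $R=\NAE[2]{3}$, which is symmetric and reconfigurable. Every $S_3$-orbit of $R$ has size $3$, so an $S_3$-invariant $\Omega$ with $2n\,\Omega(\tuple a)\in\mathbb Z$ forces $2n=\sum_{\tuple a}2n\,\Omega(\tuple a)\equiv 0 \pmod 3$, i.e.\ $3\mid n$. But $n$ is handed to you by \cref{conj:rich2to1}, and you control nothing about it. This is why the appendix rounds $\Omega$ non-symmetrically and keeps $r$ multislices $\ms{n}{\ith{\#}{i}}$.

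Your fallbacks do not help. From $\E_u\frac1r\sum_i\E[f_{u,i}]\ge 1/d$, neither Markov per $i$ nor picking the colour class maximising $\E_u\min_i$ bounds every $\E[f_{u,i}]$ from below. In fact no such argument can work. The relation $\sim$ only relates tuples whose pullbacks coincide, hence tuples with the same count vector. So if $\ith{\#}{i_0}\neq\ith{\#}{1}$ for some $i_0$, colour $(v,i,\tuple y)$ by $1$ when $\ith{\#}{i}=\ith{\#}{1}$ and by $2$ otherwise. This is well defined on $\sim$-classes and makes every hyperedge non-monochromatic, since positions $1$ and $i_0$ get different colours.

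Your stated reason for collapsing $f_{\cdot,i}$ into one function is also not valid. Invariance of the hyperedge set under permuting positions says nothing, because a colour class need not be invariant. What actually does it, once all multislices agree, is the folding itself: $(v,i,\tuple x)\sim(v,i',\tuple x)$ via any neighbour $u$ of $v$.

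The paper's route is different. It keeps possibly distinct $\Omega_i$ and glues the $T_if_{u,i}$ into one function $g_u$. On input $\tuple x$, $g_u$ guesses the index $i$ such that $\tuple x$ was drawn from $\Omega_i^{2n}$ and outputs $T_if_{u,i}(\tuple x)$. The unrounded uniform $\omega$ has $\omega_1=\dots=\omega_r$, so $\E_{\omega_i^{2n}}[g_u]$ does not depend on $i$. Hence the averaged bound $\E_i\E[f_{u,i}]\ge\delta$ already gives $\E_{\Omega_i^{2n}}[g_u]\ge\delta/(4r)$ for every $i$, and the single-function \cref{th:mossel} applies. The remaining steps of your plan match the paper: invariance, Mossel, \cref{lem:finish}, a random choice among the $r$ labels for $v$, and the parameter order.

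Be aware, however, that your diagnosis is right. The guess needs distinct $\Omega_i^{2n}$ to be far apart, yet all of them are $o(1)$-close to $\omega_1^{2n}$ in total variation. The colouring above shows that soundness genuinely requires $\ith{\#}{1}=\dots=\ith{\#}{r}$.

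So the correct repair is not full symmetry. What you need is a rounding with $\Omega_1=\dots=\Omega_r$, integral counts, support $R$, and $\mathcal O(1/n)$-closeness to $\omega$. This is strictly weaker: for $\NAE[2]{3}$ it exists for every $n\ge 8$. With it, the single-multislice argument of the main text applies. However, its existence for an arbitrary symmetric BKLM-connected $R$ is a step you would still have to prove.
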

\begin{proof}
Since $H \to \NAE[d]{r}$, there is a subset of $V(H)$ of fractional size at
least $1/d$ that does not induce any hyperedge in $H$; let $f\colon V(H) \to
\{0,1\}$ be its indicator function. In other words,
\[
\E_{(v, i, \bar y) \in V(H)}\left[ f(v, i, \bar y) \right] \ge 1/d,
\]
and for every hyperedge $(\ith{y}{1}, \dots, \ith{y}{r})$ we have
\[
\prod_{i=1}^r f\left(\ith{y}{i}\right) = 0.
\]

For each $v \in V$ and $i \in [r]$, let $f_{v,i} \colon \ms{n}{\ith{\#}{i}} \to
\{0,1\}$ be $f$ restricted to the domain $\left\{(v, i, \bar y) \mid i \in [r],
\bar y \in \ms{n}{\ith{\#}{i}}\right\}$. Fix $u \in U$. For all $i \in [r]$ and
all $\bar x \in \ms{2n}{2\ith{\#}{i}}$, the projections of $\bar x$ along
$\pi_{uv}$ for neighbours $v$ of $u$ get the same value in $[d]$ by the
homomorphism, due to the equality constraints (equivalence $\sim$). Therefore,
we may think that this value is also given to the ``imaginary'' vertex $(u, i,
\bar x)$, and thus think of a function $f_{u,i} \colon \ms{2n}{2\ith{\#}{i}} \to
\{0,1\}$. In particular, for any $u \in U$ and $i \in [r]$, we have $f_{u,i}
\xrightarrow{\pi_{uv}} f_{v,i}$ for any neighbour $v$ of $u$.

By the regularity of $(U \cup V, E)$ and richness, these functions also satisfy
\[
\E_{u \in U}\E_{i \in [r]}\E[f_{u,i}] \ge 1/d.
\]
Denote $\delta \coloneqq 1/(2d)$. By an averaging argument, we obtain a set
$\mathbf{Good} \subseteq U$ of fractional size at least $\delta$ such that
\[
\forall{u \in \textbf{Good}}:\E_{i\in [r]}\E[f_{u,i}] \ge \delta.
\]

Fix $u \in \mathbf{Good}$. The equality constraints ensure that, sampling $(\bar
x^{(1)}, \dots, \bar x^{(r)}) \in \mu^\Omega_n$ uniformly at random, we have
\[
\E\left[ \prod_{i=1}^r f_{u,i}(\bar x^{(i)}) \right] = 0.
\]
We apply \cref{th:invariance} (the invariance principle) to
$\{f_{u,i}\}_{i=1}^r$, and obtain
\[
\E_{(\bar x^{(1)}, \dots, \bar x^{(r)}) \sim \Omega^{2n}}\left[ \prod_{i=1}^r T_i f_{u,i}(\bar x^{(i)}) \right] = o(1)
\]
where $T_i$ is the lifting operator for $\ith{\#}{i}$. We now want to define a
single function in order to appeal to \cref{th:mossel}.

Let $g_u : A^{2n} \to \{0,1\}$ be defined as follows: given $\tuple x \in
A^{2n}$, we guess $i \in [r]$ such that $\tuple x$ was sampled from
$\Omega_i^{2n}$, and output $T_i f_{u,i}(\tuple x)$. We claim that this guess is
successful except with probability $o(1)$: first, there is a constant $\Delta >
0$ such that the total variation distance between any $\Omega_i$ and
$\Omega_{i'}$ is at least $\Delta$ unless the two distributions are identical.
Thus, for sufficiently large $n$, the distributions $\Omega_i^{2n}$ and
$\Omega_{i'}^{2n}$ are either identical, or $1-o(1)$ far. For any $i \in [r]$,
we obtain that
\begin{align*}
&\E_{\bar x \sim \omega_i^{2n}}[ g_u(\bar x) ] \ge &\text{(statistical closeness of $\omega$ and $\Omega$)}\\
\ge &\E_{\bar x \sim \Omega_i^{2n}}[ g_u(\bar x) ] - o(1) \ge &\text{(guess of $g_u$)}\\
\ge &\E_{\bar x \sim \Omega_i^{2n}}[ T_i f_{u,i}(\bar x)] - 2o(1) = &\text{(Markov operator)}\\
= &\E_{\ms{2n}{2\ith{\#}{i}}}[f_{u,i}] - 2o(1). &
\end{align*}
Since $\E_i\E[f_{u,i}] \ge \delta$, there must exist $i$ such that
\[
\E_{\bar x \sim \omega_i^{2n}}[ g_u(\bar x) ] \ge \frac{\delta}{r} - 2o(1) \ge \frac{\delta}{2r}.
\]
Observe that $\omega_1 = \dots = \omega_r$ as $\omega$ is a uniform distribution
over a symmetric relation, thus the inequality above holds for all $i \in [r]$.
By the statistical closeness of $\omega$ and $\Omega$ again, we conclude that
\[
\E_{\bar x \sim \Omega_i^{2n}}[ g_u(\bar x) ] \ge \frac{\delta}{2r} - o(1) \ge \frac{\delta}{4r}
\]
for all $i \in [r]$. Applying the noise inequality \cref{th:mossel} to $g_u$, we
obtain $\tau, k > 0$ and $i \in [r]$ such that
\[
\max_{j \in [2n]} \lowInf[\Omega_i^{2n}]{g_u,j,k} > \tau.
\]
Under $\Omega_i^{2n}$ the distributions of $g_u$ and $T_i f_{u,i}$ are
statistically close, thus
\[
\max_{j \in [2n]} \lowInf[\Omega_i^{2n}]{T_i f_{u,i}, j, k} > \tau - o(1) > \tau/2.
\]
Assign to $u$ any label $j$ that maximises the expression above (for some $i$);
to the vertices $u' \in U \setminus \textbf{Good}$ assign arbitrary labels.

We now want to appeal to \cref{lem:finish} in order to label the vertices in
$V$. Fix any sufficiently small $\beta$ that is $\tuple\#^{(i)}$-valid for all
$i \in [r]$. To each $v \in V$ assign a label uniformly at random out of the $r$
coordinates $d_{f_{v,i}}$ obtained from \cref{lem:finish} applied to
$\ith{\#}{i}$ for $i=1..r$.

We shall argue that the value of this labelling is greater than $\varepsilon$,
which would conclude the proof. Fix $u \in \mathbf{Good}$ and its label $j \in
[2n]$, along with $i \in [r]$ that witnesses it. For a random neighbour $v$ of
$u$, by the richness of the instance, the constraint $\pi_{uv}$ is distributed
uniformly among all 2-to-1 maps, and $f_{u,i} \xrightarrow{\pi_{uv}} f_{v,i}$.
By \cref{lem:finish}, we get that $\pi_{uv}(j) = d_{f_{v,i}}$ with probability
at least $C(\alpha,\beta,\tau,k)$. In particular, $\pi_{uv}(j)$ is the label
assigned to $v$ with probability at least $C(\alpha,\beta,\tau,k)/r$. Thus, by
the biregularity of $\Phi$, the probability that our labelling satisfies a
uniformly random edge $(u,v)$ is at least
\[
\frac{|\textbf{Good}|}{|U|} \cdot \frac{C(\alpha,\beta,\tau,k)}{r} \ge \delta \cdot \frac{C(\alpha,\beta,\tau,k)}{r},
\]
so it suffices to pick $\varepsilon < C(\alpha,\beta,\tau,k)/r$.

The dependency of various parameters used in the soundness proof on one another
is a delicate matter, so we shall elaborate. First of all, note that $n$ can be
assumed to be sufficiently large as long as $\varepsilon > 0$ is allowed to be
chosen arbitrarily small. Recall that we treat $A, r, R$, and $d$ as constants.
Therefore, $\alpha$ and $\delta$ are also constants. Then $\tau$ and $k$ are
obtained from \cref{th:mossel}. We find a suitable sufficiently small $\beta >
0$, and pick $\varepsilon > 0$ small enough so that $\varepsilon \le \delta
\cdot C(\alpha,\beta,\tau,k)/r$, that is, the expected fraction of satisfied
constraints. Finally, for this $\varepsilon$ the Rich 2-to-1 Conjecture gives
$n$. It remains to note that we must ensure that $\beta n \in \mathbb N$: we can
change $\beta$ slightly to arrange that without breaking other dependencies. To
see how, after fixing the upper-bound $B$ on $\beta$, take $\varepsilon > 0$
extremely tiny, which forces $n$ to be extremely large. Now note that there
exists $\ell \in \mathbb N$ such that $B/2 \le \ell/n \le B$. It suffices to
take $\beta = \ell/n$.
\end{proof}

\end{document}